\documentclass[12pt,reqno,a4paper]{amsart}
\usepackage[left=3cm, top=3.5cm, bottom=3.5cm, right=3cm]{geometry}
\usepackage{mathrsfs}
\usepackage{amsmath,amsxtra,amsfonts,amssymb, amsthm, amscd, epsfig}
\usepackage[dvipsnames,usenames]{color}
\usepackage{amssymb,amsmath,amsfonts,epsfig}
\usepackage{graphicx}

\numberwithin{equation}{section}

\newtheorem{thm}{Theorem}[section]
\newtheorem{Mthm}{Main Theorem}
\newtheorem{prop}[thm]{Proposition}
\newtheorem{cor}[thm]{Corollary}
\newenvironment{prfof}[1]{\noindent {\it Proof of #1} \ }{\hfill $\Box$}

\newtheorem{lem}[thm]{Lemma}

\newtheorem{rem}[thm]{Remark}

\newcommand{\eqa}{\begin{eqnarray}}
\newcommand{\eeqa}{\end{eqnarray}}
\newcommand{\beq}{\begin{equation}}
\newcommand{\eeq}{\end{equation}}
\newcommand{\nn}{\nonumber}
\newcommand{\p}{\partial}
\def \la {\langle}
\def \ra{\rangle}

 \def\res{\mathop{\text{\rm Res}}}

\def \dsum{\displaystyle\sum}

\allowdisplaybreaks[1]

\newcommand\Z{\mathbb{Z}}
\newcommand\C{\mathbb{C}}
\newcommand\mH{\mathcal{H}}
\newcommand\vp{\varphi}
\newcommand\ep{\epsilon}
\newcommand\om{\omega}
\newcommand\al{\alpha}
\newcommand\dt{\delta}
\newcommand\ld{\lambda}

\newcommand\sg{\sigma}
\def \la {\langle} \def \ra{\rangle}

\begin{document}

\title[]
{Infinite-dimensional Frobenius Manifolds Underlying the Universal Whitham Hierarchy }

\author[]{ Shilin Ma, Chao-Zhong Wu, Dafeng Zuo}

\address[]{Shilin Ma,  School of Mathematical Science,University of Science and Technology of China,
Hefei 230026, P.R.China  }
\email{mashilin@mail.ustc.edu.cn}

\address[]{Chao-Zhong Wu,  School of Mathematics,  Sun
Yat-Sen University,
Guangzhou 510275, P.R. China  }
 \email{wuchaozhong@sysu.edu.cn }

 \address[]{Dafeng Zuo,  School of Mathematical Science,University of Science and Technology of China,
Hefei 230026, P.R.China
 }
\email{dfzuo@ustc.edu.cn}

\date{\today}

\begin{abstract}
We construct a class of infinite-dimensional Frobenius
manifolds on the spaces of pairs of meromorphic functions with a pole at infinity and a movable pole. Such Frobenius
manifolds are shown to be underlying the universal Whitham hierarchy, which is an extension of the dispersionless
Kadomtsev-Petviashvili  hierarchy.

\vskip 2ex

\center{
\it \normalsize To the memory of Professor Boris Dubrovin
}
\end{abstract}

\maketitle 

\tableofcontents

\section{Introduction}

\subsection{Background} The concept of Frobenius manifold was introduced by Dubrovin
\cite{Du96} to exhibit the geometry behind the WDVV (Witten-Dijkgraaf-E. Verlinde-H. Verlinde)
system of nonlinear differential equations \cite{DVV, Witten} (see also \cite{Saito}). This concept plays a significant
role in several branches of mathematical physics, including the theory of singularities,
Gromov-Witten invariants and integrable systems e.t.c., see, for example, \cite{Du96, DZ2001, DSZZ2019, LRZ,Manin, MT2008, Zuo}
and references therein.

A Frobenius manifold of charge $d$ is an $n$-dimensional manifold $M$ equipped on each tangent space $T_v M$ with a
structure of Frobenius algebra $(A_v=T_v M, \circ, e, <~,~>)$ depending smoothly on $v\in M$, such that the
following three axioms are satisfied:
 \begin{itemize}
\item[(FM1)] The nondegenerate bilinear form $<~,~>$ is a flat metric on $M$, and the unity vector field $e$ is covariantly constant, i.e.,
$\nabla e=0$ (here $\nabla$ denotes the Levi-Civita connection for the flat metric);
\item[(FM2)] Let $c$ be a 3-tensor defined by $c(X,Y,Z):=<X\cdot Y, Z>$ with $X,\, Y,\,
Z\in T_v M$, then the 4-tensor $(\nabla_W c)(X,Y,Z)$ is symmetric in
$X,\, Y,\, Z, \, W \in T_v M$;
\item[(FM3)] There exists a vector field $E$, called the Euler vector field,
which satisfies the conditions $\nabla\nabla E=0$ and
$$
[E, X\circ Y] -[E,X]\circ Y -X\circ [E,Y] = X\circ Y,
$$
$$
E(<X,Y>)-<[E,X],Y>-<X,[E,Y]>=(2-d)<X,Y>
$$
for any vector fields $X, Y$ on $M$.
\end{itemize}

According to the axiom (FM1), on the Frobenius manifold $M$ one can choose a system of flat local coordinates $v=(v^1,\dots,v^n)$ such that the unity vector field reads $e=\dfrac{\p}{\p v^1}$.
By using such flat coordinates, a constant non-degenerate $n\times n$ matrix is given by
   \begin{align*}
     \eta_{\al\beta}=<\frac{\p}{\p v^\al},\frac{\p}{\p
     v^{\beta}}>,
   \end{align*}
and its inverse is denoted as $(\eta^{\al\beta})$.
The matrices $\eta_{\al\beta}$ and $\eta^{\al\beta}$ will be used to
lower and to lift indices, respectively, and summations over repeated Greek indices are assumed.
Let
\begin{equation*}
c_{\al\beta\gamma}=c\left(\frac{\p}{\p v^\al},\frac{\p}{\p v^\beta},
\frac{\p}{\p v^\gamma}\right),
\end{equation*}
then the product $\circ$ of the Frobenius algebra $T_v M$ is given by
   \begin{align*}
     \frac{\p}{\p v^{\al}}\circ \frac{\p}{\p v^{\beta}}=c^{\gamma}_{\al\beta}\frac{\p}{\p
     v^{\gamma}}, \quad
     c^{\gamma}_{\al\beta}:=\eta^{\gamma\ep}c_{\ep\al\beta}.
   \end{align*}
The structure constants of the product satisfy
\begin{equation}\label{WDVV1}
c_{1\al}^{\beta}=\delta_\al^\beta, \quad c_{\al\beta}^\ep
c_{\ep\gamma}^{\sg}=c_{\al\gamma}^\ep c_{\ep\beta}^{\sg}.
\end{equation}
According to the axioms (FM2) and (FM3), there locally exists a so-called potential function  $F(v)$ such that
\begin{align}
c_{\al\beta\gamma}=&\frac{\p^3 F}{\p v^{\al}\p v^{\beta}\p
v^{\gamma}}, \\
 \mathrm{Lie}_E F=&(3-d)F+\hbox{ quadratic terms in $v$}. \label{WDVV2}
\end{align}
In other words, the function $F$ solves the WDVV equation
\eqref{WDVV1}--\eqref{WDVV2}, and its third-order derivatives
$c_{\al\beta\gamma}$ are called the $3$-point correlation functions
in topological field theory \cite{Du92}.
Conversely, given a solution $F$ of the WDVV equation \eqref{WDVV1}--\eqref{WDVV2}
(including a flat metric, a unity vector field and a Euler vector field), one can
recover the structure of a Frobenius manifold.

Similar to the tangent space $T_v M$, the cotangent space $T_v^*M$ of $M$ also
carries a Frobenius algebra structure with a product $\star$ and  an invariant bilinear
form $<\, ,\,>^*$  given respectively by
\begin{align}
d v^\al\star d v^\beta=\eta^{\al\ep}c_{\ep\gamma}^{\beta} d v^\gamma, \quad <dt^\al,dt^\beta>^*=\eta^{\al\beta}. \nn
\end{align}
Moreover,  on $T_v^*M$ there is another symmetric
bilinear form, named as the \emph{intersection form}, defined by
\begin{equation*}
(d v^{\al},d v^{\beta})^*=g^{\al\beta}, \quad g^{\al\beta}:=i_{E}(d v^{\al}\circ d v^{\beta}).
\end{equation*}
It is known that the bilinear forms $g^{\al\beta}$ and $\eta^{\al\beta}$ on $T^*M$ compose a covariant flat pencil of metrics. Namely, any linear combination of these to metrics is still a flat metric, and their covariant Levi-Civita symbols obey the same linear combination relation.
Based on this fact and the pioneering work of Dubrovin and Novikov on Hamiltonian structures of hydrodynamics type \cite{DN},
Dubrovin established a link between Frobenius manifolds and certain $(1+1)$-dimensional dispersionless integrable
hierarchies. Such a seminal theory has been achieved in essentially the finite-dimensional case, namely, on an $n$-dimensional Frobenius manifold it is associated with a system of evolutionary equations of hydrodynamic type of $n$ unknown functions.

The first trial to extend
the theory of Frobenius manifolds to the case of infinite dimension was done by Carlet, Dubrovin and Mertens \cite{CDM}. More exactly, they discovered an infinite-dimensional Frobenius manifold structure on a space of pairs of certain meromorphic functions with single poles at the origin and at
infinity. Such a Frobenius manifold, in contrast to the finite-dimensional case, is associated with an integrable hierarchy of $2+1$ evolutionary equations \cite{CM, UT}.
Following this approach, more infinite-dimensional Frobenius manifolds were
constructed in \cite{WX, WZuo} by considering
pairs of meromorphic functions with higher-order poles at the origin and at
infinity, and such (formal) manifolds underly the
two-component BKP hierarchy \cite{DJKM-KPtype} and the Toda lattice hierarchy \cite{UT} respectively. In particular, when the meromorphic functions are with single poles at the origin and at infinity, the Frobenius manifold in \cite{WZ} is similar,
but not exactly the same, with that given in \cite{CDM}.

As it is known, a fundamental role in the theory of integrable systems is played by the Kadomtsev-Petviashvili (KP) hierarchy
\cite{DKJM-KPBKP} of $2+1$ evolutionary equations.
By using a method different from that in \cite{CDM}, Raimondo \cite{Ra} proposed an infinite-dimensional Frobenius manifold
for the dispersionless KP equation via the theory of Schwartz functions. What is more, a scheme was proposed by Szablikowski
\cite{Sz} to construct Frobenius manifolds based on the Rota-Baxter identity and a counterpart of the modified Yang-Baxter
 equation for the classical $r$-matrix, and this scheme was applied to a number of models including the dispersionless KP hierarchy.
  The relation between the constructions in \cite{Ra, Sz} and that in \cite{CDM, WX, WZuo} is not clear yet. {\it In fact, it is
   still open how to derive Frobenius manifolds underlying the dispersionless KP hierarchy via the approach originated in \cite{CDM}.}

Towards a possible solution to the above problem, the aim of the present paper is to study along the line of \cite{CDM, WX, WZuo} infinite-dimensional Frobenius
manifolds related to a certain extension of the dispersionless KP hierarchy. Such an extension (see \eqref{disphir1}--\eqref{disphir2} below for the definition) will be referred as the (special) universal Whitham hierarchy.
In fact, Krichever investigated in \cite{Kr88, Kr} the universal Whitham hierarchy and their reductions in moduli spaces of (formal) meromorphic functions on Riemann surfaces of all genera. In stead of the general setting in  \cite{Kr88, Kr},
this paper only concerns the case \eqref{disphir1}--\eqref{disphir2}, or precisely, its bi-Hamiltonian structures of hydrodynamic type derived in \cite{WZ}.

\subsection{Main results}
\label{sec-M}

Let $U$ be a neighborhood of $z=0$ on the Riemann sphere
$\mathbb{C}\cup\{\infty\}$, and we assume that there is one  circle $\Gamma$  around $U$  and another circle $\Gamma'$ outside $\Gamma$, which satisfy
\begin{equation}\label{GmGm}
|z_1|>|z_2| \hbox{ and } |z_1-\vp|>|z_2-\vp| \quad \hbox{ for any } \quad z_1\in\Gamma', \, z_2\in\Gamma, \, \vp\in U.
\end{equation}
For any $\varphi\in U$, we consider two sets of holomorphic functions on closed disks as follows:
\begin{align*}\label{}
\mH^-_\vp=&\left\{f(z)=\dsum_{i\ge0}f_i\,(z-\vp)^{-i}\mid f \hbox{ holomorphic outside
     } \Gamma \right\},\\
     \mH^+_\vp=&\left\{f(z)=\dsum_{i\ge0}f_i\,(z-\vp)^{i}\mid f \hbox{ holomorphic inside
     } \Gamma'  \right\}.
\end{align*}
Here by the words ``outside/inside'' we mean to include the boundaries $\Gamma$ and $\Gamma'$ respectively, namely,
the holomorphic functions can be extended analytically beyond such boundaries.

Given two arbitrary positive integers $m$ and $n$, let
\[
\tilde{\mathcal{M}}_{m,n}=\bigcup_{\vp\in U}\Big( \left(z^m+(z-\vp)^{m-2}\mH^-_\vp\right)\times (z-\vp)^{-n}\mH^+_\vp \Big),
\]
whose elements are written as pairs of Laurent series of the form
\begin{equation}\label{veca}
\vec{a}=\left(z^{m}+\sum_{i\leq m-2}a_{i}(z-\varphi)^{i},\sum_{i\geq -n}\hat{a}_{i}(z-\varphi)^{i}\right).
\end{equation}
For any point $\vec{a}=(a(z),\hat{a}(z))\in\tilde{\mathcal{M}}_{m,n}$, we introduce
\begin{equation}\label{zetaell}
\zeta(z)=a(z)-\hat{a}(z), \quad \ell(z)=a(z)_{+}+\hat{a}(z)_{-},
\end{equation}
where the subscripts `$\pm$' mean as before
to take the nonnegative and the negative powers in $(z-\varphi)$ (one notes that $z^m=\sum_{i=0}^m \binom{m}{i}\varphi^{m-i}(z-\varphi)^i$).
Clearly, the function $\zeta(z)$ is holomorphic in a neighborhood of the closed bend bounded by $\Gamma$ and $\Gamma'$, while $\ell(z)$ can be
extended analytically to $\C\setminus\{\vp\}$. Conversely, given such two functions $\zeta(z)$ and $\ell(z)$,  one can recover
$(a(z),\hat{a}(z))$ of the form \eqref{veca} by
\begin{equation}\label{}
a(z)=\zeta(z)_-+\ell(z), \quad \hat{a}(z)=-\zeta(z)_+ +\ell(z).
\end{equation}
Hence each point in $\tilde{\mathcal{M}}_{m,n}$ can be represented equivalently by the pair of series $(\zeta(z), \ell(z))$.

Let $\mathcal{M}_{m,n}$ be a subset of $\tilde{\mathcal{M}}_{m,n}$ consisting of points $\vec{a}=(a(z),\hat{a}(z))$ such
that the following conditions are fulfilled:
\begin{enumerate}
    \item[(C1)] the coefficient $\hat{a}_{-n}\ne 0$;
    \item[(C2)] at any point of the  circle $\Gamma$, it holds that
        \begin{equation}\label{M2}
        \zeta'(z)\ne0, \quad \ell'(z)\ne0, \quad a'(z)\hat{a}(z)-a(z)\hat{a}'(z)\ne0;
\end{equation}
    \item[(C3)] the function $\zeta(z)$ has  winding number 1 around $z=0$, such that it maps the circle $\Gamma$ biholomorphicly
    to a simple smooth curve $\Sigma$ around the origin.
\end{enumerate}
Such a subset $\mathcal{M}_{m,n}$ can be viewed as an infinite-dimensional manifold, whose coordinates can be chosen
as $\{a_{i}\}_{i\le m-2}\cup\{\hat{a}_{j}\}_{j\ge-n}\cup\{\varphi\}$.

On $\mathcal{M}_{m,n}$, let us introduce the set of variables
\begin{equation}\label{flat0}
\mathbf{t}=\{t_i\}_{i\in\Z}, \quad  \mathbf{h}=\big\{h_j\big\}_{j=1}^{m-1}, \quad \hat{\mathbf{h}}=\big\{\hat{h}_k\big\}_{k=0}^{n},
\end{equation}
where
\begin{align}\label{flatt0}
&t_i=\frac{1}{2\pi\mathbf{i}}\oint_{\Gamma}\frac{1}{i \zeta(z)^{i}}\,d z,
\quad i\in\mathbb{Z}\setminus \{0\};\quad
 t_0=\frac{1}{2\pi\mathbf{i}} \oint_{\Gamma}\log\frac{ z}{\zeta(z)}\,d z;
 \\
& h_j=-\frac{1}{j} \res_{z=\infty} \ell(z)^{\frac{j}{m}}
\,d z,\quad j=1,\cdots,m-1; \label{flath0}
\\
& \hat{h}_0=\vp;\quad
 \hat{h}_k=\frac{1}{k} \res_{z=\vp} \ell(z)^{\frac{k}{n}} \,d z,\quad
 k=1,\cdots,n. \label{flathh0}
\end{align}
\begin{Mthm}\label{main}
For any two positive integers $m$ and $n$, the set $\mathcal{M}_{m,n}$ is an  infinite-dimensional Frobenius manifold with a
system of flat coordinates  \eqref{flat0} such that
    \begin{itemize}
 \item[(i)] the unity vector field is
\begin{equation*}\label{vece}
\vec{e}=\left\{
\begin{aligned}
&\frac{\p}{\p t_{0}}+\frac{\p}{\p \hat{h}_{0}},\quad &m&=1,\\
&\frac{1}{m}\frac{\p}{\p h_{m-1}},\quad &m&\ge2;\\
\end{aligned}
\right.
\end{equation*}
\item[(ii)] the potential $\mathcal{F}$, up to quadratic terms in the flat coordinates, is
     \begin{align}
     \mathcal{F}=&\frac{1}{(2\pi \mathrm{i})^{2}}\oint_{\Gamma}\oint_{\Gamma'}\left(\frac{1}{2}\zeta(z_{1}) \zeta(z_{2})
     +\zeta(z_{1}) \ell(z_{2}) -\ell(z_{1}) \zeta(z_{2})\right) \log\left(\frac{z_{1}-z_{2}}{z_{1}}\right)dz_{1}dz_{2} \nn\\
&\qquad +\left(\frac{1}{2}t_{-1}-n \hat{h}_n\right)V_1(\mathbf{t})-t_{-1}V_2(\mathbf{h})+G(\mathbf{h},\hat{\mathbf{h}}). \label{potential0}
  \end{align}
  where the functions $V_1(\mathbf{t})$, $V_2(\mathbf{h})$ and $G(\mathbf{h},\hat{\mathbf{h}})$ are given in Lemma~\ref{thm-V1V2} below;
\item[(iii)] the Euler vector field is
  \begin{align*}
    \vec{E}=\sum_{i\in\mathbb{Z}}\left(\frac{1}{m}-i\right)t_{i}\frac{\p}{\p t_{i}}+\sum_{j=1}^{m-1} \frac{j+1}{m}  h_{j}\frac{\p}{\p h_{j}}+
\sum_{k=0}^{n}\left(\frac{1}{m}+\frac{k}{n}\right)\hat{h}_{k}\frac{\p}{\p \hat{h}_{k}},
\end{align*}
and the charge is $d=1-\frac{2}{m}$.
    \end{itemize}
\end{Mthm}

\begin{Mthm}\label{main2} The bi-Hamiltonian structure induced from the flat pencil of metrics on the Frobenius manifold $\mathcal{M}_{m,n}$
 coincides with that given in Proposition~\ref{thm-eKPham} below for the universal Whitham hierarchy \eqref{disphir1}--\eqref{disphir2}.
\end{Mthm}

\subsection{Organization} This paper is arranged as follows. In the next section, we are to prove Main Theorem 1, namely, to construct an infinite-dimensional Frobenius manifold structure on $\mathcal{M}_{m,n}$. In Section~3, we will study the relationship
between the infinite-dimensional Frobeniu manifold $\mathcal{M}_{m,n}$ and  the universal Whitham hierarchy (Main Theorem 2).
The last section is devoted to some concluding remarks.

\section{The infinite-dimensional Frobenius manifold structure on $\mathcal{M}_{m,n}$}
In this section, our aim is to construct an  infinite-dimensional
 Frobenius manifold structure on $\mathcal{M}_{m,n}$, including the flat metric, the potential function, the unity and the Euler vector fields.

\subsection{The tangent space and the cotangent space }

Let us describe the tangent and the cotangent spaces on
$\mathcal{M}_{m,n}$ with Laurent series. Firstly, at each point $\vec{a}=(a(z),\hat{a}(z))\in\mathcal{M}_{m,n}$ we identify a vector
$\p$ in the tangent space with its action $(\p
a(z), \p\hat{a}(z))$ and thus the tangent space is represented as
\begin{equation}\label{TM}
T_{\vec{a}}\mathcal{M}_{m,n}=(z-\varphi)^{m-2}\mH_\vp^- \times (z-\vp)^{-n-1}\mH_\vp^+.
\end{equation}
Accordingly, the cotangent space at $\vec{a}=(a(z),\hat{a}(z))$ is represented as the dual space of $T_{\vec{a}}\mathcal{M}_{m,n}$, $i.e.$,
\begin{equation}\label{TstaM}
T^{\ast}_{\vec{a}}\mathcal{M}_{m,n}=(z-\vp)^{-m+1}\mH_\vp^+\times (z-\varphi)^{n}\mH_\vp^-,
\end{equation}
with respect to the pairing
\begin{equation}\label{pairing}
\langle \vec{\omega},\vec{\xi}\rangle:=\frac{1}{2\pi\mathrm{i}}\oint_{\Gamma} \left(\omega(z)\xi(z)+\hat{\omega}(z)\hat{\xi}(z)\right)d z
\end{equation}
for $(\omega(z),\hat{\omega}(z))\in T^{\ast}_{\vec{a}}\mathcal{M}_{m,n}$ and  $(\xi(z),\hat{\xi}(z))\in T_{\vec{a}}\mathcal{M}_{m,n}$.

Let us fix a basis of $T^{\ast}_{\vec{a}}\mathcal{M}_{m,n}$ as
$$
\left\{ e_{i}^{\ast}=((z-\varphi)^{i},0), ~ \hat{e}_{j}^{\ast}=(0,(z-\varphi)^{j}) \mid  i\ge -m+1,~j\le n\right\},
$$
and introduce two generating functions
\begin{align}
 da(p):=\sum_{i\ge -m+1}(p-\varphi)^{-i-1}e^{\ast}_{i}=\left(\frac{(p-\varphi)^{m-1}}{(p-z)(z-\varphi)^{m-1}},0\right),
 &\quad |p-\varphi|>|z-\varphi|, \label{rpgener01} \\
 d\hat{a}(q):=\sum_{j\le n}(q-\varphi)^{-j-1}\hat{e}^{\ast}_{j}=\left(0,\frac{(z-\varphi)^{n+1}}{(z-q)(q-\varphi)^{n+1} } \right),
 & \quad |q-\varphi|<|z-\varphi|. \label{rpgener02}
\end{align}
The following lemma can be checked by using the Cauchy integral formula.
\begin{lem}\label{thm-dadah}
For any cotangent vector $\vec{\omega}=(\omega(z),\hat{\omega}(z))\in T^{\ast}_{\vec{a}}\mathcal{M}_{m,n}$ and tangent vector
$\vec{\xi}=(\xi(z),\hat{\xi}(z))\in T_{\vec{a}}\mathcal{M}_{m,n}$, it holds that
    $$
    \begin{aligned}
    &(\omega(z),0)=\frac{1}{2\pi\mathrm{i}}\oint_{|p-\varphi|>|z-\varphi|}\omega(p)da(p)\,d p,
\\
&(0,\hat{\omega}(z))=\frac{1}{2\pi\mathrm{i}} \oint_{|q-\varphi|<|z-\varphi|}\hat{\omega}(q)d\hat{a}(q)\,d q
    \end{aligned}
    $$
and
\begin{equation}\label{daxi}
    \langle da(p),\vec{\xi} \rangle=\xi(p),\quad \langle d\hat{a}(q),\vec{\xi} \rangle=\hat{\xi}(q).
\end{equation}
\end{lem}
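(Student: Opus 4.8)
The plan is to verify the two representation formulas and the pairing identities \eqref{daxi} directly by contour integration, treating the $a$-component and the $\hat a$-component separately since the pairing \eqref{pairing} splits as a sum over the two components. First I would expand an arbitrary cotangent vector $\vec\omega=(\omega(z),\hat\omega(z))\in T^*_{\vec a}\mathcal{M}_{m,n}$ in the chosen basis, writing $\omega(z)=\sum_{i\ge -m+1}\omega_i(z-\varphi)^i$ and $\hat\omega(z)=\sum_{j\le n}\hat\omega_j(z-\varphi)^j$ as Laurent series converging near $\Gamma$. Plugging the generating series \eqref{rpgener01} into the right-hand side of the first claimed formula and interchanging sum and integral (legitimate by uniform convergence of the geometric series $\frac{(p-\varphi)^{m-1}}{(p-z)(z-\varphi)^{m-1}}=\sum_{i\ge -m+1}(p-\varphi)^{-i-1}(z-\varphi)^i$ on the region $|p-\varphi|>|z-\varphi|$), one gets $\frac{1}{2\pi\mathrm i}\oint \omega(p)(p-\varphi)^{-i-1}\,dp=\omega_i$ by the Cauchy integral formula for Taylor coefficients, so the sum reassembles to $(\omega(z),0)$. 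The second formula is symmetric: the expansion $\frac{(z-\varphi)^{n+1}}{(z-q)(q-\varphi)^{n+1}}=\sum_{j\le n}(q-\varphi)^{-j-1}(z-\varphi)^j$ is valid for $|q-\varphi|<|z-\varphi|$, and extracting coefficients of $\hat\omega$ via a small contour around $\varphi$ gives $\hat\omega_j$, reproducing $(0,\hat\omega(z))$.

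Next, for the pairing identities \eqref{daxi}, I would test $da(p)$ against a tangent vector $\vec\xi=(\xi(z),\hat\xi(z))\in T_{\vec a}\mathcal{M}_{m,n}$ using the definition \eqref{pairing}. Since $da(p)$ has zero $\hat a$-component, $\langle da(p),\vec\xi\rangle=\frac{1}{2\pi\mathrm i}\oint_\Gamma \frac{(p-\varphi)^{m-1}}{(p-z)(z-\varphi)^{m-1}}\,\xi(z)\,dz$. Here $\xi(z)=\sum_{i\le m-2}\xi_i(z-\varphi)^i\in (z-\varphi)^{m-2}\mH^-_\varphi$ is holomorphic outside $\Gamma$ and vanishes at infinity faster than $(z-\varphi)^{m-2}$; by assumption $p$ lies outside $\Gamma$ (indeed on or beyond $\Gamma'$, so $|p-\varphi|>|z-\varphi|$ for $z\in\Gamma$). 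Deforming the $\Gamma$-contour outward and picking up the residue at $z=p$ — noting the integrand decays at infinity so there is no contribution from the arc at infinity — gives $\operatorname{Res}_{z=p}\frac{(p-\varphi)^{m-1}}{(p-z)(z-\varphi)^{m-1}}\xi(z)=\xi(p)$ (the spurious pole of order $m-1$ at $z=\varphi$ lies inside $\Gamma$ and is not enclosed by the deformation). The second identity $\langle d\hat a(q),\vec\xi\rangle=\hat\xi(q)$ is handled dually: now the relevant contour is shrunk toward $\varphi$, enclosing the pole at $z=q$ (which satisfies $|q-\varphi|<|z-\varphi|$), and the residue of $\frac{(z-\varphi)^{n+1}}{(z-q)(q-\varphi)^{n+1}}\hat\xi(z)$ at $z=q$ equals $\hat\xi(q)$, while the higher-order pole at $z=\varphi$ contributes nothing because $\hat\xi\in(z-\varphi)^{-n-1}\mH^+_\varphi$ makes the combined expression $\frac{\hat\xi(z)}{(z-q)}\cdot\frac{(z-\varphi)^{n+1}}{(q-\varphi)^{n+1}}$ regular at $\varphi$.

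The only genuinely delicate point is bookkeeping the geometry of the nested contours $\Gamma$, $\Gamma'$ and the region $U$ as encoded in \eqref{GmGm}, so that in each residue computation exactly the intended pole ($z=p$ or $z=q$) is captured and the unwanted pole at $z=\varphi$ — present in the generating functions to order $m-1$ or $n+1$ respectively — is either excluded by the contour deformation or cancelled by the prescribed vanishing/holomorphy of the tangent vectors' components; one also needs that the analyticity domains of $\omega,\hat\omega,\xi,\hat\xi$ are wide enough to justify the deformations, which is exactly what the ``include the boundaries'' convention on $\mathcal H^\pm_\varphi$ guarantees. Everything else is the standard Cauchy-kernel calculus, so I would present the argument compactly: state the two geometric-series expansions of the kernels, invoke Cauchy's integral/residue formula term by term for the representation formulas, and do one residue computation on each component for the pairing, remarking that uniform convergence on the compact contours justifies the interchange of summation and integration throughout.
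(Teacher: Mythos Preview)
Your proposal is correct and matches the paper's approach, which says only that the lemma ``can be checked by using the Cauchy integral formula''; you have simply fleshed out that computation. One minor bookkeeping point: the residue $\operatorname{Res}_{z=p}\dfrac{(p-\varphi)^{m-1}}{(p-z)(z-\varphi)^{m-1}}\xi(z)$ is literally $-\xi(p)$, and the correct value $\xi(p)$ for the integral appears because deforming the counterclockwise contour $\Gamma$ outward reverses orientation --- your final answer is right, but the displayed equality of the residue with $\xi(p)$ should be rephrased.
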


On the cotangent space $T^{\ast}_{\vec{a}}\mathcal{M}_{m,n}$, let us introduce a symmetric bilinear form
\begin{equation}\label{blf}
\langle d\alpha(p),d\beta(q)\rangle^{\ast}:=\frac{\alpha'(p)}{p-q}+\frac{\beta'(q)}{q-p},\quad \alpha,\beta\in\{a,\hat{a}\},
\end{equation}
where the prime means to take the derivative with respect to the argument.
For the right hand side of \eqref{blf}, we remark that whenever $\al=\beta$,
the denominator $(p-q)$ is eliminated by a factor of
$\al'(p)-\al'(q)$; otherwise, suppose $\al=a$ and $\beta=\hat{a}$,
then $\frac{1}{p-q}$ is expanded according to $|p-\vp|>|q-\vp|$
as in \eqref{rpgener01}--\eqref{rpgener02}. Such kind of
convention will be used below.

Clearly, the pairing \eqref{pairing} is nondegenerate, hence there is a linear map
\begin{equation}\label{eta}
\eta: T^{\ast}_{\vec{a}}\mathcal{M}_{m,n}\to T_{\vec{a}}\mathcal{M}_{m,n}
\end{equation}
defined by
\begin{equation}\label{etadef}
\langle\vec{\omega}_{1},\eta\cdot\vec{\omega}_{2}\rangle=\langle \vec{\omega}_{1},\vec{\omega}_{2}\rangle^{\ast},\quad\forall \, \vec{\omega}_{1},\vec{\omega}_{2}\in T^{\ast}_{\vec{a}}\mathcal{M}_{m,n}.
\end{equation}
We proceed to describe the map $\eta$ more explicitly.
\begin{lem}\label{thm-etaom}
The linear map $\eta$ defined above can be represented as, for any cotangent vector $\vec{\omega}=(\omega(z), \hat{\omega}(z))\in T^{\ast}_{\vec{a}}\mathcal{M}_{m,n}$,
\begin{align}\label{}
    \eta\cdot\vec{\omega}=&\big(a'(z)[\omega(z)+\hat{\omega}(z)]_{-}-[\omega(z) a'(z)+\hat{\omega}(z)\hat{a}'(z)]_{-},\nn\\
    &-\hat{a}'(z)[\omega(z)+\hat{\omega}(z)]_{+}+[\omega(z) a'(z)+\hat{\omega}(z)\hat{a}'(z)]_{+}\big). \label{etaom} 
\end{align}
\end{lem}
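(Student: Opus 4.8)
The plan is to verify the defining relation \eqref{etadef} directly on the basis of generating functions. Since the bilinear form $\langle\,\cdot\,,\,\cdot\,\rangle^\ast$ on $T^\ast_{\vec a}\mathcal M_{m,n}$ is determined by its values on the pairs $da(p),d\hat a(q)$, and similarly any cotangent vector can be recovered from these via Lemma~\ref{thm-dadah}, it suffices to check that the operator described by the right-hand side of \eqref{etaom}, call it $\tilde\eta$, satisfies $\langle \vec\omega_1,\tilde\eta\cdot\vec\omega_2\rangle=\langle\vec\omega_1,\vec\omega_2\rangle^\ast$ when $\vec\omega_1,\vec\omega_2$ range over $\{da(p),d\hat a(q)\}$. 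By \eqref{daxi}, the left-hand side $\langle da(p),\tilde\eta\cdot\vec\omega_2\rangle$ equals the first component of $\tilde\eta\cdot\vec\omega_2$ evaluated at $p$ (expanded in the region $|p-\vp|>|z-\vp|$), and $\langle d\hat a(q),\tilde\eta\cdot\vec\omega_2\rangle$ equals the second component evaluated at $q$ (expanded in $|q-\vp|<|z-\vp|$). So the whole claim reduces to extracting the appropriate plus/minus parts of the formula \eqref{etaom} and matching them against \eqref{blf}.

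Concretely, I would first compute $\tilde\eta\cdot da(p)$. Here $(\omega(z),\hat\omega(z))=\big(\tfrac{(p-\varphi)^{m-1}}{(p-z)(z-\varphi)^{m-1}},0\big)$; note that as a function of $z$ this lies in $(z-\vp)^{-m+1}\mH^+_\vp$, so $[\omega+\hat\omega]_-=[\omega]_-$ picks out exactly the negative-power tail, while $[\omega a'+\hat\omega\hat a']_-=[\omega a']_-$. Plugging into \eqref{etaom} and using that $\tfrac{1}{p-z}$ expanded in $|p-\vp|>|z-\vp|$ has the property that $[\,g(z)/(p-z)\,]_{\pm}$ relates to $\tfrac{g(z)-g(p)}{p-z}$ and $\tfrac{g(p)}{p-z}$ respectively (the standard Cauchy-kernel splitting), the two components of $\tilde\eta\cdot da(p)$ should collapse to expressions whose evaluations at a second point reproduce $\tfrac{a'(p)}{p-q}+\tfrac{a'(q)}{q-p}$ and $\tfrac{a'(p)}{p-q}+\tfrac{\hat a'(q)}{q-p}$ respectively. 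The analogous computation for $\tilde\eta\cdot d\hat a(q)$, where now the second component of the input is the nontrivial piece lying in $(z-\vp)^n\mH^-_\vp$, handles the remaining cases. A complementary bookkeeping check is that $\tilde\eta\cdot\vec\omega$ actually lands in $T_{\vec a}\mathcal M_{m,n}=(z-\varphi)^{m-2}\mH_\vp^-\times(z-\vp)^{-n-1}\mH_\vp^+$: the first component is manifestly a sum of negative powers of $(z-\vp)$, and one must confirm it starts at order $\le m-2$ (using $a(z)=z^m+O((z-\vp)^{m-2})$, so $a'(z)$ has leading behavior compatible with this), and the second component is manifestly a sum of nonnegative powers, starting at order $\ge -n-1$ (using $\hat a'(z)=O((z-\vp)^{-n-1})$).

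The symmetry of $\langle\,\cdot\,,\,\cdot\,\rangle^\ast$ is visible from \eqref{blf}, and the symmetry requirement built into \eqref{etadef} — namely that $\langle\vec\omega_1,\tilde\eta\vec\omega_2\rangle=\langle\vec\omega_2,\tilde\eta\vec\omega_1\rangle$ — will come out of the computation automatically once the matching with \eqref{blf} is established; it can also be used as a consistency check on the algebra. I would also note that the mixed term, where $\vec\omega_1=da(p)$ and $\vec\omega_2=d\hat a(q)$ (or vice versa), is where the expansion convention ``$|p-\vp|>|q-\vp|$'' stated after \eqref{blf} genuinely matters, and one must be careful that the plus/minus projections in \eqref{etaom} are taken in the variable $z$ before evaluating at $p$ or $q$, so the nested region conditions are consistent.

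The main obstacle I anticipate is purely the bookkeeping of the projections: the formula \eqref{etaom} mixes $[\,\cdot\,]_+$ and $[\,\cdot\,]_-$ of products like $\omega(z)a'(z)$, and since $a'(z)$ itself has both a polynomial-in-$z$ part coming from $z^m$ and a negative-power part, one has to carefully separate contributions and watch that the ``obvious'' cancellations of the $\tfrac{1}{p-q}$ denominators occur as claimed in the remark following \eqref{blf}. Once the Cauchy-kernel identity $[\,g(z)/(p-z)\,]$ splitting is set up cleanly as a lemma-internal computation, the rest is a finite check over the four cases $(\alpha,\beta)\in\{a,\hat a\}^2$, and no genuinely new idea beyond Lemma~\ref{thm-dadah} and the residue/Cauchy formula is needed.
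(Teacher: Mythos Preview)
Your approach is correct and uses the same underlying ingredients as the paper --- Lemma~\ref{thm-dadah} and the Cauchy-kernel splitting identities --- but the organization differs. The paper does not apply the candidate formula $\tilde\eta$ to the generating functions $da(p)$, $d\hat a(q)$ and then verify \eqref{etadef}; instead it keeps $\vec\omega$ general (split linearly as $(\omega,0)+(0,\hat\omega)$) and \emph{derives} the two components of $\eta\cdot\vec\omega$ directly as
\[
\xi(p)=\langle da(p),\vec\omega\rangle^\ast,\qquad \hat\xi(q)=\langle d\hat a(q),\vec\omega\rangle^\ast,
\]
expanding $\vec\omega$ via Lemma~\ref{thm-dadah} and then using \eqref{blf} together with the integral identities $\frac{1}{2\pi\mathrm{i}}\oint_{|p-\varphi|>|q-\varphi|}\frac{f(q)}{p-q}\,dq=f(p)_-$ and its companion. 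This yields \eqref{etaom} in one pass, without ever substituting the explicit rational expression for $da(p)$ into the projections $[\,\cdot\,]_\pm$ or evaluating at a second point. Your verificational route reaches the same destination but carries one extra layer of bookkeeping (computing $\tilde\eta\cdot da(p)$ explicitly); the paper's derivational route is shorter because only one slot of the pairing is specialized to a generating function.
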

\begin{proof}
Firstly, let us state the following two equalities that will be used below. For any function $f(z)=\sum_{i\in\Z}f_i (z-\vp)^i$
holomorphic on a neighbourhood of $\Gamma$, it holds that
    \begin{align}
&\frac{1}{2\pi\mathrm{i}}\oint_{|p-\varphi|>|q-\varphi|}\frac{1}{p-q} f(q)dq= f(p)_{-},\label{pqminus}\\
    &\frac{1}{2\pi\mathrm{i}}\oint_{|p-\varphi|>|q-\varphi|}
    \frac{1}{p-q} f(p)dp= f(q)_{+}. \label{pqplus}
    \end{align}
For $\vec{\omega}=(\omega(z),\hat{\omega}(z))\in T^{\ast}_{\vec{a}}\mathcal{M}_{m,n}$, we denote $\eta\cdot\vec{\omega}=(\xi(z),
 \hat{\xi}(z))\in T_{\vec{a}}\mathcal{M}_{m,n}$.
\begin{itemize}
  \item If $\vec{\omega}=(\omega(z),0)$, by using Lemma~\ref{thm-dadah} we have
    $$\begin{aligned}
    \xi(p)&=\langle da(p),(\omega(z),0)\rangle^{\ast}\\
    &=\frac{1}{2\pi\mathrm{i}}\oint_{|q-\varphi|>|z-\varphi|}\langle da(p),da(q)\rangle^{\ast}\omega(q)dq\\
&=\frac{1}{2\pi\mathrm{i}}\oint_{|p-\varphi|>|q-\varphi|} \left(\frac{a'(p)}{p-q}+\frac{a'(q)}{q-p}\right)\omega(q)dq\\
    &=a'(p)\omega(p)_{-}-(a'(p)\omega(p))_{-},\\
    \end{aligned}
    $$
    $$\begin{aligned}
    \hat{\xi}(q)&=\langle d\hat{a}(q),(\omega(z),0)\rangle^{\ast}\\
    &=\frac{1}{2\pi\mathrm{i}}\oint_{|p-\varphi|>|z-\varphi|}\langle d\hat{a}(q),da(p)\rangle^{\ast}\omega(p)dp\\
    &=\frac{1}{2\pi\mathrm{i}}\oint_{|p-\varphi|>|q-\varphi|} \left(\frac{\hat{a}'(q)}{q-p}+\frac{a'(p)}{p-q}\right)\omega(p)dp\\
    &=-\hat{a}'(q)\omega(q)_{+}+(a'(q)\omega(q))_{+}.
    \end{aligned}
    $$
Hence, we obtain
\[
\eta\cdot(\omega(z),0)=(a'(z)\omega(z)_{-}-(a'(z)\omega(z))_{-}, -\hat{a}'(z)\omega(z)_{+}+(a'(z)\omega(z))_{+}).
\]
  \item If $\vec{\omega}=(0,\hat{\omega}(z))$, similarly we have
\[
\eta\cdot(0,\hat{\omega}(z))=(a'(z)\hat{\omega}(z)_{-} -(\hat{a}'(z)\hat{\omega}(z))_{-},-\hat{a}'(z)\hat{\omega}(z)_{+} +(\hat{a}'(z)\hat{\omega}(z))_{+}).
\]
\end{itemize}
Thus the lemma is proved due to the linearity of $\eta$.
\end{proof}

\begin{lem} \label{thm-etabi}
The map $\eta$ in \eqref{eta} is a bijection.
\end{lem}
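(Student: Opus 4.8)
The plan is to use the explicit description of $\eta$ from Lemma~\ref{thm-etaom}. For a cotangent vector $\vec\omega=(\omega,\hat\omega)\in T^{\ast}_{\vec a}\mathcal{M}_{m,n}$, split $\omega=\omega_++\omega_-$ and $\hat\omega=\hat\omega_++\hat\omega_-$ into nonnegative and negative powers of $(z-\vp)$; here $\omega_+\in\mH^+_\vp$ and $\hat\omega_-$ (a series in strictly negative powers, holomorphic outside $\Gamma$) carry the ``analytic'' data, while $\omega_-=\sum_{j=1}^{m-1}\omega_{-j}(z-\vp)^{-j}$ and $\hat\omega_+=\sum_{j=0}^{n}\hat\omega_j(z-\vp)^{j}$ range over finite-dimensional spaces, of dimensions $m-1$ and $n+1$. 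Writing $\eta\cdot\vec\omega=(\xi,\hat\xi)$, a few lines of algebra from \eqref{etaom} (using $a'-\hat a'=\zeta'$) give the clean identity $\xi-\hat\xi=\zeta'(z)\bigl(\hat\omega_-(z)-\omega_+(z)\bigr)$, and, after substituting it back, an expression of the form $[a'\omega_-]_{+}-[\hat a'\hat\omega_+]_{-}=\xi-K$, where $K$ is built only from $\vec a$, $\omega_+$ and $\hat\omega_-$. These two relations exhibit $\eta$ as ``triangular'': the first controls the analytic parts through multiplication by $\zeta'$, and the second then reduces to a finite linear system for the polar parts $\omega_-,\hat\omega_+$.

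\emph{Injectivity.} If $\eta\cdot\vec\omega=0$, the first identity gives $\zeta'(\hat\omega_--\omega_+)=0$. By condition (C2), $\zeta'$ does not vanish on $\Gamma$, hence on a neighbourhood of it, so $\hat\omega_-=\omega_+$ there; since $\omega_+$ extends holomorphically over the disk bounded by $\Gamma'$ while $\hat\omega_-$ extends holomorphically outside $\Gamma$ with value $0$ at infinity, gluing the two and applying Liouville's theorem forces $\omega_+=\hat\omega_-=0$. Then $K=0$, so $[a'\omega_-]_{+}=0$ and $[\hat a'\hat\omega_+]_{-}=0$. Because $a'$ has leading term $m(z-\vp)^{m-1}$ and $\hat a'$ has leading term $-n\hat a_{-n}(z-\vp)^{-n-1}$ with $\hat a_{-n}\ne0$ by (C1), these are triangular linear systems in the coefficients of $\omega_-$ and of $\hat\omega_+$ with nonzero diagonal entries $m$ and $-n\hat a_{-n}$; hence $\omega_-=\hat\omega_+=0$ and $\vec\omega=0$.

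\emph{Surjectivity.} Given $(\xi,\hat\xi)\in T_{\vec a}\mathcal{M}_{m,n}$, the difference $\xi-\hat\xi$ is holomorphic on a neighbourhood of the closed bend between $\Gamma$ and $\Gamma'$, so $g:=(\xi-\hat\xi)/\zeta'$ is well defined and holomorphic there; split $g=g_{\ge0}+g_{<0}$ into its holomorphic and principal parts and set $\omega_+:=-g_{\ge0}\in\mH^+_\vp$, $\hat\omega_-:=g_{<0}$, so that $\xi-\hat\xi=\zeta'(\hat\omega_--\omega_+)$. With $\omega_+,\hat\omega_-$ fixed, one checks that $\xi-K$ is a Laurent polynomial with exponents between $-n-1$ and $m-2$ (the exponents below $-n-1$ cancel precisely because $\xi-\hat\xi$ has already been matched); its nonnegative-power part determines $\omega_-$ via the $(m-1)\times(m-1)$ triangular system $[a'\omega_-]_+=(\xi-K)_{\ge0}$, and its negative-power part determines $\hat\omega_+$ via the $(n+1)\times(n+1)$ triangular system $[\hat a'\hat\omega_+]_-=-(\xi-K)_{<0}$, both uniquely solvable by the leading-coefficient argument above. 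The resulting $\vec\omega=(\omega_++\omega_-,\hat\omega_++\hat\omega_-)$ lies in $T^{\ast}_{\vec a}\mathcal{M}_{m,n}$ and satisfies $\eta\cdot\vec\omega=(\xi,\hat\xi)$, since its image matches $\xi$ by construction and matches $\xi-\hat\xi$ by the choice of $\omega_+,\hat\omega_-$.

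\emph{Main obstacle.} The algebraic part — deriving the two identities from \eqref{etaom}, tracking exponent ranges, and verifying triangularity — is routine bookkeeping. The delicate point is analytic: one must ensure that $\zeta'$ is invertible not merely on $\Gamma$ but on a full (possibly thin) neighbourhood of the closed bend, so that division by $\zeta'$ and the ensuing Riemann--Hilbert-type splitting genuinely produce elements of $\mH^+_\vp$ and $\mH^-_\vp$. This is exactly where conditions (C2) and (C3) are used (together with the freedom to take $\Gamma'$ close enough to $\Gamma$), and it plays the role of the non-degeneracy hypotheses in the finite-dimensional theory.
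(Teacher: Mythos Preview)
Your argument is correct and shares its backbone with the paper's proof: both hinge on the identity $\xi-\hat\xi=\zeta'(z)(\hat\omega_--\omega_+)$ obtained from \eqref{etaom}, together with invertibility of $\zeta'$ on $\Gamma$ (condition (C2)) to recover $\omega_+$ and $\hat\omega_-$, and then use the nondegeneracy of the leading coefficients of $a'$ and $\hat a'$ (condition (C1)) to pin down the remaining finite-dimensional data $\omega_-,\hat\omega_+$.

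The organisation differs, however. The paper does not separate injectivity from surjectivity via a kernel computation plus a construction; instead it solves $\eta\cdot(\omega,\hat\omega)=(\xi,\hat\xi)$ outright. From \eqref{xi}--\eqref{xih} it reads off
\[
\bigl([\omega+\hat\omega]_-\bigr)_{\ge -m+1}=\left(\frac{\xi}{a'}\right)_{\ge -m+1},\qquad
\bigl([\omega+\hat\omega]_+\bigr)_{\le n}=-\left(\frac{\hat\xi}{\hat a'}\right)_{\le n},
\]
valid because $\tfrac{1}{a'}\in(z-\vp)^{-m+1}\mH_\vp^-$ and $\tfrac{1}{\hat a'}\in(z-\vp)^{n+1}\mH_\vp^+$. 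Combined with $\omega_+$ and $\hat\omega_-$ from the $\zeta'$-identity, this yields closed formulas for $\omega_-$ and $\hat\omega_+$ (hence for all of $\vec\omega$), and those formulas are recorded as Corollary~\ref{thm-etainv}. So the paper effectively writes down $\eta^{-1}$, getting bijectivity in one stroke, whereas you set up an equivalent triangular system and then argue separately that it has trivial kernel and full image. Your route is slightly longer but perfectly sound; the paper's buys an explicit inverse formula that is reused later.

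Two minor remarks. Your Liouville step is unnecessary: once $\hat\omega_-=\omega_+$ as Laurent series about $z=\vp$, the left side lives in strictly negative powers and the right side in nonnegative powers, so both vanish by coefficient comparison. And your surjectivity sketch leaves the verification that $\xi-K$ has the claimed finite exponent range to the reader; the paper's use of $(\xi/a')_{\ge -m+1}$ and $(\hat\xi/\hat a')_{\le n}$ sidesteps that check entirely.
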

\begin{proof}
 It follows from Lemma~\ref{thm-etaom} that $\eta$ is surjective, so it suffices to show $\eta$ to be injective.
 For any $(\omega(z),\hat{\omega}(z))\in T^{\ast}_{\vec{a}}\mathcal{M}_{m,n}$, denote
 $(\xi(z),\hat{\xi}(z))=\eta\cdot(\omega(z),\hat{\omega}(z))$. With the use of \eqref{etaom}, one gets
\begin{align}
\xi(z)&=a'(z)[\omega(z)+\hat{\omega}(z)]_{-}-[\omega(z) a'(z)+\hat{\omega}(z)\hat{a}'(z)]_{-}, \label{xi}\\
\hat{\xi}(z)&=  -\hat{a}'(z)[\omega(z)+\hat{\omega}(z)]_{+}+[\omega (z)a'(z)+\hat{\omega}(z)\hat{a}'(z)]_{+} \label{xih}.
\end{align}
Furthermore, by using \eqref{zetaell} one has
$$
\xi(z)-\hat{\xi}(z)=(a'(z)-\hat{a}'(z))(\hat{\omega}(z)_{-}-\omega(z)_{+})=\zeta'(z)(\hat{\omega}(z)_{-}-\omega(z)_{+}).
$$
Since $\zeta'(z)\ne 0$ for $z\in\Gamma$ (recall the condition (C3) in Section~\ref{sec-M}), then
\begin{equation}\label{ompm1}
\omega(z)_{+}=-\left(\frac{\xi(z)-\hat{\xi}(z)}{\zeta'(z)}\right)_{+},\quad \hat{\omega}(z)_{-}=\left(\frac{\xi(z)-\hat{\xi}(z)}{\zeta'(z)}\right)_{-}.
\end{equation}

On the other hand, since
$$
\begin{aligned}
&\frac{1}{a'(z)}=\frac{1}{m}z^{-m+1}+\mathcal{O}((z-\varphi)^{-m})\in (z-\varphi)^{-m+1}\mH_\vp^-, \\
&\frac{1}{\hat{a}'(z)}=-\frac{1}{n\hat{a}_{-n}}(z-\varphi)^{n+1}+\mathcal{O}((z-\varphi)^{n+2})\in (z-\varphi)^{n+1}\mH_\vp^+,
\end{aligned}
$$
then from \eqref{xi} and \eqref{xih} it follows that
\begin{equation}\label{ompm2}
\left( [\omega(z)+\hat{\omega}(z)]_{-} \right)_{\ge-m+1}=\left(\frac{\xi(z)}{a'(z)}\right)_{\ge-m+1},
 \quad \left([\omega(z)+\hat{\omega}(z)]_{+}\right)_{\le n}=-\left(\frac{\hat{\xi}(z)}{\hat{a}'(z)}\right)_{\le n}.
\end{equation}
Here for a Laurent series $f(z)=\sum_{i\in\Z}f_i(z-\vp)^i$, the following notations are used:
\begin{equation}\label{}
f(z)_{\le k}=\sum_{i\le k}f_i(z-\vp)^i, \quad f(z)_{\ge k}=\sum_{i\ge k}f_i(z-\vp)^i.
\end{equation}
Recall $T^{\ast}_{\vec{a}}\mathcal{M}_{m,n}$ in \eqref{TstaM}, in combination of \eqref{ompm1} and \eqref{ompm2} one obtains
\begin{align}\label{}
\omega(z)_-=&\left( [\omega(z)+\hat{\omega}(z)]_{-} - \hat{\omega}(z)_{-} \right)_{\ge-m+1}=\left(\frac{\xi(z)}{a'(z)}
 - \left(\frac{\xi(z)-\hat{\xi}(z)}{\zeta'(z)}\right)_{-} \right)_{\ge-m+1}, \label{ompm3} \\
\hat\omega(z)_+=&\left( [\omega(z)+\hat{\omega}(z)]_{+} - \omega(z)_{+} \right)_{\le n}=\left(-\frac{\hat{\xi}(z)}{\hat{a}'(z)}
+ \left(\frac{\xi(z)-\hat{\xi}(z)}{\zeta'(z)}\right)_{+} \right)_{\le n}. \label{ompm4}
\end{align}
These equalities together with \eqref{ompm1} mean that the cotangent vector $(\omega(z),\hat\omega(z))$ can be uniquely solved.
Thus the proof of Lemma \ref{thm-etabi} is completed.
\end{proof}

From the above proof, one also knows that

\begin{cor}\label{thm-etainv}
For any vector $(\xi(z),\hat{\xi}(z))\in T_{\vec{a}}\mathcal{M}_{m,n}$, it holds that
\begin{equation}\label{etainv}
\eta^{-1}\left(\xi(z),\hat{\xi}(z)\right)=\left( \left(\frac{\xi(z)}{a'(z)} - \frac{\xi(z)-\hat{\xi}(z)}{\zeta'(z)}  \right)_{\ge-m+1},   \left(-\frac{\hat{\xi}(z)}{\hat{a}'(z)} +  \frac{\xi(z)-\hat{\xi}(z)}{\zeta'(z)}  \right)_{\le n}  \right).
\end{equation}
\end{cor}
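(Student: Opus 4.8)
The plan is to obtain \eqref{etainv} by simply reassembling the four pieces $\omega(z)_{\pm}$ and $\hat\omega(z)_{\pm}$ that were already computed during the proof of Lemma~\ref{thm-etabi}. Fix a tangent vector $(\xi(z),\hat\xi(z))\in T_{\vec{a}}\mathcal{M}_{m,n}$ and let $(\omega(z),\hat\omega(z))=\eta^{-1}(\xi(z),\hat\xi(z))$, which is well defined by Lemma~\ref{thm-etabi}. Abbreviate $g(z)=\frac{\xi(z)-\hat\xi(z)}{\zeta'(z)}$. Then \eqref{ompm1} reads $\omega(z)_{+}=-g(z)_{+}$ and $\hat\omega(z)_{-}=g(z)_{-}$, while \eqref{ompm3} and \eqref{ompm4} read $\omega(z)_{-}=\bigl(\xi(z)/a'(z)-g(z)_{-}\bigr)_{\ge-m+1}$ and $\hat\omega(z)_{+}=\bigl(-\hat\xi(z)/\hat{a}'(z)+g(z)_{+}\bigr)_{\le n}$. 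The corollary then amounts to adding $\omega(z)_{-}$ to $\omega(z)_{+}$, adding $\hat\omega(z)_{-}$ to $\hat\omega(z)_{+}$, and recognizing the two resulting sums as the right-hand sides of \eqref{etainv}.

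The only point that needs care is that the two truncations occurring in each component can be merged into one; this rests on a degree count already visible in the proof of Lemma~\ref{thm-etabi}. Since $\xi(z)\in(z-\vp)^{m-2}\mH^-_\vp$ carries only powers $\le m-2$ in $(z-\vp)$ while $1/a'(z)\in(z-\vp)^{-m+1}\mH^-_\vp$ carries only powers $\le-m+1$, the product $\xi(z)/a'(z)$, and hence $\xi(z)/a'(z)-g(z)_{-}$, carries only powers $\le-1$; therefore
\[
\bigl(\xi(z)/a'(z)-g(z)\bigr)_{\ge-m+1}=\bigl(\xi(z)/a'(z)-g(z)_{-}\bigr)_{\ge-m+1}-g(z)_{+}=\omega(z)_{-}+\omega(z)_{+}=\omega(z),
\]
where the first equality uses that $g(z)_{+}$ consists of powers $\ge0\ge-m+1$ and so survives $(\cdot)_{\ge-m+1}$ intact. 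Symmetrically, $\hat\xi(z)\in(z-\vp)^{-n-1}\mH^+_\vp$ and $1/\hat{a}'(z)\in(z-\vp)^{n+1}\mH^+_\vp$ force $\hat\xi(z)/\hat{a}'(z)$, and hence $-\hat\xi(z)/\hat{a}'(z)+g(z)_{+}$, to carry only powers $\ge0$, whence
\[
\bigl(-\hat\xi(z)/\hat{a}'(z)+g(z)\bigr)_{\le n}=g(z)_{-}+\bigl(-\hat\xi(z)/\hat{a}'(z)+g(z)_{+}\bigr)_{\le n}=\hat\omega(z)_{-}+\hat\omega(z)_{+}=\hat\omega(z),
\]
using that $g(z)_{-}$ consists of powers $\le-1\le n$. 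The two displayed identities are exactly \eqref{etainv}.

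Since the whole argument is bookkeeping with the projections $(\cdot)_{\pm}$, $(\cdot)_{\ge-m+1}$ and $(\cdot)_{\le n}$, I do not expect a genuine obstacle here. The only places one can slip are in tracking which projection is applied to which summand and in verifying the two elementary degree estimates above---that $\xi(z)/a'(z)$ has only powers $\le-1$ and $\hat\xi(z)/\hat{a}'(z)$ has only powers $\ge0$---which are precisely what allow the $\zeta'$-term to be absorbed under a single truncation together with the $a'$- and $\hat{a}'$-terms.
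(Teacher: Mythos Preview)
Your proof is correct and follows exactly the route the paper intends: the paper simply says the corollary is read off from the proof of Lemma~\ref{thm-etabi}, and you have spelled out the bookkeeping of combining \eqref{ompm1} with \eqref{ompm3}--\eqref{ompm4}. One small remark: the degree estimates you highlight for $\xi(z)/a'(z)$ and $\hat\xi(z)/\hat a'(z)$ are true but not actually needed here---the merging of truncations in your two displays uses only that $g(z)_{+}$ has powers $\ge 0\ge -m+1$ and $g(z)_{-}$ has powers $\le -1\le n$, which holds by definition of $(\cdot)_{\pm}$, together with linearity of the projections.
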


\subsection{The flat metric}

With the help of the bijection $\eta$ in \eqref{eta}, let us introduce a bilinear form on the tangent space $T_{\vec{a}}\mathcal{M}_{m,n}$ as
\begin{equation}\label{metric}
\langle \p_1,\p_2\rangle_{\eta}:=\la
  \eta^{-1}(\p_1),\p_2\ra=\la \eta^{-1}(\p_1),\eta^{-1}(\p_2)\ra^\ast, \quad  \p_1,\p_2\in T_{\vec{a}}\mathcal{M}_{m,n}.
\end{equation}

\begin{lem}\label{thm-metric}For any vectors $\p_1, \p_2\in T_{\vec{a}}\mathcal{M}_{m,n}$,
the bilinear form \eqref{metric} is given by
\begin{equation}\label{}
 \langle \p_1,\p_2\rangle_{\eta}=
-\frac{1}{2\pi\mathrm{i}}\oint_{\Gamma}\frac{\p_{1}\zeta(z) \cdot\p_{2}\zeta(z)}{\zeta'(z)}dz
-\left(\res_{z=\infty}+\res_{z=\varphi}\right)\frac{\p_{1}\ell(z)\cdot \p_{2}\ell(z)}{\ell'(z)}dz.
\end{equation}
\end{lem}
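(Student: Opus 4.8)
The plan is to compute $\langle \p_1,\p_2\rangle_\eta = \langle \eta^{-1}(\p_1),\p_2\rangle$ directly from the explicit formula for $\eta^{-1}$ in Corollary~\ref{thm-etainv} and the pairing \eqref{pairing}, then massage the resulting contour integral into the two residue-type terms. Write $\p_k = (\xi_k(z),\hat\xi_k(z))$ for $k=1,2$, and note that by \eqref{zetaell} we have $\p_k\zeta = \xi_k - \hat\xi_k$ while $\p_k\ell = (\xi_k)_+ + (\hat\xi_k)_-$ (since $\ell = a_+ + \hat a_-$ and these projections commute with $\p_k$). By Corollary~\ref{thm-etainv},
\[
\eta^{-1}(\p_1) = \left( \Bigl(\tfrac{\xi_1}{a'} - \tfrac{\p_1\zeta}{\zeta'}\Bigr)_{\ge -m+1},\ \Bigl(-\tfrac{\hat\xi_1}{\hat a'} + \tfrac{\p_1\zeta}{\zeta'}\Bigr)_{\le n} \right),
\]
so the pairing becomes
\[
\langle\p_1,\p_2\rangle_\eta = \frac{1}{2\pi\mathrm{i}}\oint_\Gamma \Bigl[ \Bigl(\tfrac{\xi_1}{a'} - \tfrac{\p_1\zeta}{\zeta'}\Bigr)_{\ge -m+1}\xi_2 + \Bigl(-\tfrac{\hat\xi_1}{\hat a'} + \tfrac{\p_1\zeta}{\zeta'}\Bigr)_{\le n}\hat\xi_2 \Bigr]\,dz.
\]
The first step is to remove the truncations $(\cdot)_{\ge -m+1}$ and $(\cdot)_{\le n}$: since $\xi_2 \in (z-\vp)^{m-2}\mH_\vp^-$ consists only of powers $\le m-2$, pairing it under $\oint_\Gamma$ against a series kills all contributions from powers $\ge -m+1$ of the other factor except finitely many, and one checks the removed part $\bigl(\tfrac{\xi_1}{a'}-\tfrac{\p_1\zeta}{\zeta'}\bigr)_{\le -m}$ paired with $\xi_2$ integrates to zero because the product then has only nonpositive powers strictly below $(z-\vp)^{-2}$ — i.e. no residue; symmetrically for the hatted term with $\hat\xi_2\in(z-\vp)^{-n-1}\mH_\vp^+$. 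Here I would use $1/a' \in (z-\vp)^{-m+1}\mH_\vp^-$ and $1/\hat a'\in (z-\vp)^{n+1}\mH_\vp^+$ (the expansions quoted just before \eqref{ompm2}) to see that $\xi_1/a'$ and $\hat\xi_1/\hat a'$ land in the right spaces.

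Once the truncations are gone, the integrand splits as $\tfrac{1}{2\pi\mathrm i}\oint_\Gamma\bigl[\tfrac{\xi_1\xi_2}{a'} - \tfrac{\p_1\zeta}{\zeta'}\xi_2 - \tfrac{\hat\xi_1\hat\xi_2}{\hat a'} + \tfrac{\p_1\zeta}{\zeta'}\hat\xi_2\bigr]dz$. Combine the two middle terms into $-\tfrac{\p_1\zeta}{\zeta'}(\xi_2-\hat\xi_2) = -\tfrac{\p_1\zeta\,\p_2\zeta}{\zeta'}$, which is exactly the first asserted term (the $\oint_\Gamma$ term). It remains to show
\[
\frac{1}{2\pi\mathrm i}\oint_\Gamma\Bigl(\frac{\xi_1\xi_2}{a'} - \frac{\hat\xi_1\hat\xi_2}{\hat a'}\Bigr)dz = -\Bigl(\res_{z=\infty}+\res_{z=\vp}\Bigr)\frac{\p_1\ell\,\p_2\ell}{\ell'}\,dz.
\]
For this I would change the contour: $\Gamma$ separates $z=\vp$ (inside) from $z=\infty$ (outside). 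Using $a' = \ell' + \zeta'_-$... more cleanly, observe that $\xi_k$ and $\p_k\ell$ differ by $(\hat\xi_k)_-$, and on the function $\ell$ we have $a' = \ell'$ modulo the piece $\p_k\hat a$; the cleanest route is to write $\tfrac{\xi_1\xi_2}{a'}$ as a rational-in-expansion object whose only singularities inside/near $\Gamma$ are at $z=\vp$, deform $\Gamma$ outward to pick up $-\res_{z=\infty}$ of $\tfrac{\xi_1\xi_2}{a'}$, and note $a'\sim m z^{m-1}$ and $\xi_k = \p_k\ell + O((z-\vp)^{-1})$ near $\infty$ so that $\res_\infty \tfrac{\xi_1\xi_2}{a'} = \res_\infty\tfrac{\p_1\ell\,\p_2\ell}{\ell'}$; symmetrically, deform $\Gamma$ inward around $z=\vp$ for the hatted term, where $\hat a' = \ell'$ exactly in the relevant expansion and $\hat\xi_k = \p_k\ell$ modulo $(z-\vp)$-positive terms that do not affect the residue at $\vp$. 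Carefully tracking the orientation (the inner deformation reverses sign) yields the two residue terms with the correct overall minus sign.

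\textbf{Main obstacle.} The delicate point is the bookkeeping in the second step: the functions $\xi_k$, $\hat\xi_k$, $\ell$, $a$, $\hat a$ live in different annuli/disks with different natural expansions, so replacing $\xi_k$ by $\p_k\ell$ inside a residue, and replacing $a'$, $\hat a'$ by $\ell'$, is only legitimate after verifying that the discarded terms are regular (or of the wrong power) at the relevant pole. In particular one must check that $\res_{z=\vp}\tfrac{\xi_1\xi_2}{a'}=0$ and $\res_{z=\infty}\tfrac{\hat\xi_1\hat\xi_2}{\hat a'}=0$ so that the contour $\Gamma$ can legitimately be collapsed onto just one of the two punctures for each summand — this uses (C1) ($\hat a_{-n}\ne0$, so $\hat a'$ has a genuine order-$(n+1)$ zero at $\vp$ contributing a pole to $1/\hat a'$ there, not at $\infty$) and the pole structure \eqref{veca}. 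Handling the logarithmic/constant ambiguity hidden in the "$+$" and "$-$" projections defining $\ell$ (the $t_0$-type term) is where I would be most careful; everything else is a routine residue computation.
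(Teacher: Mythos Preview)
Your approach is essentially the same as the paper's: both compute $\langle\eta^{-1}\p_1,\p_2\rangle$ and split the integrand into a $\zeta$-piece and an $\ell$-piece. The paper decomposes via the intermediate formulas \eqref{ompm1}--\eqref{ompm2} into three integrals $\mathcal{I}_1,\mathcal{I}_2,\mathcal{I}_3$, while you start from Corollary~\ref{thm-etainv} and strip the truncations first; these amount to the same thing, and your truncation-removal argument is correct.

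The only genuine soft spot is your justification of the final identity
\[
\frac{1}{2\pi\mathrm{i}}\oint_\Gamma\frac{\xi_1\xi_2}{a'}\,dz \;=\; -\res_{z=\infty}\frac{\p_1\ell\,\p_2\ell}{\ell'}\,dz
\]
(and its hatted counterpart). Framing this as an analytic contour deformation is not safe: condition~(C2) only guarantees $a'\ne0$ on $\Gamma$ itself, so $a'$ may well vanish between $\Gamma$ and $\infty$, and then $\xi_1\xi_2/a'$ acquires extra poles that spoil the deformation. Likewise ``$\res_{z=\vp}\tfrac{\xi_1\xi_2}{a'}=0$'' is not really the relevant check, since $\xi_k$ and $1/a'$ are only defined as convergent series outside $\Gamma$.

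The paper sidesteps this entirely by working with formal Laurent series in $(z-\vp)$ and using order estimates: it records that near $\infty$ one has $\p_\nu a - \p_\nu\ell \in (z-\vp)^{-1}\mH_\vp^-$ and $\tfrac{1}{a'}-\tfrac{1}{\ell'}\in(z-\vp)^{-2m+1}\mH_\vp^-$, so replacing $\p_\nu a\to\p_\nu\ell$ and $a'\to\ell'$ inside $\bigl(\cdot\bigr)_{\ge -m+1}$ changes nothing, and only at the very last step---when the integrand is $\bigl(\tfrac{\p_1\ell}{\ell'}\bigr)_{\ge-m+1}\p_2\ell$, a genuine rational function with its only pole at $\vp$---does one identify $\tfrac{1}{2\pi\mathrm{i}}\oint_\Gamma$ with $-\res_{z=\infty}$. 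The analogous estimates near $\vp$ handle the hatted term. If you replace your contour-deformation paragraph with these order checks, your argument goes through and is essentially identical to the paper's.
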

\begin{proof}
Suppose that $\p_\nu$ are identified with $(\xi_\nu(z), \hat{\xi}_\nu(z))$ for $\nu=1,2$, it is easy to see
\begin{equation*}
\p_{\nu} \zeta(z)=\xi_{\nu}(z)-\hat{\xi}_{\nu}(z), \quad \p_{\nu} \ell(z)=\xi_{\nu}(z)_{+}+\hat{\xi}_{\nu}(z)_{-}, \qquad \nu=1,2.
\end{equation*}
Denote
\[
\eta^{-1}(\p_1)=(\om(z), \hat\om(z)),
\]
then, according to the definition \eqref{metric}, one has
$$
    \begin{aligned}
 \langle \p_1,\p_2\rangle_{\eta}&=
    \frac{1}{2\pi\mathrm{i}}\oint_{\Gamma}\left(\omega(z)\xi_{2}(z)+\hat{\omega}(z)\hat{\xi}_{2}(z)\right)dz=\mathcal{I}_{1}+\mathcal{I}_{2}+\mathcal{I}_{3},
    \end{aligned}
$$
    where
$$
    \begin{aligned}
\mathcal{I}_{1}&=\frac{1}{2\pi\mathrm{i}}\oint_{\Gamma}\left(\omega(z)_{+}-\hat{\omega}(z)_{-}\right)\left(\xi_{2}(z)-\hat{\xi}_{2}(z)\right)dz,\\
\mathcal{I}_{2}&=\frac{1}{2\pi\mathrm{i}}\oint_{\Gamma}\left(\omega(z)_{-}+\hat{\omega}(z)_{-}\right)\xi_{2}(z)dz,\\
\mathcal{I}_{3}&=\frac{1}{2\pi\mathrm{i}}\oint_{\Gamma}\left(\omega\left(z\right)_{+}+\hat{\omega}\left(z\right)_{+}\right)\hat{\xi}_{2}\left(z\right)dz.
    \end{aligned}
$$
Let us proceed to calculate these integrals. Firstly, by using \eqref{ompm1} one has
    $$\begin{aligned}
\mathcal{I}_{1}&=-\frac{1}{2\pi\mathrm{i}} \oint_{\Gamma}\frac{(\xi_{1}(z)-\hat{\xi}_{1}(z))(\xi_{2}(z)-\hat{\xi}_{2}(z))}{\zeta'(z)}d z
    =-\frac{1}{2\pi\mathrm{i}}\oint_{\Gamma}\frac{\p_{1}\zeta(z)\cdot \p_{2}\zeta(z)}{\zeta'(z)}d z.
    \end{aligned}
    $$
Secondly, observe that when $|z|\to\infty$ and $\nu=1,2$, one has
\begin{align*}
\p_\nu a(z), \p_\nu \ell(z)\in (z-\vp)^{m-2}\mH_\vp^-;& \quad \p_\nu a(z)-\p_\nu \ell(z)\in (z-\vp)^{-1}\mH_\vp^-; \\
\frac{1}{a'(z)}, \frac{1}{\ell'(z)} \in (z-\vp)^{-m+1}\mH_\vp^-;& \quad \frac{1}{a'(z)}-\frac{1}{\ell'(z)} \in (z-\vp)^{-2 m+1}\mH_\vp^-.
\end{align*}
With the use of \eqref{ompm2} and $\ell(z)=a(z)_++\hat{a}(z)_-$,  one obtains
\begin{align*}
\mathcal{I}_{2}&=\frac{1}{2\pi\mathrm{i}}\oint_{\Gamma} \left( [\omega(z)+\hat{\omega}(z)]_{-} \right)_{\ge-m+1}\xi_{2}(z)\,dz \\
&=\frac{1}{2\pi\mathrm{i}}\oint_{\Gamma} \left(\frac{\p_1 a(z)}{a'(z)}\right)_{\ge-m+1}\p_2 a(z)\,dz \\
&=\frac{1}{2\pi\mathrm{i}}\oint_{\Gamma} \left(\frac{\p_1 \ell(z)}{a'(z)}\right)_{\ge-m+1}\p_2 \ell(z) \,dz \\
&=\frac{1}{2\pi\mathrm{i}}\oint_{\Gamma} \left(\frac{\p_1 \ell(z)}{\ell'(z)}\right)_{\ge-m+1}\p_2 \ell(z) \,dz \\
&=-\res_{z=\infty}\frac{\p_{1}\ell(z)\cdot \p_{2}\ell(z)}{\ell'(z)}\,dz.
\end{align*}
Similarly, when $|z-\vp|\to0$ and $\nu=1,2$, one has
\begin{align*}
\p_\nu \hat{a}(z), \p_\nu \ell(z)\in (z-\vp)^{-n-1}\mH_\vp^+;& \quad \p_\nu \hat{a}(z)-\p_\nu \ell(z)\in \mH_\vp^+; \\
\frac{1}{\hat{a}'(z)}, \frac{1}{\ell'(z)} \in (z-\vp)^{n+1}\mH_\vp^+;& \quad \frac{1}{\hat{a}'(z)}-\frac{1}{\ell'(z)} \in (z-\vp)^{2n+2}\mH_\vp^+
\end{align*}
and
\begin{align*}
\mathcal{I}_{3}&=\frac{1}{2\pi\mathrm{i}}\oint_{\Gamma} \left([\omega(z)+\hat{\omega}(z)]_{+}\right)_{\le n} \hat{\xi}_{2}(z)\,dz\\
&=-\frac{1}{2\pi\mathrm{i}}\oint_{\Gamma} \left(\frac{\p_1\hat{a}(z)}{\hat{a}'(z)}\right)_{\le n} \p_2\hat{a}(z)\,dz\\
&=-\frac{1}{2\pi\mathrm{i}}\oint_{\Gamma} \left(\frac{\p_1\ell(z)}{\hat{a}'(z)}\right)_{\le n} \p_2\ell(z)\,dz\\
&=-\frac{1}{2\pi\mathrm{i}}\oint_{\Gamma} \left(\frac{\p_1\ell(z)}{\ell'(z)}\right)_{\le n} \p_2\ell(z)\,dz\\
&=-\res_{z=\varphi}\frac{\p_{1}\ell(z)\cdot \p_{2}\ell(z)}{\ell'(z)}\,dz.
    \end{align*}
Therefore, the lemma is proved.
\end{proof}

The bilinear form \eqref{metric} can be viewed as a metric on the manifold $\mathcal{M}_{m,n}$ (not necessary to be positively definite).
We proceed to show that this metric is a flat one.
According to the condition (C3) in Section~\ref{sec-M}, we can consider the inverse function of $\zeta(z)$, $i.e.$,
\[
z=z(\zeta): \Sigma\to\Gamma.
\]
This function can be extended holomorphicly to a neighborhood of $\Sigma$ that surrounds $\zeta=0$. Let us take the Laurent expansion
\begin{equation}\label{zzeta}
{z}(\zeta)=\sum_{i\in\Z}t_{i}\zeta^{i}, \quad \zeta\in\Sigma,
\end{equation}
where
\begin{equation}\label{ti}
t_{i}=\frac{1}{2\pi\mathrm{i}}\oint_{\Sigma} {z}(\zeta)\zeta^{-i-1}d\zeta, \quad
  i\in \mathbb{Z}.
\end{equation}
On the other hand, we introduce the following two functions in a punctured neighborhood of $\infty$ and of $\vp$ respectively,
\begin{align}\label{chi}
&\chi(z):=\ell(z)^{\frac{1}{m}}=z+\chi_{1}z^{-1}+\chi_{2}z^{-2}+\cdots \quad \hbox{near}\quad  \infty; 
\\
&\hat{\chi}(z):=\ell(z)^{\frac{1}{n}}={\hat{a}_{-n}}^{\frac{1}{n}}(z-\vp)^{-1}+\hat{\chi}_{0} +\hat{\chi}_{1}(z-\vp)
+\cdots \quad \hbox{near}\quad \vp. \label{chih}
\end{align}
The inverse functions of them can be represented as
\begin{align}\label{}
z(\chi)=&\chi-h_{1}\chi^{-1}-h_{2}\chi^{-2}-\cdots-h_{m-1}\chi^{-m+1}+\mathcal{O}(\chi^{-m}), \quad |\chi|\to\infty; \\
z(\hat{\chi})=& \hat{h}_0+\hat{h}_{1}\hat{\chi}^{-1}+\hat{h}_{2}\hat{\chi}^{-2} +\cdots+\hat{h}_{n}\hat{\chi}^{-n}
+\mathcal{O}(\hat{\chi}^{-n-1}), \quad |\hat\chi|\to\infty.
\end{align}
In particular, one has $\hat{h}_0=\vp$. It is easy to see that the  correspondence
\begin{equation}\label{}
(\zeta(z), \ell(z)) \longleftrightarrow \{t_{i}\}_{i\in \mathbb{Z}}\cup\{h_{j}\}_{j=1}^{m-1}\cup\{\hat{h}_{k}\}_{k=0}^{n}
\end{equation}
is one-to-one.
In other words, we obtain a system of coordinates of $\mathcal{M}_{m,n}$. Clearly, such coordinates can be represented equivalently as
 \eqref{flatt0}--\eqref{flathh0}, and in what follows we will use the notations $\mathbf{t}$, $\mathbf{h}$ and $\hat{\mathbf{h}}$ given in \eqref{flat0}.
Moreover, unless
otherwise stated, the following convention of indices will be assumed below
\[
i, i_1, i_2, i_3\in\Z,\quad j, j_1,j_2,j_3\in\{1,2,\ldots,m-1\},
\quad k,k_1,k_2,k_3\in\{0,1,\ldots,n\}.
\]

\begin{prop}\label{thm-flatmetric}
The metric defined by \eqref{metric} is flat, and $\mathbf{t}\cup\mathbf{h}\cup\hat{\mathbf{h}}$ of $\mathcal{M}_{m,n}$ serves as a system of flat coordinates. More precisely, these flat coordinates satisfy
\begin{align}
    \left\langle \frac{\p}{\p t_{i_{1}}},\frac{\p}{\p t_{i_{2}}}\right\rangle_{\eta}&= -\delta_{-1,i_{1}+i_{2}},\\
    \left\langle \frac{\p}{\p h_{j_{1}}},\frac{\p}{\p h_{j_{2}}}\right\rangle_{\eta}&= m\delta_{m,j_{1}+j_{2}},\\
    \left\langle \frac{\p}{\p \hat{h}_{k_{1}}},\frac{\p}{\p \hat{h}_{k_{2}}}\right\rangle_{\eta}&= n\delta_{n,k_{1}+k_{2}},
\end{align}
and
\begin{align}
    \left\langle \frac{\p}{\p t_{i}},\frac{\p}{\p h_{j}}\right\rangle_{\eta}&=\left\langle
    \frac{\p}{\p t_{i }},\frac{\p}{\p \hat{h}_{k}}\right\rangle_{\eta} =
    \left\langle \frac{\p}{\p h_{j }},\frac{\p}{\p \hat{h}_{k }}\right\rangle_{\eta}=0.
    \end{align}
\end{prop}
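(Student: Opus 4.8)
The plan is to reduce the statement to the explicit formula of Lemma~\ref{thm-metric}, which writes $\langle\,\cdot\,,\cdot\,\rangle_\eta$ as the sum of a contour integral over $\Gamma$ built from $\zeta$ and its $z$-derivative, and two residues (at $\infty$ and at $\vp$) built from $\ell$ and its $z$-derivative. The structural point that makes the computation work is that, in the coordinates $\mathbf{t}\cup\mathbf{h}\cup\hat{\mathbf{h}}$, the function $\zeta(z)$ is determined by $\mathbf{t}$ alone---its inverse being $z(\zeta)=\sum_{i\in\Z}t_i\zeta^i$ near $\Sigma$---while $\ell(z)$ is determined by $\mathbf{h}\cup\hat{\mathbf{h}}$ alone---its inverses near $\infty$ and near $\vp$ being the series $z(\chi)$ and $z(\hat\chi)$---the whole correspondence being one-to-one as already observed. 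Hence $\p_{t_i}\ell(z)=0$ and $\p_{h_j}\zeta(z)=\p_{\hat h_k}\zeta(z)=0$ for all indices, so that a pairing involving a $t$-derivative receives a contribution only from the $\zeta$-integral, and a pairing involving an $h$- or $\hat h$-derivative only from the two $\ell$-residues. In particular $\langle\p/\p t_i,\p/\p h_j\rangle_\eta$ and $\langle\p/\p t_i,\p/\p\hat h_k\rangle_\eta$ vanish immediately.

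Next I would record the elementary identities obtained by differentiating $\zeta(z(\zeta))=\zeta$, $\chi(z(\chi))=\chi$ and $\hat\chi(z(\hat\chi))=\hat\chi$ along $\mathcal{M}_{m,n}$ with $\zeta$, $\chi$, $\hat\chi$ held fixed, namely $\p\zeta(z)=-\zeta'(z)\,\p z(\zeta)$, together with $\p\ell(z)=-m\chi(z)^{m-1}\chi'(z)\,\p z(\chi)$ near $\infty$ and $\p\ell(z)=-n\hat\chi(z)^{n-1}\hat\chi'(z)\,\p z(\hat\chi)$ near $\vp$, valid for every tangent vector $\p$. Substituting these into the three terms of Lemma~\ref{thm-metric} and changing the integration variable---by condition (C3) the map $z\mapsto\zeta(z)$ carries $\Gamma$ biholomorphicly onto $\Sigma$, while $\chi$ and $\hat\chi$ serve as local coordinates at $\infty$ and at $\vp$, so that residues are preserved---turns the metric into
\begin{multline*}
\langle\p_1,\p_2\rangle_\eta=-\frac{1}{2\pi\mathrm{i}}\oint_\Sigma\p_1 z(\zeta)\,\p_2 z(\zeta)\,d\zeta\\
-\res_{\chi=\infty}m\chi^{m-1}\,\p_1 z(\chi)\,\p_2 z(\chi)\,d\chi\\
-\res_{\hat\chi=\infty}n\hat\chi^{n-1}\,\p_1 z(\hat\chi)\,\p_2 z(\hat\chi)\,d\hat\chi.
\end{multline*}

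It then remains to feed in the coordinate vector fields and read off the residues. For $\p_1=\p/\p t_{i_1}$, $\p_2=\p/\p t_{i_2}$ only the first integral survives and equals $-\frac{1}{2\pi\mathrm{i}}\oint_\Sigma\zeta^{i_1+i_2}\,d\zeta=-\delta_{-1,i_1+i_2}$. For $\p_1=\p/\p h_{j_1}$, $\p_2=\p/\p h_{j_2}$ the first integral vanishes; since $\p_{h_j}z(\chi)=-\chi^{-j}+\mathcal{O}(\chi^{-m})$ with $1\le j\le m-1$, the corrections to the leading term $\chi^{-j_1-j_2}$ of the product, multiplied by $m\chi^{m-1}$, are $\mathcal{O}(\chi^{-2})$, so the second residue equals $-\res_{\chi=\infty}m\chi^{m-1-j_1-j_2}\,d\chi=m\,\delta_{m,j_1+j_2}$, while the third residue vanishes because $\p_{h_j}z(\hat\chi)=\mathcal{O}(\hat\chi^{-n-1})$. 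The symmetric computation for $\p_1=\p/\p\hat h_{k_1}$, $\p_2=\p/\p\hat h_{k_2}$, using $\p_{\hat h_k}z(\hat\chi)=\hat\chi^{-k}+\mathcal{O}(\hat\chi^{-n-1})$ with $0\le k\le n$ and $\p_{\hat h_k}z(\chi)=\mathcal{O}(\chi^{-m})$, gives $n\,\delta_{n,k_1+k_2}$. Finally $\langle\p/\p h_j,\p/\p\hat h_k\rangle_\eta=0$ by the same order count applied to both residue terms. All the resulting metric coefficients being constants, the metric is flat and $\mathbf{t}\cup\mathbf{h}\cup\hat{\mathbf{h}}$ is a system of flat coordinates.

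The hardest part will be the bookkeeping in this last step: one must verify that the tails $\mathcal{O}(\chi^{-m})$ of $\p_{h_j}z(\chi)$ and $\mathcal{O}(\hat\chi^{-n-1})$ of $\p_{\hat h_k}z(\hat\chi)$---whose coefficients are honest functions of all of $\mathbf{t}\cup\mathbf{h}\cup\hat{\mathbf{h}}$, not constants---never contribute at order $\chi^{-1}$, resp.\ $\hat\chi^{-1}$, after multiplication by $m\chi^{m-1}$, resp.\ $n\hat\chi^{n-1}$, which is exactly where the ranges $1\le j\le m-1$ and $0\le k\le n$ enter. A secondary point needing care is the precise change of variables in the contour integral and the residues of Lemma~\ref{thm-metric}, together with the sign conventions for $\res_{z=\infty}$ and $\res_{\chi=\infty}$.
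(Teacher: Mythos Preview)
Your argument is correct and complete; the order-counting you flag as ``hardest'' indeed works exactly as you say, since $1\le j\le m-1$ and $0\le k\le n$ force all tail contributions to land at order $\chi^{-2}$ or lower (resp.\ $\hat\chi^{-2}$ or lower) after multiplication by $m\chi^{m-1}$ (resp.\ $n\hat\chi^{n-1}$). One small inaccuracy: the tail coefficients of $\p_{h_j}z(\chi)$ and $\p_{\hat h_k}z(\hat\chi)$ depend only on $\mathbf{h}\cup\hat{\mathbf{h}}$, not on $\mathbf{t}$, since $\ell$ is independent of $\mathbf{t}$; but this does not affect your computation.

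Your route differs from the paper's in one organizational respect. The paper stays in the $z$-variable throughout: it first derives closed formulas
\[
\p_{t_i}\zeta=-\zeta^i\zeta',\qquad \p_{h_j}\ell=(\ell'\chi^{-j})_+,\qquad \p_{\hat h_k}\ell=-(\ell'\hat\chi^{-k})_-
\]
via the same implicit-function differentiation you use, then substitutes these into Lemma~\ref{thm-metric} and evaluates each residue in $z$ by manipulating the $(\cdot)_\pm$ projections (e.g.\ writing $(\ell'\chi^{-j_1})_+=\ell'\chi^{-j_1}-(\ell'\chi^{-j_1})_-$ and discarding terms that are regular at $\infty$). You instead perform the change of variables $z\mapsto\zeta$, $z\mapsto\chi$, $z\mapsto\hat\chi$ in the three terms of Lemma~\ref{thm-metric} first, obtaining a ``diagonalized'' expression for the metric directly in terms of the inverse series $z(\zeta)$, $z(\chi)$, $z(\hat\chi)$, from which the constants are read off by inspection. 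Your approach is somewhat cleaner for the purpose at hand and avoids the projection bookkeeping; the paper's approach has the side benefit of producing the explicit formulas for $\p_{h_j}\ell$ and $\p_{\hat h_k}\ell$ as elements of $(z-\vp)^{-n-1}\mH_\vp^+$, which it relies on later (Corollaries~\ref{thm-flatv}--\ref{thm-flatcov} and the computation of the $3$-tensor).
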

\begin{proof}
For preparation, let us show the following equalities:
\begin{align}
&\frac{\p\zeta(z)}{\p t_{i}}=-\zeta(z)^{i}\zeta'(z), \quad \frac{\p\ell(z)}{\p t_{i}}=0, \label{zetaellt}\\
&\frac{\p\zeta(z)}{\p h_{j}}=0, \quad  \frac{\p\ell(z)}{\p h_{j}}=(\ell'(z)\chi(z)^{-j})_{+}, \label{zetaellh}\\
&\frac{\p\zeta(z)}{\p \hat{h}_{k}}=0, \quad \frac{\p\ell(z)}{\p \hat{h}_{k}}=-(\ell'(z)\hat{\chi}(z)^{-k})_{-}. \label{zetaellhh}
\end{align}
Firstly, the inverse function  \eqref{zzeta} of $\zeta(z)$ satisfies $z=\left.z(\zeta)\right|_{\zeta\mapsto\zeta(z)}$.
 Taking its derivative with respective to $t_i$, one has
\begin{align*}
0=\left.\frac{\p z(\zeta)}{\p
t_i}\right|_{\zeta\mapsto\zeta(z)}+\left.z'(\zeta)\right|_{\zeta\mapsto\zeta(z)}
\frac{\p \zeta(z)}{\p t_i},
\end{align*}
which leads to the first equality in \eqref{zetaellt}, namely,
\[
\frac{\p \zeta(z)}{\p t_i}=-\left.\frac{1}{z'(\zeta)}\right|_{\zeta\mapsto\zeta(z)}\zeta(z)^i=-\zeta'(z)\zeta(z)^i.
\]
Similarly, from \eqref{chi} one has
\begin{align*}
& \frac{\p \chi(z)}{\p
h_j}=\chi'(z)\cdot\left.\left(\chi^{-j}+\mathcal{O}(\chi^{-m})\right)\right|_{\chi\mapsto\chi(z)}
=\chi'(z)\left(\chi(z)^{-j}+\mathcal{O}(z^{-m})\right), \quad |z|\to\infty,
\end{align*}
hence
\begin{align*}
&\frac{\p \ell(z)}{\p h_j}=\frac{\p \ell(z)_+}{\p h_j}=
\left(m \chi(z)^{m-1}\frac{\p\chi(z)}{\p h_j} \right)_+ = \left(m \chi(z)^{m-1-j} \chi'(z)\right)_+=\left( \ell'(z)  \chi(z)^{-j}\right)_+,
\end{align*}
which is the second equality in \eqref{zetaellh}.
In the same way when $|z-\vp|\to0$, for \eqref{chih} one gets
\begin{align*}
& \frac{\p \hat\chi(z)}{\p
\hat{h}_k}= -\hat{\chi}'(z)\cdot\left.\left(\hat{\chi}^{-k}+O(\hat{\chi}^{-n-1})\right)\right|_{\hat{\chi}\mapsto\hat{\chi}(z)} =-\hat{\chi}'(z)\left(\hat{\chi}(z)^{-k}+O((z-\vp)^{n+1})\right),
\end{align*}
hence
\begin{align*}
&\frac{\p \ell(z)}{\p \hat{h}_k}=\frac{\p \ell(z)_-}{\p \hat{h}_k}=
\left(n \hat{\chi}(z)^{n-1}\frac{\p\hat{\chi}(z)}{\p \hat{h}_k} \right)_- = -\left(n \hat{\chi}(z)^{n-1-k} \hat{\chi}'(z)\right)_-=-\left( \ell'(z)  \hat{\chi}(z)^{-k}\right)_-,
\end{align*}
which is just the second equality in \eqref{zetaellhh}. The other cases in \eqref{zetaellt}--\eqref{zetaellhh} are trivial.

Now we are ready to verify the proposition based on Lemma~\ref{thm-metric}. We have that
\begin{align*}
    \left\la \frac{\p}{\p t_{i_{1}}},\frac{\p}{\p t_{i_{2}}}\right\ra_{\eta}&=-\frac{1}{2\pi\mathrm{i}}\oint_{\Gamma}\frac{\zeta^{i_{1}+i_{2}}\zeta'\zeta'}{\zeta'}dz=-\delta_{-1,i_{1}+i_{2}},\\
    \left\la \frac{\p}{\p h_{j_{1}}},\frac{\p}{\p h_{j_{2}}}\right\ra_{\eta}&=-\res_{z=\infty}\frac{(\ell'\chi^{-j_{1}})_{+}(\ell'\chi^{-j_{2}})_{+}}{\ell'}dz\\
    &=-\res_{z=\infty}\frac{(\ell'\chi^{-j_{1}}-(\ell'\chi^{-j_{1}})_{-})(\ell'\chi^{-j_{2}}-(\ell'\chi^{-j_{2}})_{-})}{\ell'}dz\\
    &=-\res_{z=\infty}\chi^{-j_{1}}(\ell'\chi^{-j_{2}})dz\\
    &=-m\res_{z=\infty}\chi^{-j_{1}}\chi^{m-1}\chi'\chi^{-j_{2}}dz=m\delta_{m,j_{1}+j_{2}},\\
    \left\la \frac{\p}{\p \hat{h}_{k_{1}}},\frac{\p}{\p \hat{h}_{k_{2}}}\right\ra_{\eta}&=-\res_{z=\vp}\frac{(\ell'\hat{\chi}^{-k_{1}})_{-}(\ell'\hat{\chi}^{-k_{2}})_{-}}{\ell'}dz\\
    &=-\res_{z=\vp}\frac{(\ell'\hat{\chi}^{-k_{1}}-(\ell'\hat{\chi}^{-k_{1}})_{+})(\ell'\hat{\chi}^{-k_{2}}-(\ell'\hat{\chi}^{-k_{2}})_{+})}{\ell'}dz\\
    &=-\res_{z=\vp}\hat{\chi}^{-k_{1}}(\ell'\hat{\chi}^{-k_{2}})dz\\
    &=-n\res_{z=\vp}\hat{\chi}^{-k_{1}}\hat{\chi}^{n-1}\hat{\chi}'\hat{\chi}^{-k_{2}}dz=n\delta_{n,k_{1}+k_{2}},
    \end{align*}
and that all other pairings of such vectors vanish. Thus the proposition is proved.
\end{proof}

The above proof together with \eqref{zetaell} also implies
\begin{cor}\label{thm-flatv}
In the tangent space $T_{\vec{a}}\mathcal{M}_{m,n}$, there is a basis consisting of vectors  represented as
\begin{align}
\frac{\p}{\p t_{i}}=(-(\zeta(z)^{i}\zeta'(z))_{-},(\zeta(z)^{i}\zeta'(z))_{+}),&\quad i\in\mathbb{Z}; \label{flatv1} \\
\frac{\p}{\p h_{j}}=((\ell'(z)\chi(z)^{-j})_{+},(\ell'(z)\chi(z)^{-j})_{+}),&\quad 1\le j\le m-1;\\
\frac{\p}{\p\hat{h}_{k}}=(-(\ell'(z)\hat{\chi}(z)^{-k})_{-},-(\ell'(z)\hat{\chi}(z)^{-k})_{-}),&\quad 0\le k\le n. \label{flatv3}
\end{align}
\end{cor}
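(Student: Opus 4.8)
The plan is to read off the action of each coordinate vector field on the pair $(a(z),\hat a(z))$ from its already-known action on the pair $(\zeta(z),\ell(z))$, using the inversion formulas that express $a,\hat a$ in terms of $\zeta,\ell$. Recall from the discussion following \eqref{zetaell} that every point of $\mathcal{M}_{m,n}$ is recovered from $(\zeta(z),\ell(z))$ by $a(z)=\zeta(z)_-+\ell(z)$ and $\hat a(z)=-\zeta(z)_++\ell(z)$. Under the identification \eqref{TM} a tangent vector $\p$ is exactly the pair $(\p a(z),\p\hat a(z))$, so differentiating these two relations and using that the projections $(\cdot)_\pm$ onto powers of $(z-\vp)$ commute with every coordinate derivation (including $\p/\p\vp$, since the $(z-\vp)$-expansion is unambiguous on a neighbourhood of $\Gamma$) we get
\[
\p a(z)=(\p\zeta(z))_-+\p\ell(z),\qquad \p\hat a(z)=-(\p\zeta(z))_++\p\ell(z).
\]
Thus the pair $(\p a,\p\hat a)$ is completely determined by the two series $\p\zeta(z)$ and $\p\ell(z)$.

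Next I would apply this to $\p=\p/\p t_i$, $\p/\p h_j$ and $\p/\p\hat h_k$, whose actions on $\zeta$ and on $\ell$ are precisely \eqref{zetaellt}--\eqref{zetaellhh} established inside the proof of Proposition~\ref{thm-flatmetric}. For $\p/\p t_i$ one has $\p\zeta=-\zeta^i\zeta'$ and $\p\ell=0$, hence $\p a=-(\zeta^i\zeta')_-$ and $\p\hat a=(\zeta^i\zeta')_+$, which is \eqref{flatv1}. For $\p/\p h_j$ one has $\p\zeta=0$ and $\p\ell=(\ell'\chi^{-j})_+$, so both components of the pair equal $(\ell'\chi^{-j})_+$. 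For $\p/\p\hat h_k$ one has $\p\zeta=0$ and $\p\ell=-(\ell'\hat\chi^{-k})_-$, so both components equal $-(\ell'\hat\chi^{-k})_-$. This yields the three displayed formulas. As a consistency check one may verify directly that each resulting pair lies in $T_{\vec a}\mathcal{M}_{m,n}$ as in \eqref{TM}: e.g. $\chi(z)=z+\mathcal{O}(z^{-1})$ and $\ell'(z)\sim m z^{m-1}$ near $\infty$ force $(\ell'\chi^{-j})_+\in (z-\vp)^{m-2}\mH^-_\vp$ for $1\le j\le m-1$, and symmetrically near $\vp$ for the $\hat h_k$ family, while $\zeta^i\zeta'$ is holomorphic near $\Gamma$; but this is in any case automatic once we know that these are the images of genuine coordinate vector fields.

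Finally, the assertion that these vectors constitute a basis (a coordinate frame) of $T_{\vec a}\mathcal{M}_{m,n}$ is nothing but the statement, established earlier, that the one-to-one correspondence $(\zeta(z),\ell(z))\leftrightarrow\{t_i\}\cup\{h_j\}\cup\{\hat h_k\}$ makes $\mathbf{t}\cup\mathbf{h}\cup\hat{\mathbf{h}}$ a system of coordinates on $\mathcal{M}_{m,n}$; their linear independence is moreover reconfirmed by the nondegenerate, constant, anti-diagonal Gram matrix computed in Proposition~\ref{thm-flatmetric}.

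I do not expect a genuine obstacle here: the corollary is a direct bookkeeping consequence of the formulas in the proof of Proposition~\ref{thm-flatmetric}. The only point deserving a moment's care is tracking the $(\cdot)_\pm$ projections together with the vanishings $\p\ell/\p t_i=0$ and $\p\zeta/\p h_j=\p\zeta/\p\hat h_k=0$, which are exactly what makes the two components of $\p/\p h_j$ and of $\p/\p\hat h_k$ coincide while those of $\p/\p t_i$ differ by the splitting of $\zeta^i\zeta'$ into its $(z-\vp)$-negative and nonnegative parts.
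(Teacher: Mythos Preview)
Your proposal is correct and follows exactly the route the paper itself indicates: the corollary is stated there as an immediate consequence of the computations \eqref{zetaellt}--\eqref{zetaellhh} inside the proof of Proposition~\ref{thm-flatmetric} together with the inversion formulas $a=\zeta_-+\ell$, $\hat a=-\zeta_++\ell$ coming from \eqref{zetaell}. Your only extra step is the observation that the projections $(\cdot)_\pm$ commute with the coordinate derivations, which is indeed harmless since the Cauchy--Laurent splitting of a function holomorphic on the band between $\Gamma$ and $\Gamma'$ is intrinsic and does not depend on the expansion point $\vp$.
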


Furthermore, this result together with Corollary~\ref{thm-etainv} leads to
\begin{cor}\label{thm-flatcov}
    In the cotangent space $T_{\vec{a}}^\ast\mathcal{M}_{m,n}$, there is a basis consisting of vectors represented as
    \begin{align}
    \eta^{-1} \frac{\p}{\p t_{i}}=((\zeta(z)^{i})_{\ge -m+1},-(\zeta(z)^{i})_{\le n}),&\quad i\in\mathbb{Z}; \label{flatcov1}\\
    \eta^{-1}\frac{\p}{\p h_{j}}=( (\chi(z)^{-j})_{\ge -m+1},0),&\quad 1\le j\le m-1; \label{flatcov2} \\
    \eta^{-1}\frac{\p}{\p \hat{h}_{k}}=(0,(\hat{\chi}(z)^{-k})_{\le n}),&  \quad 0\le k\le n. \label{flatcov3}
    \end{align}
\end{cor}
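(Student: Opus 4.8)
The plan is to obtain \eqref{flatcov1}--\eqref{flatcov3} by simply substituting the explicit tangent vectors of Corollary~\ref{thm-flatv} into the formula \eqref{etainv} for $\eta^{-1}$ and simplifying; every simplification is a degree count of Laurent expansions in $(z-\vp)$, using the information on $1/a'$ and $1/\hat a'$ (and their corrections relative to $1/\ell'$) already recorded in the proof of Lemma~\ref{thm-etabi}.

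First I would treat $\frac{\p}{\p t_i}=(-(\zeta^i\zeta')_-,(\zeta^i\zeta')_+)$. Setting $\xi=-(\zeta^i\zeta')_-$ and $\hat\xi=(\zeta^i\zeta')_+$ one has $\xi-\hat\xi=-\zeta^i\zeta'$, so $(\xi-\hat\xi)/\zeta'=-\zeta^i$, and \eqref{etainv} reduces the claim to the two vanishings $(\xi/a')_{\ge-m+1}=0$ and $(\hat\xi/\hat a')_{\le n}=0$. These hold because $\xi\in(z-\vp)^{-1}\mH^-_\vp$ while $1/a'\in(z-\vp)^{-m+1}\mH^-_\vp$, so $\xi/a'\in(z-\vp)^{-m}\mH^-_\vp$; and $\hat\xi\in\mH^+_\vp$ while $1/\hat a'\in(z-\vp)^{n+1}\mH^+_\vp$, so $\hat\xi/\hat a'\in(z-\vp)^{n+1}\mH^+_\vp$. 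This leaves exactly $((\zeta^i)_{\ge-m+1},-(\zeta^i)_{\le n})$.

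Next, for $\frac{\p}{\p h_j}$ and $\frac{\p}{\p\hat h_k}$ the two components of the tangent vector coincide, so $\xi-\hat\xi=0$ and \eqref{etainv} collapses to $((\xi/a')_{\ge-m+1},(-\xi/\hat a')_{\le n})$. For $\frac{\p}{\p h_j}$ with $\xi=(\ell'\chi^{-j})_+$ the second slot vanishes since $\xi\in\mH^+_\vp$, $1/\hat a'\in(z-\vp)^{n+1}\mH^+_\vp$; for the first slot I would first discard $(\ell'\chi^{-j})_-$ (its quotient by $a'$ again lies in $(z-\vp)^{-m}\mH^-_\vp$), then replace $a'$ by $\ell'$ using $1/a'-1/\ell'\in(z-\vp)^{-2m+1}\mH^-_\vp$, the error $\ell'\chi^{-j}(1/a'-1/\ell')$ sitting in $(z-\vp)^{-m-1}\mH^-_\vp$ because $\ell'\chi^{-j}=m\chi^{m-1-j}\chi'\in(z-\vp)^{-(m-1-j)}\mH^-_\vp$ near $\infty$; finally $\ell'\chi^{-j}/\ell'=\chi^{-j}$ gives $(\chi^{-j})_{\ge-m+1}$, hence \eqref{flatcov2}. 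The computation for $\frac{\p}{\p\hat h_k}$, $\xi=-(\ell'\hat\chi^{-k})_-$, is the mirror image near $z=\vp$, using $\ell'=n\hat\chi^{n-1}\hat\chi'$ and $1/\hat a'-1/\ell'\in(z-\vp)^{2n+2}\mH^+_\vp$, and yields \eqref{flatcov3}.

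Finally, since $\{\p/\p t_i\}\cup\{\p/\p h_j\}\cup\{\p/\p\hat h_k\}$ is a basis of $T_{\vec a}\mathcal M_{m,n}$ by Corollary~\ref{thm-flatv} and $\eta^{-1}$ is a linear bijection by Lemma~\ref{thm-etabi}, the listed covectors form a basis of $T^*_{\vec a}\mathcal M_{m,n}$; one checks directly that they indeed lie in $(z-\vp)^{-m+1}\mH^+_\vp\times(z-\vp)^{n}\mH^-_\vp$ as in \eqref{TstaM}. The only real work, and the one place to be careful, is the bookkeeping of powers of $(z-\vp)$: one must keep straight the expansion valid near $\Gamma$ versus the expansions encoding the behavior at $\infty$ and at $\vp$, and verify in each of the three families that every error term is supported strictly below order $-m+1$ (respectively strictly above order $n$) so that the truncations $(\cdot)_{\ge-m+1}$ and $(\cdot)_{\le n}$ annihilate it.
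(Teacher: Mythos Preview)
Your proposal is correct and follows essentially the same route as the paper: substitute the tangent vectors of Corollary~\ref{thm-flatv} into the explicit formula \eqref{etainv} and simplify by counting powers of $(z-\vp)$. The only difference is cosmetic. For $\p/\p h_j$ the paper uses the single identity $(\ell'\chi^{-j})_+=(a'\chi^{-j})_+$ (since $\ell'-a'=-(\zeta_-)'\in(z-\vp)^{-2}\mH^-_\vp$, so multiplying by $\chi^{-j}\in(z-\vp)^{-j}\mH^-_\vp$ kills the positive part), after which the $a'$ cancels directly; you instead undo the truncation and then swap $1/a'$ for $1/\ell'$ using $1/a'-1/\ell'\in(z-\vp)^{-2m+1}\mH^-_\vp$. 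Both manipulations are equivalent. One small slip: $\ell'\chi^{-j}=m\chi^{m-1-j}\chi'$ lies in $(z-\vp)^{m-1-j}\mH^-_\vp$, not $(z-\vp)^{-(m-1-j)}\mH^-_\vp$; with the correct exponent the error term sits in $(z-\vp)^{-m-j}\mH^-_\vp\subset(z-\vp)^{-m-1}\mH^-_\vp$, so your conclusion is unaffected.
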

\begin{proof}
We substitute \eqref{flatv1}--\eqref{flatv3} into the formula \eqref{etainv} and observe  the following equalities:
\begin{align*}
&\left(\frac{f(z)_-}{a'(z)}\right)_{\ge-m+1}=0, \quad \left(\frac{f(z)_+}{\hat{a}'(z)}\right)_{\le n}=0 \quad \hbox{with arbitrary}
\quad f(z)=\sum_{i\in\Z}f_i (z-\vp)^i; \\
&(\ell'(z)\chi(z)^{-j})_{+}=(a'(z)\chi(z)^{-j})_{+}, \quad (\ell'(z)\hat{\chi}(z)^{-k})_{-}=(\hat{a}'(z)\hat{\chi}(z)^{-k})_{-},
\end{align*}
then the corollary is achieved after a straightforward calculation.
\end{proof}

\subsection{The Frobenius algebra structure}

For any $\vec{a}=(a(z),\hat{a}(z))\in\mathcal{M}_{m,n}$, let us introduce a product on the cotangent space $T^{\ast}_{\vec{a}}\mathcal{M}_{m,n}$ as
\begin{equation}\label{prodcot}
d\alpha(p)\star d\beta(q)=\frac{\beta'(q)}{q-p} d\alpha(p)+\frac{\alpha'(p)}{p-q} d\beta(q),\quad \alpha, \beta\in \{a,\hat{a}\}.
\end{equation}
\begin{lem}\label{thm-prodcot}
The product on $T^{\ast}_{\vec{a}}\mathcal{M}_{m,n}$ defined by \eqref{prodcot} is commutative, associative and invariant with respect to
the bilinear form given in \eqref{blf}.
\end{lem}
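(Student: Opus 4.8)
The statement asserts three properties of the star-product \eqref{prodcot} on the cotangent space: commutativity, associativity, and invariance with respect to the bilinear form \eqref{blf}. Commutativity is immediate from the defining formula, since swapping $\alpha\leftrightarrow\beta$ and $p\leftrightarrow q$ in \eqref{prodcot} gives back the same expression; I would dispatch this in one line. Invariance, meaning $\langle d\alpha(p)\star d\beta(q), d\gamma(r)\rangle^\ast = \langle d\alpha(p), d\beta(q)\star d\gamma(r)\rangle^\ast$, should be checked by expanding both sides using \eqref{blf}: the left side becomes $\frac{\beta'(q)}{q-p}\langle d\alpha(p),d\gamma(r)\rangle^\ast + \frac{\alpha'(p)}{p-q}\langle d\beta(q),d\gamma(r)\rangle^\ast$, and substituting \eqref{blf} for each inner pairing produces a sum of terms of the form $\frac{\alpha'(p)\beta'(q)}{(p-q)(q-r)}$, $\frac{\alpha'(p)\gamma'(r)}{(p-q)(r-p)}$, etc. The right side expands analogously, and one checks term-by-term that the two collections coincide after using the elementary partial-fraction identity $\frac{1}{(p-q)(q-r)}+\frac{1}{(q-r)(r-p)}+\frac{1}{(r-p)(p-q)}=0$ (equivalently grouping the terms pairwise). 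Here one must be careful about the expansion conventions for $\frac{1}{p-q}$ etc.\ when the arguments range over $\Gamma$ versus circles of different radii, as flagged in the remark after \eqref{blf}; since all the identities used are formal rational-function identities, they survive any consistent choice of expansion, but I would remark on this explicitly.

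For associativity, the plan is to compute $(d\alpha(p)\star d\beta(q))\star d\gamma(r)$ directly. Applying \eqref{prodcot} twice,
\[
(d\alpha(p)\star d\beta(q))\star d\gamma(r) = \frac{\beta'(q)}{q-p}\left(d\alpha(p)\star d\gamma(r)\right) + \frac{\alpha'(p)}{p-q}\left(d\beta(q)\star d\gamma(r)\right) + (\cdots)\, d\gamma(r),
\]
where the last term collects the contributions in which $d\gamma(r)$ appears as the surviving differential. Expanding everything, one finds that $(d\alpha(p)\star d\beta(q))\star d\gamma(r)$ is a symmetric expression in the three pairs $(\alpha,p),(\beta,q),(\gamma,r)$ — each of the three differentials $d\alpha(p)$, $d\beta(q)$, $d\gamma(r)$ appears with a coefficient that is a symmetric function of the other two, again reducible via the three-term partial-fraction identity above. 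Since this total expression is manifestly invariant under permuting the three pairs, and since $d\alpha(p)\star(d\beta(q)\star d\gamma(r))$ equals (by commutativity) $(d\beta(q)\star d\gamma(r))\star d\alpha(p)$, which is the same symmetric expression, associativity follows. Finally, because the generating functions $da(p),d\hat a(q)$ span the cotangent space as $p,q$ vary (Lemma~\ref{thm-dadah}), verifying the three properties on these generators suffices, and I would note that bilinearity extends the conclusions to all of $T^\ast_{\vec a}\mathcal{M}_{m,n}$.

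\textbf{Main obstacle.} The only real work is the associativity computation: one must carefully bookkeep which of the three differentials survives in each term of the double application of \eqref{prodcot}, and then verify that the resulting coefficient of, say, $d\gamma(r)$ — a rational function in $p,q,r$ built from $\alpha'(p)$, $\beta'(q)$ and cross terms $\frac{1}{p-q}$, $\frac{1}{q-r}$, $\frac{1}{p-r}$ — is genuinely symmetric in the pair $(\alpha,p)\leftrightarrow(\beta,q)$. This hinges on repeated use of the identity $\frac{1}{(p-q)(p-r)}+\frac{1}{(q-p)(q-r)}+\frac{1}{(r-p)(r-q)}=0$ and its variants, together with the cancellation of denominators $(p-q)$ by differences $\alpha'(p)-\alpha'(q)$ when $\alpha=\beta$, exactly as in the convention stated after \eqref{blf}. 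I expect no conceptual difficulty, only the need to organize the algebra cleanly; presenting the coefficient of each surviving differential in a visibly permutation-symmetric closed form is the cleanest route and is what I would write out.
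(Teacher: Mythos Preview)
Your proposal is correct and follows essentially the same route as the paper: commutativity is immediate, and both associativity and invariance are verified by expanding the triple expressions on the generators $d\alpha(p)$ and checking that the result is symmetric in the three pairs $(\alpha_i,p_i)$. The only cosmetic difference is that the paper writes the triple product and the triple pairing in explicit closed form over the common denominator $(p_1-p_2)(p_2-p_3)(p_3-p_1)$ (with the numerator a cyclic sum), from which the symmetry is read off directly, rather than invoking the three-term partial-fraction identity you mention; these are equivalent bookkeeping devices.
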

\begin{proof}
The product in \eqref{prodcot} is clearly commutative. What is more, for $\alpha_1, \alpha_2, \alpha_3\in\{a, \hat{a}\}$,  one has
    $$ \begin{aligned}
    &d\alpha_{1}(p_{1})\star d\alpha_{2}(p_{2})\star d\alpha_{3}(p_{3}) \\=&\left(\frac{\alpha_{1}'(p_{1})}{p_{1}-p_{2}}d\alpha_{2}(p_{2})
     +\frac{\alpha_{2}'(p_{2})}{p_{2}-p_{1}}d\alpha_{1}(p_{1})\right)\star d\alpha_{3}(p_{3})\\
    =&\frac{\alpha_{1}'(p_{1})}{p_{1}-p_{2}}\frac{\alpha_{2}'(p_{2})}{p_{2}-p_{3}}d\alpha_{3}(p_{3})
    +\frac{\alpha_{1}'(p_{1})}{p_{1}-p_{2}}\frac{\alpha_{3}'(p_{3})}{p_{3}-p_{2}}d\alpha_{2}(p_{2})\\
    &+\frac{\alpha_{2}'(p_{2})}{p_{2}-p_{1}}\frac{\alpha_{1}'(p_{1})}{p_{1}-p_{3}}d\alpha_{3}(p_{3})
    +\frac{\alpha_{2}'(p_{2})}{p_{2}-p_{1}}\frac{\alpha_{3}'(p_{3})}{p_{3}-p_{1}}d\alpha_{1}(p_{1})\\
 =&\frac{\{\alpha_{1}'(p_{1})\alpha_{2}'(p_{2})d\alpha_{3}(p_{3})(p_{1}-p_{2})
 +\hbox{c.p.}(1,2,3)\}}{(p_{1}-p_{2})(p_{2}-p_{3})(p_{3}-p_{1})},
    \end{aligned}
$$
where $\hbox{c.p.}$ means ``cyclic permutation''. So this product is associative. Finally, by using \eqref{blf} it is straightforward to verify
\begin{align*}
&\left\langle  d\alpha_{1}(p_{1})\star d\alpha_{2}(p_{2}), d\alpha_{3}(p_{3}) \right\rangle^{\ast} \\=&
\left\langle  \frac{\alpha_{1}'(p_{1})}{p_{1}-p_{2}}d\alpha_{2}(p_{2}) +\frac{\alpha_{2}'(p_{2})}{p_{2}-p_{1}}d\alpha_{1}(p_{1}),
d\alpha_{3}(p_{3}) \right\rangle^{\ast} \\
=& \frac{\alpha_{1}'(p_{1})}{p_{1}-p_{2}}\left(\frac{\alpha_{2}'(p_{2})}{p_{2}-p_{3}} +\frac{\alpha_{3}'(p_{3})}{p_{3}-p_{2}} \right)
+\frac{\alpha_{2}'(p_{2})}{p_{2}-p_{1}} \left(\frac{\alpha_{1}'(p_{1})}{p_{1}-p_{3}} +\frac{\alpha_{3}'(p_{3})}{p_{3}-p_{1}} \right) \\
=& \frac{\alpha_{1}'(p_{1})\alpha_{2}'(p_{2}) }{(p_{1}-p_{3})(p_2-p_3)}+\hbox{c.p.}(1,2,3),
\end{align*}
which implies the invariance of the product. The lemma is proved.
\end{proof}

\begin{lem}\label{thm-omprod}
The product defined by \eqref{prodcot} can be represented as follows: for any $\vec{\omega}_{1}
=(\omega_{1}(z),\hat{\omega}_{1}(z))$, $\vec{\omega}_{2}=(\omega_{2}(z),\hat{\omega}_{2}(z))\in T^{\ast}_{\vec{a}}\mathcal{M}_{m,n}$,
\begin{align}
    \vec{\omega}_{1}\star \vec{\omega}_{2}=(&[\omega_{2}(\omega_{1}a')_{+}
    -\omega_{1}(\omega_{2}a')_{-}-\omega_{2}(\hat{\omega}_{1}\hat{a}')_{-}-\omega_{1}(\hat{\omega}_{2}\hat{a}')_{-}]_{\ge -m+1}, \nn\\
    &[\hat{\omega}_{2}(\hat{\omega}_{1}\hat{a}')_{+}-\hat{\omega}_{1}(\hat{\omega}_{2}\hat{a}')_{-}
    +\hat{\omega}_{1}(\omega_{2}a')_{+}+\hat{\omega}_{2}(\omega_{1}a')_{+}]_{\le n}).
    \end{align}

\end{lem}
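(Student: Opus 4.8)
The plan is to compute $\vec\omega_1\star\vec\omega_2$ by reducing to the generating-function formulas \eqref{rpgener01}--\eqref{rpgener02} and then applying the contour-integral identities \eqref{pqminus}--\eqref{pqplus}, exactly in the spirit of the proof of Lemma~\ref{thm-etaom}. First I would use Lemma~\ref{thm-dadah} to write each cotangent vector as an integral against the generating functions: schematically, $(\omega_1(z),0)=\frac{1}{2\pi\mathrm{i}}\oint \omega_1(p_1)\,da(p_1)\,dp_1$ and $(0,\hat\omega_1(z))=\frac{1}{2\pi\mathrm{i}}\oint\hat\omega_1(q_1)\,d\hat a(q_1)\,dq_1$, and similarly for $\vec\omega_2$. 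Then by bilinearity the product $\vec\omega_1\star\vec\omega_2$ becomes a double integral of $\omega_i$ (or $\hat\omega_i$) against $d\alpha(p_1)\star d\beta(p_2)$ for $\alpha,\beta\in\{a,\hat a\}$, and the latter is given explicitly by \eqref{prodcot}.

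Next I would carry out the four bilinear pieces separately. Take, for instance, the contribution of $(\omega_1,0)$ and $(\omega_2,0)$: this is $\frac{1}{(2\pi\mathrm{i})^2}\oint\oint \omega_1(p_1)\omega_2(p_2)\bigl(\frac{a'(p_2)}{p_2-p_1}da(p_1)+\frac{a'(p_1)}{p_1-p_2}da(p_2)\bigr)dp_1dp_2$. Using \eqref{rpgener01} to expand $da(p_1)$ and $da(p_2)$ as series in $(z-\varphi)$, and then using \eqref{pqminus}--\eqref{pqplus} to evaluate the inner integral over whichever of $p_1,p_2$ is not the argument of the surviving $da$, one collapses each term to a single projection such as $\bigl(\omega_2(z)(\omega_1(z)a'(z))_+\bigr)$ or $\bigl(\omega_1(z)(\omega_2(z)a'(z))_-\bigr)$, with the outer projection onto powers $\ge -m+1$ coming from the fact that the first slot of $T^\ast_{\vec a}\mathcal M_{m,n}$ lives in $(z-\varphi)^{-m+1}\mH^+_\vp$ (cf.\ \eqref{TstaM}). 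The key bookkeeping point, as flagged after \eqref{blf}, is the convention for expanding $\frac{1}{p_1-p_2}$ when the two generating functions are of different type ($da$ versus $d\hat a$): one always expands according to $|p_1-\varphi|>|q-\varphi|$, so that \eqref{pqminus}--\eqref{pqplus} apply with the correct sign and projection, and the mixed $a/\hat a$ terms produce the cross-terms $-\omega_2(\hat\omega_1\hat a')_-$, $\hat\omega_1(\omega_2 a')_+$, etc.

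Repeating this for all four combinations $(\omega_1,0)\star(\omega_2,0)$, $(\omega_1,0)\star(0,\hat\omega_2)$, $(0,\hat\omega_1)\star(\omega_2,0)$, $(0,\hat\omega_1)\star(0,\hat\omega_2)$, and adding, one assembles the two asserted components. A cross-check is that the expression must be symmetric in the two factors (commutativity, Lemma~\ref{thm-prodcot}), which it visibly is after summation, and that setting $\vec\omega_1$ equal to the covector corresponding to the unity (to be identified with $\eta^{-1}\vec e$) should reproduce $\vec\omega_2$; this gives a sanity check on the signs.

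The main obstacle I expect is purely organizational rather than conceptual: keeping consistent track of the $\pm$-projections and the $(\cdot)_{\ge -m+1}$, $(\cdot)_{\le n}$ truncations through a dozen double integrals, and making sure each application of \eqref{pqminus}--\eqref{pqplus} uses the contour inequality in the direction dictated by the expansion convention after \eqref{blf}. A subtle point worth isolating is why the outer truncations $[\,\cdot\,]_{\ge -m+1}$ and $[\,\cdot\,]_{\le n}$ are legitimate, i.e.\ that the discarded tails genuinely vanish: this follows because $\omega_i(z)\in(z-\varphi)^{-m+1}\mH^+_\vp$ and $\hat\omega_i(z)\in(z-\varphi)^{n}\mH^-_\vp$ force the relevant products to have the right leading behaviour, so that projecting the generating function onto the basis $\{e_i^\ast,\hat e_j^\ast\}$ automatically truncates. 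Once these are handled carefully the identity drops out; I would present the single computation $(\omega_1,0)\star(\omega_2,0)$ in detail and indicate that the other three are entirely analogous.
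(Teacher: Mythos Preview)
Your approach is essentially the same as the paper's: both represent the cotangent vectors via Lemma~\ref{thm-dadah}, split the product by bilinearity into the four pieces $(\omega_1,0)\star(\omega_2,0)$, etc., carry out the double contour integrals using \eqref{pqminus}--\eqref{pqplus}, present the first piece in detail, and declare the rest analogous. One small clarification: the outer truncations $[\,\cdot\,]_{\ge -m+1}$ and $[\,\cdot\,]_{\le n}$ do not arise because tails ``genuinely vanish''; they appear simply because the integral of $\omega(p)\,da(p)$ against a function of $p$ returns, by definition of $da(p)$ in \eqref{rpgener01}, an element of $\mathrm{span}\{e_i^\ast:i\ge -m+1\}$, so the projection is built into the generating-function representation rather than being an analytic cancellation.
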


\begin{proof}
In order to simplify the calculation, let us check the case $(\omega_{1},0)\star(\omega_{2},0)$. By using Lemma~\ref{thm-dadah}
and the equalities \eqref{pqminus}--\eqref{pqplus}, we have
\begin{align*}
    &(\omega_{1}(z),0)\star(\omega_{2}(z),0) \\
    =&\frac{1}{(2\pi\mathrm{i})^{2}} \oint_{|q-\varphi|>|z-\varphi|}\oint_{|p-\varphi|>|q-\varphi|}\omega_{1}(p)d a(p)\star d a(q)\omega_{2}(q)d p d q\\
    =&\left(\frac{1}{(2\pi\mathrm{i})^{2}}\oint_{|q-\varphi|>|z-\varphi|}\oint_{|p-\varphi|>|q-\varphi|}
    \omega_{1}(p)\left(\frac{a'(p)}{p-q}da(q)+\frac{a'(q)}{q-p}da(p)\right)\omega_{2}(q)d p d q,0\right)\\
    =& \frac{1}{2\pi\mathrm{i}}\left(\oint_{|q-\varphi|>|z-\varphi|}(\omega_{1}(q)a'(q))_{+} \omega_{2}(q)da(q))dq \right. \\
    &\qquad \left.-\oint_{|p-\varphi|>|z-\varphi|}\omega_{1}(p)(\omega_{2}(p)a'(p))_{-}da(p)d p, 0\right)\\
    =&\left( \left( (\omega_{1}(z)a'(z))_{+}\omega_{2}(z)-\omega_{1}(z)(\omega_{2}(z)a'(z))_{-}\right)_{\ge -m+1},0\right). \\
\end{align*}
Observe that the result is indeed symmetric with respect to the indices $1$ and $2$.
The cases $(0,\hat{\omega}_{1})\star(\omega_{2},0)$, $(\omega_{1},0)\star(0,\hat{\omega}_{2})$ and
$(0,\hat{\omega}_{1})\star(0,\hat{\omega}_{2})$ are similar. The lemma is proved.
\end{proof}

As an immediate application of Lemma~\ref{thm-omprod}, we have the following result.
\begin{cor}\label{thm-unitycot}
   For the product defined by \eqref{prodcot}, there is a unity cotangent vector as
\begin{equation}\label{estar}
{\vec{e}}^{\,\ast}=\left(\frac{1}{m}(z-\varphi)^{-m+1},0\right)\in T^{\ast}_{\vec{a}}\mathcal{M}_{m,n}.
\end{equation}
\end{cor}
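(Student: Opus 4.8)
The plan is to verify directly that the cotangent vector $\vec{e}^{\,\ast}=\left(\frac{1}{m}(z-\varphi)^{-m+1},0\right)$ acts as the identity for the product $\star$ described in Lemma~\ref{thm-omprod}. That is, for an arbitrary $\vec{\omega}=(\omega(z),\hat{\omega}(z))\in T^{\ast}_{\vec{a}}\mathcal{M}_{m,n}$ I would compute $\vec{e}^{\,\ast}\star\vec{\omega}$ using the explicit formula in that lemma and show it equals $\vec{\omega}$. First I would note that since the first component of $\vec{e}^{\,\ast}$ is $\omega_1(z)=\frac{1}{m}(z-\varphi)^{-m+1}$ and the second component $\hat{\omega}_1(z)=0$, the two displayed components of $\vec\omega_1\star\vec\omega_2$ simplify considerably: several of the terms containing $\hat{\omega}_1$ or $\hat{a}'$ paired with $\omega_1$ drop or simplify.

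The key computational input is the leading behaviour of $a'(z)$ near $z=\varphi$ (equivalently near $z=\infty$ in the variable $z-\varphi$): since $a(z)=z^m+\sum_{i\le m-2}a_i(z-\varphi)^i$, one has $a'(z)=m z^{m-1}+\text{lower}$, and more to the point $\omega_1(z)\,a'(z)=\frac{1}{m}(z-\varphi)^{-m+1}\cdot a'(z)$. Writing $z^{m-1}$ in powers of $(z-\varphi)$, the product $\frac{1}{m}(z-\varphi)^{-m+1}a'(z)$ has the form $1 + (\text{negative powers of }(z-\varphi))$, i.e. its nonnegative-power part $\left(\omega_1 a'\right)_+$ equals $1$ and its negative-power part $\left(\omega_1 a'\right)_-$ is some series in $(z-\varphi)^{<0}$. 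Substituting into the formula of Lemma~\ref{thm-omprod} with $(\omega_1,\hat\omega_1)=\vec e^{\,\ast}$ and $(\omega_2,\hat\omega_2)=(\omega,\hat\omega)$, the first component becomes $\left[\omega\cdot 1 - \omega_1(\omega a')_- - \omega_1(\hat\omega\hat a')_- - 0\right]_{\ge -m+1}$; the second becomes $\left[0 - 0 + \hat\omega\cdot 1 + 0\right]_{\le n} = \hat\omega$ (using that $\hat\omega\in(z-\varphi)^n\mH^-_\varphi$ already lives in the relevant range). So the second component is immediately correct, and what remains is to show the first component collapses to $\omega(z)$.

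The main obstacle, then, is the first component: I must show that $\left[\omega - \omega_1(\omega a')_- - \omega_1(\hat\omega\hat a')_-\right]_{\ge -m+1} = \omega$. Since $\omega(z)\in(z-\varphi)^{-m+1}\mH^+_\varphi$, we already have $\omega_{\ge -m+1}=\omega$, so it suffices to show that $\omega_1\cdot\big((\omega a')_- + (\hat\omega\hat a')_-\big)$ has no powers of $(z-\varphi)$ greater than or equal to $-m+1$. Now $(\omega a')_- + (\hat\omega\hat a')_-$ is a pure negative-power series, so its leading term is at most $(z-\varphi)^{-1}$; multiplying by $\omega_1 = \frac{1}{m}(z-\varphi)^{-m+1}$ pushes every power down to at most $(z-\varphi)^{-m}$, hence strictly below $-m+1$. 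Therefore the truncation $[\,\cdot\,]_{\ge -m+1}$ kills that entire contribution, leaving exactly $\omega$. This argument is essentially a degree/order bookkeeping estimate, and I expect it to be the only delicate point. One should double-check the analogous bound used implicitly for the second component, namely that no term there escapes the $[\,\cdot\,]_{\le n}$ truncation — but since $\hat\omega\in(z-\varphi)^n\mH^-_\varphi$ and the other surviving summands vanish, this is immediate. Combining the two components gives $\vec e^{\,\ast}\star\vec\omega = (\omega(z),\hat\omega(z)) = \vec\omega$, and by commutativity (Lemma~\ref{thm-prodcot}) also $\vec\omega\star\vec e^{\,\ast}=\vec\omega$, proving the corollary.
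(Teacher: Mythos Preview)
Your argument is correct and follows the same approach as the paper: the paper's one-line proof simply records the key identity $\left(\frac{1}{m}(z-\varphi)^{-m+1}a'(z)\right)_+=1$ and leaves the substitution into Lemma~\ref{thm-omprod} implicit, whereas you carry out that substitution explicitly and add the degree-counting observation that $\omega_1\cdot(\cdot)_-$ lands entirely below $(z-\varphi)^{-m+1}$. One cosmetic slip: in your second component the surviving term $\hat\omega\,(\omega_1 a')_+$ is the fourth summand in the formula of Lemma~\ref{thm-omprod}, not the third, but this does not affect the computation.
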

\begin{proof}
The corollary is verified by using the property $ \left( \frac{(z-\varphi)^{-m+1}}{m} a'(z)\right)_+=1$.
\end{proof}

Summarizing the above discussions, one knows that
$\left(T^{\ast}_{\vec{a}}\mathcal{M}_{m,n}, \star, {\vec{e}}^{\,\ast}, \langle~,\rangle^\ast\right)$ is a Frobenius algebra.
With the help of the bijection $\eta: T^{\ast}_{\vec{a}}\mathcal{M}_{m,n}\to T_{\vec{a}}\mathcal{M}_{m,n}$ defined in
 \eqref{etadef}, it is induced a Frobenius algebra structure on $T_{\vec{a}}\mathcal{M}_{m,n}$. More precisely, we arrive at the following result.
\begin{prop}\label{thm-prodtan}
The quadruple $\left(T_{\vec{a}}\mathcal{M}_{m,n},\circ,\vec{e},\langle~,~\rangle_\eta\right)$ is a Frobenius algebra, in which
the product $\circ$ is given by
\begin{equation}\label{prodtan}
\vec{\xi}_{1} \circ \vec{\xi}_{2}:=\eta\cdot \left(\eta^{-1}(\vec{\xi}_{1})\star\eta^{-1}(\vec{\xi}_{2})\right),\quad \vec{\xi}_{1},\,\vec{\xi}_{2}\in
 T_{\vec{a}}\mathcal{M}_{m,n}
\end{equation}
and the unit vector field  $\vec{e}$ is
\begin{equation}\label{vece}
\vec{e} 
=\left\{
\begin{aligned}
&\frac{\p}{\p t_{0}}+\frac{\p}{\p \hat{h}_{0}},\quad &m&=1,\\
&\frac{1}{m}\frac{\p}{\p h_{m-1}},\quad &m&\ge2.\\
\end{aligned}
\right.
\end{equation}
\end{prop}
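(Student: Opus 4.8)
The plan is to establish Proposition~\ref{thm-prodtan} in two parts. The first part---that $\left(T_{\vec{a}}\mathcal{M}_{m,n},\circ,\vec{e},\langle~,~\rangle_\eta\right)$ is a Frobenius algebra---is essentially a formal consequence of transporting structure along the bijection $\eta$. Indeed, by Lemma~\ref{thm-prodcot} the quadruple $\left(T^{\ast}_{\vec{a}}\mathcal{M}_{m,n},\star,{\vec{e}}^{\,\ast},\langle~,~\rangle^\ast\right)$ is already a Frobenius algebra, and Lemma~\ref{thm-etabi} tells us $\eta$ is a linear isomorphism. The product $\circ$ in \eqref{prodtan} is by construction the pushforward of $\star$, so commutativity and associativity are immediate. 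Compatibility of $\circ$ with $\langle~,~\rangle_\eta$ follows directly from the definition \eqref{metric} of $\langle~,~\rangle_\eta$ as the pullback of $\langle~,~\rangle^\ast$ together with the invariance of $\star$ with respect to $\langle~,~\rangle^\ast$: one computes
\[
\langle \vec{\xi}_1\circ\vec{\xi}_2,\vec{\xi}_3\rangle_\eta=\langle \eta^{-1}(\vec{\xi}_1)\star\eta^{-1}(\vec{\xi}_2),\eta^{-1}(\vec{\xi}_3)\rangle^\ast,
\]
which is symmetric in all three arguments by Lemma~\ref{thm-prodcot}. Finally, $\vec{e}:=\eta\cdot{\vec{e}}^{\,\ast}$ is the unit for $\circ$ because ${\vec{e}}^{\,\ast}$ is the unit for $\star$ by Corollary~\ref{thm-unitycot}.

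The second, more computational part is to identify the unit vector field $\vec{e}=\eta\cdot{\vec{e}}^{\,\ast}$ explicitly as in \eqref{vece}. Here I would apply the formula \eqref{etaom} for $\eta$ from Lemma~\ref{thm-etaom} to the cotangent vector ${\vec{e}}^{\,\ast}=\left(\tfrac{1}{m}(z-\varphi)^{-m+1},0\right)$. Writing $\omega(z)=\tfrac{1}{m}(z-\varphi)^{-m+1}$ and $\hat\omega(z)=0$, one has $[\omega+\hat\omega]_-=\omega$ (it is a purely negative-power series when $m\ge2$, and for $m=1$ it is the constant $1$, which needs separate handling), while $[\omega a'+\hat\omega\hat a']_- = [\tfrac{1}{m}(z-\varphi)^{-m+1}a'(z)]_-$. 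Using the key identity already recorded in the proof of Corollary~\ref{thm-unitycot}, namely $\big(\tfrac{1}{m}(z-\varphi)^{-m+1}a'(z)\big)_+=1$, one gets $\tfrac{1}{m}(z-\varphi)^{-m+1}a'(z)=1+(\text{negative part})$, so the pieces combine to give the two components of $\vec{e}$ after simplification. To recognize the resulting pair as $\tfrac{1}{m}\p/\p h_{m-1}$ (for $m\ge2$) I would then compare with the explicit tangent-vector representation \eqref{flatv1}--\eqref{flatv3} from Corollary~\ref{thm-flatv}: for $j=m-1$, $\p/\p h_{m-1}=\big((\ell'(z)\chi(z)^{-m+1})_+,(\ell'(z)\chi(z)^{-m+1})_+\big)$, and since $\chi=\ell^{1/m}$ one has $\ell'\chi^{-m+1}=m\chi^{m-1}\chi'\cdot\chi^{-m+1}=m\chi'$, whose positive part is $m\cdot 1=m$ plus corrections---matching $m\vec{e}$. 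Alternatively, and perhaps more cleanly, one checks directly that $\eta^{-1}$ of the right-hand side of \eqref{vece} equals ${\vec{e}}^{\,\ast}$ using the cotangent-basis formulas in Corollary~\ref{thm-flatcov}: for $m\ge2$, $\eta^{-1}\tfrac{1}{m}\p/\p h_{m-1}=\tfrac{1}{m}\big((\chi(z)^{-m+1})_{\ge-m+1},0\big)$, and since $\chi(z)=z+O(z^{-1})$ near $\infty$ one has $\chi(z)^{-m+1}=(z-\varphi)^{-m+1}+O((z-\varphi)^{-m})$, so truncating at powers $\ge-m+1$ leaves exactly $(z-\varphi)^{-m+1}$, giving ${\vec{e}}^{\,\ast}$. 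For $m=1$, one similarly verifies $\eta^{-1}(\p/\p t_0+\p/\p\hat h_0)=\big((\zeta^0)_{\ge0}+(\hat\chi^0)\cdot 0,\dots\big)$ reduces to $(1,0)={\vec{e}}^{\,\ast}$, using $\eta^{-1}\p/\p t_0=\big((1)_{\ge0},-(1)_{\le n}\big)=(1,-1)$ and $\eta^{-1}\p/\p\hat h_0=(0,(1)_{\le n})=(0,1)$.

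The main obstacle I anticipate is not conceptual but bookkeeping, and it is concentrated in the case distinction $m=1$ versus $m\ge2$. For $m\ge2$ the series $\tfrac{1}{m}(z-\varphi)^{-m+1}$ has only negative powers, so the truncations $[\,\cdot\,]_\pm$ in \eqref{etaom} and \eqref{etainv} behave transparently; but for $m=1$ the ``cotangent unit'' $(1,0)$ is a constant, a nonnegative power, and one must be careful that the term $\p/\p\hat h_0$ (coming from the $\hat\chi^0$ direction at $\varphi$) contributes the second component correctly. I would therefore verify the $m=1$ case by the direct $\eta^{-1}$-computation sketched above rather than by pushing ${\vec{e}}^{\,\ast}$ forward through \eqref{etaom}, since the inverse formula \eqref{etainv} together with Corollary~\ref{thm-flatcov} makes the truncations entirely explicit. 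In both cases the verification that the claimed $\vec{e}$ indeed satisfies $\vec{e}\circ\vec{\xi}=\vec{\xi}$ for all $\vec{\xi}$ reduces, via \eqref{prodtan}, to the already-established fact that ${\vec{e}}^{\,\ast}$ is the $\star$-unit, so no further work is needed once the identification of $\vec{e}$ is complete.
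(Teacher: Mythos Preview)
Your proposal is correct and follows essentially the same approach as the paper. In particular, your ``alternative, perhaps more cleanly'' route---checking $\eta^{-1}$ of the claimed $\vec{e}$ against $\vec{e}^{\,\ast}$ via Corollary~\ref{thm-flatcov}, with the expansion $\chi(z)^{-m+1}=(z-\varphi)^{-m+1}+O((z-\varphi)^{-m})$ for $m\ge2$ and the componentwise sum $(1,-1)+(0,1)=(1,0)$ for $m=1$---is exactly the argument the paper gives.
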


\begin{proof}
It suffices to verify the formula \eqref{vece}, say $\vec{e}=\eta\cdot \vec{e}^\ast$. To this end, let us check $\eta^{-1}\left( \vec{e} \right)=\vec{e}^\ast$.  When $m\ge 2$, by using \eqref{flatcov2} one has
\begin{equation}
	\eta^{-1}\left( \frac{1}{m}\frac{\p}{\p h_{m-1}} \right)=\left( \frac{1}{m}(\chi(z)^{-m+1})_{\ge -m+1},0\right)\in T^{\ast}_{\vec{a}}\mathcal{M}_{m,n}.
\end{equation}
This fact together with \eqref{chi}, i.e.,
\begin{align}
	\chi(z)=&z+\chi_{1}z^{-1}+\chi_{2}z^{-2}+\cdots
	=(z-\varphi)+\varphi+\chi_{1}(z-\varphi)^{-1}+\cdots \quad \hbox{near} \quad  \infty,
\end{align}
leads to
\begin{equation}
	\eta^{-1}\left( \frac{1}{m}\frac{\p}{\p h_{m-1}} \right)=\left(\frac{1}{m}(z-\varphi)^{-m+1},0\right)=\vec{e}^\ast.
\end{equation}
When $m=1$, from \eqref{flatcov1} and \eqref{flatcov3} it follows that
\begin{equation}
	\eta^{-1}\left(\frac{\p}{\p t_{0}}+\frac{\p}{\p \hat{h}_{0}}\right)=(1,0)=\vec{e}^\ast.
\end{equation}
The proposition is proved.
\end{proof}

Note that one can also represent the unity vector field in the form $\vec{e}=\frac{\p}{\p v}$ with some flat coordinate $v$ via a linear transformation of $\mathbf{t}\cup\mathbf{h}\cup\hat{\mathbf{h}}$.

\subsection{The symmetric 3-tensor} In this subsection, we want to compute a $3$-tensor $c$ defined by
\begin{equation}\label{3tensor}
c(\p_u,\p_v, \p_w):=\left\la\p_u\circ\p_v, \p_w\right\ra_\eta, \quad u, v,
w\in \mathbf{t}\cup\mathbf{h}\cup\hat{\mathbf{h}},
\end{equation}
where $\p_u=\frac{\p}{\p u}\in T_{\vec{a}}\mathcal{M}_{m,n}$. Clearly, the right hand side of \eqref{3tensor}  is  symmetric with respect to $\p_u$, $\p_v$ and $\p_w$.

\begin{prop}\label{thm-3tensor}
The symmetric $3$-tensor in \eqref{3tensor} can be represented as
\begin{align}
& c(\p_{t_{i_1}},\p_{t_{i_2}},\p_{t_{i_3}})=
\frac{1}{2\pi \mathrm{i}}\oint_{\Gamma}\left(-\zeta^{i_{1}+i_{2}+i_{3}}\zeta'(\zeta'+\zeta'_{-}+\ell') +\{\zeta^{i_{1}
+i_{2}}\zeta'(\zeta^{i_{3}}\zeta')_{-}+\mathrm{c.p.}(i_{1},i_{2},i_{3})\}\right)d z,\nn\\
&c(\p_{h_{j_1}},\p_{h_{j_2}},\p_{h_{j_3}})=-\res_{z=\infty}\left(-\chi^{-j_{1}-j_{2}-j_{3}}\ell'(2\ell'+\zeta'_{-})+
\{\chi^{-j_{1}-j_{2}}\ell'(\ell'\chi^{-j_{3}})_{+}+\mathrm{c.p.}(j_{1},j_{2},j_{3})\}\right)dz,\nn\\
&c(\p_{\hat{h}_{k_1}},\p_{\hat{h}_{k_2}}, \p_{\hat{h}_{k_3}})=\res_{z=\varphi}\left(-\hat{\chi}^{-k_{1}-k_{2}-k_{3}}\ell'(2\ell'-\zeta'_{+})
+ \{\hat{\chi}^{-k_{1}-k_{2}}\ell'(\ell'\hat{\chi}^{-k_{3}})_{-} +\mathrm{c.p.}(k_{1},k_{2},k_{3})\}\right)d z,\nn\\
&c({\p_{t_{i_1}}},{\p_{t_{i_2}}},{\p_{h_{j_3}}})=-\frac{1}{2\pi \mathrm{i}}\oint_{\Gamma}\zeta^{i_{1}+i_{2}}\zeta'(\ell'\chi^{-j_{3}})_{+}dz, \nn\\
&c(\p_{t_{i_1}},\p_{t_{i_2}},\p_{\hat{h}_{k_{3}}})=
\frac{1}{2\pi \mathrm{i}}\oint_{\Gamma}\zeta^{i_{1}+i_{2}}\zeta'(\ell'\hat{\chi}^{-k_{3}})_{-}dz, \nn\\ &c(\p_{h_{j_1}},\p_{h_{j_2}},\p_{t_{i_{3}}})=
\res_{z=\infty}\chi^{-j_{1}-j_{2}}\ell'(\zeta^{i_{3}}\zeta')_{-}dz,\nn \\
&c(\p_{h_{j_1}},\p_{h_{j_2}},\p_{\hat{h}_{k_{3}}})=
\res_{z=\infty}\chi^{-j_{1}-j_{2}}\ell'(\ell'\hat{\chi}^{-k_{3}})_{-}dz,\nn \\ &c(\p_{\hat{h}_{k_1}},\p_{\hat{h}_{k_2}},\p_{t_{i_{3}}})=
-\res_{z=\varphi}\hat{\chi}^{-k_{1}-k_{2}}\ell'(\zeta^{i_{3}}\zeta')_{+}dz,\nn\\
&c(\p_{\hat{h}_{k_1}},\p_{\hat{h}_{k_2}},\p_{h_{j_{3}}})=
-\res_{z=\varphi}\hat{\chi}^{-k_{1}-k_{2}}\ell'(\ell'\chi^{-j_{3}})_{+}dz,\nn\\
&
c(\p_{t_{i_1}},\p_{h_{j_2}},\p_{\hat{h}_{k_{3}}})=0,\nn
\end{align}
where $\chi$ and $\hat{\chi}$ are given in \eqref{chi} and \eqref{chih} respectively.
\end{prop}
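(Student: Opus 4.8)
The plan is to compute the $3$-tensor $c(\p_u,\p_v,\p_w)=\la \p_u\circ\p_v,\p_w\ra_\eta$ by unwinding the definitions established above. First I would use the identity $\la\p_u\circ\p_v,\p_w\ra_\eta=\la\eta^{-1}(\p_u)\star\eta^{-1}(\p_v),\p_w\ra$, which follows from \eqref{prodtan} and the definition \eqref{metric} of the metric together with \eqref{etadef}; equivalently one may pair with $\eta^{-1}(\p_w)$ in the $\la\;,\;\ra^\ast$-form. The point is that all three inputs can be replaced by the \emph{cotangent} vectors $\eta^{-1}(\p_u)$, whose explicit Laurent-series representatives are listed in Corollary~\ref{thm-flatcov}: $\eta^{-1}\p_{t_i}=((\zeta^i)_{\ge-m+1},-(\zeta^i)_{\le n})$, $\eta^{-1}\p_{h_j}=((\chi^{-j})_{\ge-m+1},0)$, $\eta^{-1}\p_{\hat h_k}=(0,(\hat\chi^{-k})_{\le n})$. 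Then the star-product of the first two factors is read off from Lemma~\ref{thm-omprod}, producing another element of $T^\ast_{\vec a}\mathcal{M}_{m,n}$, and the final pairing against $\p_w$ (written via \eqref{flatv1}--\eqref{flatv3} in Corollary~\ref{thm-flatv}) is carried out using the residue/contour integral formula \eqref{pairing}.

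Next I would organize the computation by the ``type'' of the three indices — all in $\mathbf t$, all in $\mathbf h$, all in $\hat{\mathbf h}$, and the seven mixed cases — since the projections $(\cdot)_{\ge-m+1}$, $(\cdot)_{\le n}$ and the truncations appearing in Lemma~\ref{thm-omprod} behave differently in each regime. The key simplification in each case is the same one already exploited in the proof of Proposition~\ref{thm-flatmetric}: when a truncated series is integrated against a ``balancing'' series whose pole/order structure is complementary (controlled by the orders of $1/a'(z)$, $1/\hat a'(z)$, $1/\zeta'(z)$, $1/\ell'(z)$ recorded there), the truncations may be dropped or converted into full series, and one may freely replace $a(z),\hat a(z)$ by $\ell(z)$ where the discrepancy $a-\ell\in(z-\vp)^{-1}\mH_\vp^-$ or $\hat a-\ell\in\mH_\vp^+$ contributes nothing to the relevant residue. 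For the pure-$\mathbf t$ case one also uses \eqref{zetaellt} to rewrite $\p_{t_i}$-derivatives of $\zeta$, and collects the cyclic-permutation terms; the asymmetric-looking terms like $-\zeta^{i_1+i_2+i_3}\zeta'(\zeta'+\zeta'_-+\ell')$ will emerge from combining the ``$\omega_2(\omega_1 a')_+-\omega_1(\omega_2 a')_-$'' structure of Lemma~\ref{thm-omprod} with $\zeta=a-\hat a$, $\ell=a_++\hat a_-$ and the splitting of $\zeta'$ into $\zeta'_+$ and $\zeta'_-$. The vanishing of $c(\p_{t_{i_1}},\p_{h_{j_2}},\p_{\hat h_{k_3}})$ should fall out immediately because $\eta^{-1}\p_{h_{j_2}}$ is supported in the first slot while $\eta^{-1}\p_{\hat h_{k_3}}$ is supported in the second, so their star product vanishes by Lemma~\ref{thm-omprod}.

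I expect the main obstacle to be bookkeeping rather than conceptual: keeping consistent track of which projection ($(\cdot)_\pm$, $(\cdot)_{\ge-m+1}$, $(\cdot)_{\le n}$) applies to which factor, and verifying at each step that the hypotheses permitting one to drop a truncation (namely that the paired series lies in the complementary space, so the residue is unaffected) are genuinely met — in particular near $z=\infty$ versus near $z=\vp$, and that cross terms between $a$-type and $\hat a$-type pieces are handled with the expansion convention $|p-\vp|>|q-\vp|$ fixed after \eqref{blf}. A secondary subtlety is ensuring the manifestly cyclically-symmetric-looking final formulas are actually consistent with the fully symmetric object $c$; this is guaranteed a priori by Lemma~\ref{thm-prodcot} (associativity and invariance of $\star$) and the symmetry of \eqref{3tensor}, so any asymmetry in intermediate expressions must cancel, which provides a useful internal check. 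Once all ten cases are assembled, the stated formulas follow directly, completing the proof.
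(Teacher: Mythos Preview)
Your approach is essentially the same as the paper's: use $c(\p_u,\p_v,\p_w)=\la\eta^{-1}(\p_u)\star\eta^{-1}(\p_v),\p_w\ra$, feed in the explicit covectors from Corollary~\ref{thm-flatcov}, compute the star product via Lemma~\ref{thm-omprod} (the paper packages this step as a separate lemma, Lemma~\ref{thm-prodflatcov}), and then pair against the tangent vectors of Corollary~\ref{thm-flatv}.

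There is, however, one incorrect step. You claim that $c(\p_{t_{i_1}},\p_{h_{j_2}},\p_{\hat h_{k_3}})=0$ because the star product $\eta^{-1}(\p_{h_{j_2}})\star\eta^{-1}(\p_{\hat h_{k_3}})$ vanishes. It does not: with $\vec\omega_1=((\chi^{-j_2})_{\ge-m+1},0)$ and $\vec\omega_2=(0,(\hat\chi^{-k_3})_{\le n})$, Lemma~\ref{thm-omprod} gives
\[
\vec\omega_1\star\vec\omega_2=\Big(-[\chi^{-j_2}(\hat\chi^{-k_3}\ell')_-]_{\ge-m+1},\ [\hat\chi^{-k_3}(\chi^{-j_2}\ell')_+]_{\le n}\Big),
\]
which is generically nonzero (this is exactly \eqref{prodflatcov4} in the paper). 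The vanishing of the $3$-tensor occurs only at the pairing step: the first component has highest $(z-\vp)$-power $\le -j_2-1\le-2$, so its product with $(\zeta^{i_1}\zeta')_-$ has no $(z-\vp)^{-1}$ term; likewise the second component has lowest power $\ge k_3\ge 0$, so its product with $(\zeta^{i_1}\zeta')_+$ has no residue either. So the conclusion is right, but the mechanism is the residue count in the final pairing, not a vanishing star product.
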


Before proving this proposition, let us show a lemma as follows.
\begin{lem}\label{thm-prodflatcov}
The following equalities hold:
\begin{align}
\eta^{-1} \frac{\p}{\p t_{i_{1}}}\star \eta^{-1} \frac{\p}{\p t_{i_{2}}}=&\left([\zeta^{i_{1}+i_{2}}a'
-\zeta^{i_{1}}(\zeta^{i_{2}}\zeta')_{-}-\zeta^{i_{2}}(\zeta^{i_{1}}\zeta')_{-} ]_{\ge -m+1},\right.\nn \\
&\qquad \left. [-\zeta^{i_{1}+i_{2}}\hat{a}'-\zeta^{i_{1}}(\zeta^{i_{2}}\zeta')_{+} -\zeta^{i_{2}}(\zeta^{i_{1}}\zeta')_{+}]_{\le n}\right),
\label{prodflatcov1}\\
\eta^{-1} \frac{\p}{\p h_{j_{1}}}\star \eta^{-1} \frac{\p}{\p h_{j_{2}}}=&\left([-\chi^{-j_{1}-j_{2}}a'+\chi^{-j_{1}}(\chi^{-j_{2}}\ell')_{+}
+\chi^{-j_{2}}(\chi^{-j_{1}}\ell')_{+}]_{\ge -m+1},0\right), \nn\\
\eta^{-1} \frac{\p}{\p \hat{h}_{k_{1}}}\star \eta^{-1} \frac{\p}{\p \hat{h}_{k_{2}}}=&\left(0,[\hat{\chi}^{-k_{1}-k_{2}}\hat{a}'
-\hat{\chi}^{-k_{1}}(\hat{\chi}^{-k_{2}}\ell')_{-} -\hat{\chi}^{-k_{2}}(\hat{\chi}^{-k_{1}}\ell')_{-}]_{\le n}\right),\nn\\
\eta^{-1} \frac{\p}{\p h_{j_{1}}}\star \eta^{-1} \frac{\p}{\p \hat{h}_{k_{2}}}=&\left(-[\chi^{-j_{1}}(\hat{\chi}^{-k_{2}}\ell')_{-}]_{\ge -m+1},
[\hat{\chi}^{-k_{2}}(\chi^{-j_{1}}\ell')_{+}]_{\le n}\right). \label{prodflatcov4}\nn
\end{align}
\end{lem}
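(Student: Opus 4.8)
The plan is to compute each of the four products directly from Lemma~\ref{thm-omprod}, using as input the explicit representations of the cotangent vectors $\eta^{-1}\p/\p t_i$, $\eta^{-1}\p/\p h_j$, $\eta^{-1}\p/\p\hat h_k$ furnished by Corollary~\ref{thm-flatcov}. Recall that Lemma~\ref{thm-omprod} expresses $\vec\omega_1\star\vec\omega_2$ for $\vec\omega_\nu=(\omega_\nu,\hat\omega_\nu)$ as a pair whose first component is $[\omega_2(\omega_1a')_+-\omega_1(\omega_2a')_--\omega_2(\hat\omega_1\hat a')_--\omega_1(\hat\omega_2\hat a')_-]_{\ge-m+1}$ and whose second component is the analogous expression with all roles of $a,\hat a$ and the $\pm$-truncations interchanged, truncated by $[\cdot]_{\le n}$. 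So the whole proof is four substitutions followed by simplification.

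First I would do the $t$--$t$ case: here $\omega_\nu=(\zeta^{i_\nu})_{\ge-m+1}$ and $\hat\omega_\nu=-(\zeta^{i_\nu})_{\le n}$. The key simplification step is to drop the truncations wherever they are harmless: since $(f_-\,/\,a')_{\ge-m+1}=0$ (recall $1/a'\in(z-\vp)^{-m+1}\mH_\vp^-$ from the proof of Lemma~\ref{thm-etabi}), one may replace $(\zeta^{i_\nu})_{\ge-m+1}$ by $\zeta^{i_\nu}$ inside the projections $(\,\cdot\,a')_+$ and by $\zeta^{i_\nu}$ (minus a harmless $(-m+1)$-truncation) elsewhere, and likewise replace $-(\zeta^{i_\nu})_{\le n}$ by $-\zeta^{i_\nu}$ inside $(\,\cdot\,\hat a')_-$; then $(\zeta^{i_1}a')_+=(\zeta^{i_1}\ell')_+=(\zeta^{i_1}\zeta')_+$ (using $a'=\ell'+\zeta'_-$ and that the $+$-projection kills $\zeta'_-$; similarly $(\zeta^{i}\hat a')_-=(\zeta^i\zeta')_-$), collecting the terms gives $\zeta^{i_1+i_2}a'-\zeta^{i_1}(\zeta^{i_2}\zeta')_--\zeta^{i_2}(\zeta^{i_1}\zeta')_-$ in the first slot and the stated expression in the second. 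For the $h$--$h$ and $\hat h$--$\hat h$ cases one component of each input vanishes, so only the $(\omega_1a')$ or $(\hat\omega_1\hat a')$ terms survive; using $(\ell'\chi^{-j})_+=(a'\chi^{-j})_+$ and $(\ell'\hat\chi^{-k})_-=(\hat a'\hat\chi^{-k})_-$ from the proof of Corollary~\ref{thm-flatcov} converts $a'$ back to $\ell'$ at the right places. The $h$--$\hat h$ case is the cross term, with $\vec\omega_1=((\chi^{-j_1})_{\ge-m+1},0)$ and $\vec\omega_2=(0,(\hat\chi^{-k_2})_{\le n})$; only the mixed terms $-\omega_1(\hat\omega_2\hat a')_-$ and $\hat\omega_2(\omega_1a')_+$ remain, again simplified via the same two identities.

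The main obstacle, such as it is, is purely bookkeeping: tracking which $[\,\cdot\,]_{\ge-m+1}$ and $[\,\cdot\,]_{\le n}$ truncations can be removed and which must be kept, and repeatedly using the three interchangeable forms of the derivatives — $a'=\ell'+\zeta'_-$ near $\infty$, $\hat a'=\ell'-\zeta'_+$ near $\vp$, together with the vanishing statements $(f_-/a')_{\ge-m+1}=0$ and $(f_+/\hat a')_{\le n}=0$ — to normalize each expression into the advertised shape. No genuinely new idea is needed beyond Lemmas~\ref{thm-omprod}, Corollary~\ref{thm-flatcov}, and the elementary projection identities already established; one simply verifies the four identities in turn and notes in each case that the result is manifestly symmetric in the two indices, as it must be.
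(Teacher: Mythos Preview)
Your overall strategy---substitute the covectors of Corollary~\ref{thm-flatcov} into the formula of Lemma~\ref{thm-omprod} and simplify via projection identities---is exactly what the paper does, and your treatment of the truncations $(\cdot)_{\ge-m+1}$, $(\cdot)_{\le n}$ is correct. However, the specific algebraic shortcut you invoke in the $t$--$t$ case is wrong: the claimed identities
\[
(\zeta^{i_1}a')_+=(\zeta^{i_1}\ell')_+=(\zeta^{i_1}\zeta')_+,\qquad (\zeta^{i}\hat a')_-=(\zeta^{i}\zeta')_-
\]
are both false. From $a'=\ell'+\zeta'_-$ one gets $(\zeta^{i_1}a')_+-(\zeta^{i_1}\ell')_+=(\zeta^{i_1}\zeta'_-)_+$, and since $\zeta^{i_1}$ generically has positive powers in $(z-\vp)$ this does \emph{not} vanish; the slogan ``the $+$-projection kills $\zeta'_-$'' holds only when the other factor lies in $\mH_\vp^-$. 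If you follow your identities literally you will not recover the stated first component.

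The correct reduction uses instead the single relation $a'-\hat a'=\zeta'$. After dropping the harmless truncations (your first step, which is fine), the first component is
\[
\zeta^{i_2}(\zeta^{i_1}a')_+-\zeta^{i_1}(\zeta^{i_2}a')_-+\zeta^{i_2}(\zeta^{i_1}\hat a')_-+\zeta^{i_1}(\zeta^{i_2}\hat a')_-.
\]
Rewrite $(\zeta^{i_1}a')_+=\zeta^{i_1}a'-(\zeta^{i_1}a')_-$ and then combine $-(\zeta^{i_\nu}a')_-+(\zeta^{i_\nu}\hat a')_-=-(\zeta^{i_\nu}(a'-\hat a'))_-=-(\zeta^{i_\nu}\zeta')_-$ to obtain $\zeta^{i_1+i_2}a'-\zeta^{i_2}(\zeta^{i_1}\zeta')_--\zeta^{i_1}(\zeta^{i_2}\zeta')_-$; the second component is handled the same way using $(\zeta^{i_\nu}(\hat a'-a'))_+=-(\zeta^{i_\nu}\zeta')_+$. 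The $h$--$h$, $\hat h$--$\hat h$ and $h$--$\hat h$ cases proceed as you outline, using $(a'\chi^{-j})_+=(\ell'\chi^{-j})_+$ and $(\hat a'\hat\chi^{-k})_-=(\ell'\hat\chi^{-k})_-$; only this one step in the $t$--$t$ computation needs to be replaced.
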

\begin{proof}
This lemma is verified by taking Corollary \ref{thm-flatcov} and Lemma \ref{thm-omprod} together, with the help of the
 following equalities for any series $f(z)$ and $g(z)$ in $(z-\vp)$,
    $$
    \begin{aligned}
    &(f_{\ge -m+1}a')_{+}=(f a')_{+},\quad (f_{\ge -m+1}g_{-})_{\ge -m+1}=(f g_{-})_{\ge -m+1},\\
    &(f_{\le n}\hat{a}')_{-}=(f\hat{a}')_{-},\quad (f_{\le n}g_{+})_{\le n}=(f g_{+})_{\le n}.
    \end{aligned}
    $$
For instance, one has
    $$
    \begin{aligned}
    \eta^{-1} \frac{\p}{\p t_{i_{1}}}\star \eta^{-1} \frac{\p}{\p t_{i_{2}}}=((\zeta^{i_{1}})_{\ge -m+1},-(\zeta^{i_{1}})_{\le n})\star
    ((\zeta^{i_{2}})_{\ge -m+1},-(\zeta^{i_{2}})_{\le n})=(\om,\hat{\om}),\\
    \end{aligned}
    $$
where
\begin{align*}
\om=&[(\zeta^{i_{2}})_{\ge -m+1}(\zeta^{i_{1}}a')_{+}-\zeta^{i_{1}}((\zeta^{i_{2}})_{\ge -m+1}a')_{-} \\
&\qquad +(\zeta^{i_{2}})_{\ge-m+1}(\zeta^{i_{1}}\hat{a}')_{-}+(\zeta^{i_{1}})_{\ge-m+1}(\zeta^{i_{2}}\hat{a}')_{-}]_{\ge -m+1}\\
=&[-(\zeta^{i_{2}})_{\ge -m+1}(\zeta^{i_{1}}a')_{-}+(\zeta^{i_{2}})_{\ge -m+1}\zeta^{i_{1}}a'-\zeta^{i_{1}}((\zeta^{i_{2}})_{\ge -m+1}a')_{-} \\
&\qquad +\zeta^{i_{2}}(\zeta^{i_{1}}\hat{a}')_{-} +\zeta^{i_{1}}(\zeta^{i_{2}}\hat{a}')_{-}]_{\ge -m+1}\\
=&[-\zeta^{i_{2}}(\zeta^{i_{1}}a')_{-}+\zeta^{i_{1}}(\zeta^{i_{2}}a')_{+}+\zeta^{i_{2}}(\zeta^{i_{1}}\hat{a}')_{-}
+\zeta^{i_{1}}(\zeta^{i_{2}}\hat{a}')_{-}]_{\ge -m+1}\\
    =&[\zeta^{i_{1}}\zeta^{i_{2}}a'-\zeta^{i_{1}}(\zeta^{i_{2}}(a'-\hat{a}'))_{-} -\zeta^{i_{2}}(\zeta^{i_{1}}(a'-\hat{a}'))_{-} ]_{\ge -m+1}\\
    =&[\zeta^{i_{1}+i_{2}}a'-\zeta^{i_{1}}(\zeta^{i_{2}}\zeta')_{-}-\zeta^{i_{2}}(\zeta^{i_{1}}\zeta')_{-}]_{\ge -m+1}, \\
\hat{\om}=&[\zeta^{i_{2}}((\zeta^{i_{1}})_{\le n}\hat{a}')_{+}-(\zeta^{i_{1}})_{\le n}(\zeta^{i_{2}}\hat{a}')_{-}
-\zeta^{i_{2}}((\zeta^{i_{1}})_{\ge-m+1}a')_{+}-\zeta^{i_{1}}((\zeta^{i_{2}})_{\ge-m+1}a')_{+}]_{\le n} \\
=&[-\zeta^{i_{2}}((\zeta^{i_{1}})_{\le n}\hat{a}')_{-}+\zeta^{i_{2}}(\zeta^{i_{1}})_{\le n}\hat{a}'
-(\zeta^{i_{1}})_{\le n}(\zeta^{i_{2}}\hat{a}')_{-}  -\zeta^{i_{2}}(\zeta^{i_{1}}a')_{+}-\zeta^{i_{1}}(\zeta^{i_{2}}a')_{+}]_{\le n} \\
=&[-\zeta^{i_{2}}(\zeta^{i_{1}}\hat{a}')_{-}+\zeta^{i_{1}}(\zeta^{i_{2}}\hat{a}')_{+}
-\zeta^{i_{2}}(\zeta^{i_{1}}a')_{+}-\zeta^{i_{1}}(\zeta^{i_{2}}a')_{+}]_{\le n}\\
    =&[-\zeta^{i_{2}}\zeta^{i_{1}}\hat{a}'+\zeta^{i_{1}}(\zeta^{i_{2}}(\hat{a}'-a'))_{+} +\zeta^{i_{2}}(\zeta^{i_{1}}(\hat{a}'-a'))_{+} ]_{\le n}\\
    =&[-\zeta^{i_{1}+i_{2}}\hat{a}'-\zeta^{i_{1}}(\zeta^{i_{2}}\zeta')_{+}-\zeta^{i_{2}}(\zeta^{i_{1}}\zeta')_{+}]_{\le n}.
\end{align*}
So the first equality in \eqref{prodflatcov1} is valid. The other cases are similar, thus the lemma is proved.
\end{proof}

\begin{prfof}{Proposition~\ref{thm-3tensor}.}
According to \eqref{pairing}, \eqref{metric} and \eqref{prodtan}, we have
\begin{equation}\label{3tensor2}
\left\la\p_u\circ\p_v, \p_w\right\ra_\eta= \left\langle \eta^{-1}\p_u\star \eta^{-1}\p_v, \p_w\right\rangle, \quad u, v,
w\in  \mathbf{t}\cup\mathbf{h}\cup\hat{\mathbf{h}}.
\end{equation}
Let us substitute into it with the data in Lemma~\ref{thm-prodflatcov} and Corollary~\ref{thm-flatv}. For instance,
\begin{align*}
    &c(\p_{t_{i_1}},\p_{t_{i_2}},\p_{t_{i_3}})= \left\langle \frac{\p}{\p t_{i_{1}}}\circ \frac{\p}{\p t_{i_{2}}},
    \frac{\p}{\p t_{i_{3}}}\right\rangle_{\eta} \\
    =&\frac{1}{2\pi\mathrm{i}}\oint_{\Gamma}
    \left([\zeta^{i_{1}+i_{2}}a'-\zeta^{i_{1}}(\zeta^{i_{2}}\zeta')_{-}-\zeta^{i_{2}}(\zeta^{i_{1}}\zeta')_{-} ]
     (-\zeta^{i_3}\zeta')_{-} \right.\nn \\
&\qquad \left. +[-\zeta^{i_{1}+i_{2}}\hat{a}'-\zeta^{i_{1}}(\zeta^{i_{2}}\zeta')_{+} -\zeta^{i_{2}}(\zeta^{i_{1}}\zeta')_{+}]
(\zeta^{i_3}\zeta')_{+}\right) d z\\
 =&\frac{1}{2\pi\mathrm{i}}\oint_{\Gamma}
    \left([-\zeta^{i_{1}+i_{2}}a'+\zeta^{i_{1}}(\zeta^{i_{2}}\zeta')_{-}+\zeta^{i_{2}}(\zeta^{i_{1}}\zeta')_{-} ]
    \zeta^{i_3}\zeta'  \right.\nn \\
&\qquad \left. +[-\zeta^{i_{1}+i_{2}}(\hat{a}'-a')-\zeta^{i_{1}}\zeta^{i_{2}}\zeta'  -\zeta^{i_{2}}\zeta^{i_{1}}\zeta' ]
 (\zeta^{i_3}\zeta')_{+}\right) d z \\
 =&\frac{1}{2\pi\mathrm{i}}\oint_{\Gamma}
    \left(-\zeta^{i_{1}+i_{2}+i_3}a'+\zeta^{i_{1}+i_3}\zeta'(\zeta^{i_{2}}\zeta')_{-}
    +\zeta^{i_{2}+i_3}\zeta'(\zeta^{i_{1}}\zeta')_{-} -\zeta^{i_{1}+i_2}\zeta'(\zeta^{i_3}\zeta')_{+}\right) d z \\
 =&\frac{1}{2\pi\mathrm{i}}\oint_{\Gamma}
    \left(-\zeta^{i_{1}+i_{2}+i_3}(\zeta'+a')+\left\{\zeta^{i_{1}+i_3}\zeta'(\zeta^{i_{2}}\zeta')_{-}
    +\mathrm{c.p.}(i_1,i_2,i_3)\right\} \right) d z.
\end{align*}
The other cases are similar. Thus the proposition is proved.
\end{prfof}

\subsection{The potential function}

We proceed to show that the values $<\p_u\circ\p_v, \p_w>_{\eta}$ in Proposition~\ref{thm-3tensor} can be represented as the third-order
derivatives of a certain potential function. To this end, let us state the following lemma first.

\begin{lem} \label{thm-V1V2}
There locally exit three functions $V_1\left(\mathbf{t}\right)$, $V_2\left( \mathbf{h}\right)$ and  $G\left(\mathbf{h},
\hat{\mathbf{h}}\right)$ such that
\begin{align}\label{}
&\frac{\p^{2}V_{1}}{\p t_{i_1}\p t_{i_2}}=\frac{1}{2\pi \mathrm{i}}\oint_\Gamma\frac{\zeta'(z)\zeta(z)^{i_1+i_2}}{z}d z, \label{V1} \\
&\frac{\p^{2}V_{2}}{\p h_{j_1}\p h_{j_2}}=-\res_{z=\infty}\frac{\ell'(z)\chi(z)^{-j_1-j_2}}{z}d z, \label{V2} \\
&\frac{\p^3 G}{\p u \p v \p w}=-\left(\res_{z=\infty}+\res_{z=\varphi}\right)\frac{\p_{\mathit{u}}\ell(z)\cdot
 \p_{\mathit{v}}\ell(z)\cdot \p_{\mathit{w}}\ell(z)}{\ell'(z)}d z  \label{Suvw}
\end{align}
with $u,v,w\in\mathbf{h}\cup\hat{\mathbf{h}}$.

\end{lem}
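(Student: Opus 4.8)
The claim asserts the local existence of three functions whose mixed partials are prescribed integrals; by the Poincar\'e lemma this reduces to checking the appropriate symmetry (closedness) conditions. For $V_1$, I would set $P_{i_1 i_2}(\mathbf t)=\frac{1}{2\pi\mathrm i}\oint_\Gamma \zeta'(z)\zeta(z)^{i_1+i_2}/z\,dz$ and verify that $\p P_{i_1 i_2}/\p t_{i_3}$ is totally symmetric in $i_1,i_2,i_3$. Using \eqref{zetaellt}, namely $\p\zeta/\p t_{i_3}=-\zeta^{i_3}\zeta'$, one computes
\[
\frac{\p P_{i_1 i_2}}{\p t_{i_3}}=\frac{1}{2\pi\mathrm i}\oint_\Gamma \frac{1}{z}\left(\frac{\p\zeta'}{\p t_{i_3}}\zeta^{i_1+i_2}+(i_1+i_2)\zeta^{i_1+i_2-1}\zeta'\frac{\p\zeta}{\p t_{i_3}}\right)dz;
\]
after integrating the first term by parts in $z$ (the boundary terms drop since we integrate over a closed contour) one sees that $\frac{\p}{\p t_{i_3}}(\zeta'\zeta^{i_1+i_2})$ combines into $-\frac{d}{dz}(\zeta^{i_1+i_2+i_3}\zeta')\cdot(\text{something symmetric})$, and the resulting expression depends only on $i_1+i_2+i_3$ together with a contour integral of a total $z$-derivative. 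The upshot is that $\p P_{i_1 i_2}/\p t_{i_3}$ is manifestly symmetric, so $\sum P_{i_1 i_2}\,dt_{i_1}\otimes dt_{i_2}$ (viewed as a closed symmetric $2$-form's ``second derivative'') integrates to $V_1(\mathbf t)$. The case of $V_2$ is entirely analogous, replacing $\oint_\Gamma$ by $-\res_{z=\infty}$, $\zeta$ by $\chi$, $t_i$ by $h_j$, and using \eqref{zetaellh} together with $\ell=\chi^m$; the integration by parts is now the statement that the residue at $\infty$ of a total derivative vanishes.

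\textbf{The function $G$.} For the third identity one must show that the $3$-tensor
\[
T_{uvw}=-\Bigl(\res_{z=\infty}+\res_{z=\varphi}\Bigr)\frac{\p_u\ell\cdot\p_v\ell\cdot\p_w\ell}{\ell'}\,dz,\qquad u,v,w\in\mathbf h\cup\hat{\mathbf h},
\]
is fully symmetric (which is obvious from the formula) \emph{and} that it is ``potential'', i.e.\ $\p T_{uvw}/\p x$ is symmetric in all four indices $u,v,w,x$. Here the key inputs are \eqref{zetaellh}--\eqref{zetaellhh}: both $\p_u\ell$ and $\p_x\ell$ are expressed through $\chi,\hat\chi$ and their derivatives, and crucially $\p_x\zeta=0$ for $x\in\mathbf h\cup\hat{\mathbf h}$. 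Differentiating $T_{uvw}$ and collecting terms, one meets $\p_x(\p_u\ell)$, which by commuting partial derivatives equals $\p_u(\p_x\ell)$, plus a term $-(\p_u\ell\,\p_v\ell\,\p_w\ell)\,\p_x(1/\ell')$. The standard trick (this is exactly how one verifies potentiality for Landau--Ginzburg-type Frobenius structures, cf.\ Dubrovin) is to write the integrand as a residue pairing and use the ``thermodynamic identity'' $\p_x(1/\ell')=-\frac{d}{dz}(\p_x z|_{\text{level of }\ell})/\ell'$ together with integration by parts under $\res$; after this manipulation the derivative $\p_x T_{uvw}$ becomes a residue of a symmetric-in-$(u,v,w,x)$ expression plus the residue of a total $z$-derivative, which vanishes. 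Then Poincar\'e's lemma (applied to the formal/infinite-dimensional setting as in the earlier references \cite{CDM, WX, WZuo}) produces $G(\mathbf h,\hat{\mathbf h})$ locally.

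\textbf{Main obstacle.} The routine part is the integration-by-parts bookkeeping; the genuinely delicate point is the symmetry check for $G$, because $\p_x\ell$ for $x\in\mathbf h$ involves a $(\cdot)_+$-projection near $\infty$ while for $x\in\hat{\mathbf h}$ it involves a $(\cdot)_-$-projection near $\varphi$, so when one differentiates $\res_{z=\infty}$ by a $\hat{\mathbf h}$-variable (or vice versa) one must control how the projections interact with the two residues. The saving grace is that $\frac{\p_u\ell\,\p_v\ell\,\p_w\ell}{\ell'}$ is globally meromorphic on $\C\setminus\{\varphi\}$ with poles only at $\infty$ and $\varphi$, so $\res_{z=\infty}+\res_{z=\varphi}$ of it, and of its $x$-derivative, is a genuine global residue sum that can be manipulated by contour deformation regardless of which block $x$ lies in; thus the projections, which only affect the \emph{local} Laurent tails, wash out in the sum. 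I expect the bulk of the proof the authors give to be precisely this verification, organized case-by-case for $u,v,w,x$ running over $\mathbf h$ and $\hat{\mathbf h}$, followed by a one-line invocation of the Poincar\'e lemma for the existence of $V_1,V_2,G$.
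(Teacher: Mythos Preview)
Your plan for $V_1$ and $V_2$ matches the paper exactly: one further differentiation using \eqref{zetaellt} (resp.\ \eqref{zetaellh}) collapses the expression to $-\frac{1}{2\pi\mathrm i}\oint_\Gamma (\zeta'\zeta^{i_1+i_2+i_3})'\,z^{-1}\,dz$ (resp.\ $-\res_{z=\infty}(\ell'\chi^{-j_1-j_2-j_3})'\,z^{-1}\,dz$), which depends only on the sum of indices and hence is fully symmetric. For $G$ the paper also reduces to the four-index symmetry of $\partial_s G_{u,v,w}$ and checks it case-by-case (introducing $R_j=\ell'\chi^{-j}$ and verifying, e.g., $\partial_{h_{j_4}}G_{h_{j_1},h_{j_2},h_{j_3}}=\partial_{h_{j_3}}G_{h_{j_1},h_{j_2},h_{j_4}}$ and the mixed $\mathbf h/\hat{\mathbf h}$ cases by direct local expansion and integration by parts), citing Krichever for the general statement.

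One point to correct in your heuristic: the assertion that $\frac{\partial_u\ell\,\partial_v\ell\,\partial_w\ell}{\ell'}$ has poles \emph{only} at $\infty$ and $\varphi$ is false in general. The function $\ell'$ has $m+n$ zeros (with multiplicity) in $\C\setminus\{\varphi\}$, and since each $\partial_u\ell$ is a rational function regular at those points with no reason to vanish there, the quotient acquires genuine poles at the critical points of $\ell$. Consequently $\res_{z=\infty}+\res_{z=\varphi}$ is \emph{not} a complete residue sum, and the ``global contour deformation'' justification you offer for why the projections wash out does not go through. This does not actually damage the argument, because the thermodynamic-identity / integration-by-parts computation you describe operates \emph{locally} at each of the two marked poles separately --- exactly as in the paper's case-by-case verification --- and needs no global input. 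You should drop the global-residue heuristic and run the local computation directly.
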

\begin{proof}
Firstly, it follows from \eqref{zetaellt} that
\begin{align*}
&\frac{1}{2\pi \mathrm{i}}\frac{\p}{\p t_{i_3}} \oint_\Gamma\frac{ \zeta'(z)\zeta(z)^{i_1+i_2} }{z}d z \\
=& \frac{1}{2\pi\mathrm{i}}\oint_\Gamma\frac{-{(\zeta'(z)\zeta(z)^{i_3})'\cdot\zeta(z)^{i_1+i_2} - \zeta'(z)\cdot (i_1+i_2)\zeta'(z)\zeta(z)^{i_1+i_2+i_3-1} }(z)}{z}d z \\
=& \frac{1}{2\pi\mathrm{i}}\oint_\Gamma\frac{-\left( \zeta'(z)\zeta(z)^{i_1+i_2+i_3} \right)' }{z}d z,
\end{align*}
which is symmetric with respect to the indices $i_1$, $i_2$ and $i_3$. So the existence of $V_1$ is verified.

Secondly, in consideration of the form of $\ell(z)$ and $\chi(z)$, we observe that the right hand side of \eqref{V2}
depends only on $\{h_j\}_{j=1}^{m-1}$. In the same way as above, by using \eqref{zetaellh} we have
\begin{align*}
&-\frac{\p   }{\p h_{j_3}}\res_{z=\infty}\frac{\ell'(z)\chi(z)^{-j_1-j_2}}{z}d z \\
=& -\res_{z=\infty}\left( \left(\ell'(z)\chi(z)^{-j_3}\right)'_+\chi(z)^{-j_1-j_2}
-\ell'(z)\cdot\frac{j_1+j_2}{m}\chi(z)^{-j_1-j_2-m}\left(\ell'(z)\chi(z)^{-j_3}\right)_+\right)\frac{d z}{z}  \\
=& -\res_{z=\infty}\left( \left(\ell'(z)\chi(z)^{-j_3}\right)' \chi(z)^{-j_1-j_2}
+\left(\chi(z)^{-j_1-j_2}(z)\right)' \ell'(z)\chi(z)^{-j_3} \right)\frac{d z}{z} \\
=& -\res_{z=\infty} \left(\ell'(z)\chi(z)^{-j_1-j_2-j_3}\right)' \frac{d z}{z},
\end{align*}
which is symmetric with respect to the indices $j_1$, $j_2$ and $j_3$. Hence the function $V_2$ exists.

The existence of $G$ has been proved by Krichever in \cite{Kr} (see Theorem~5.6 therein for a more general conclusion). For convenience of the readers, let us verify it in brief now.
To this end, denote the right hand side of \eqref{Suvw} as $G_{\mathit{u},\mathit{v},\mathit{w}}$, then it is sufficient
to show  that $ \p_s G_{\mathit{u},\mathit{v},\mathit{w}}$ is symmetric with respect to $\mathit{u},\mathit{v},
\mathit{w},\mathit{s}\in \mathbf{h}\cup\hat{\mathbf{h}}$.
For instance, we have
\begin{align}
&G_{h_{j_{1}},h_{j_{2}},h_{j_{3}}} \nn \\
=&-\res_{z=\infty}\frac{\p_{h_{j_{1}}}\ell\cdot \p_{h_{j_{2}}}\ell\cdot \p_{h_{j_{3}}}\ell}{\ell'}dz\nn\\
        =&-\res_{z=\infty}\frac{(\ell'\chi^{-j_{1}})_{+} (\ell'\chi^{-j_{2}})_{+}(\ell'\chi^{-j_{3}})_{+}}{\ell'}d z\nn\\
        =&-\res_{z=\infty}\frac{[\ell'\chi^{-j_{1}} -(\ell'\chi^{-j_{1}})_{-}][\ell'\chi^{-j_{2}} -(\ell'\chi^{-j_{2}})_{-}][\ell'\chi^{-j_{3}}-(\ell'\chi^{-j_{3}})_{-}]}{\ell'}d z\nn\\
        =&-\res_{z=\infty}\left(\ell'\ell'\chi^{-j_{1}-j_{2}-j_{3}} -\{\ell'\chi^{-j_{1}-j_{2}}(\ell'\chi^{-j_{3}})_{-}+\mathrm{c.p.}(j_{1},j_{2},j_{3})\}\right) d z \nn\\
         =&-\res_{z=\infty}\left(R_{j_{1}+j_{2}}(R_{j_{3}})_{+} -R_{j_{1}+j_{3}}(R_{j_{2}})_{-}-R_{j_{2}+j_{3}}(R_{j_{1}})_{-}\right)dz\nn\\
         =&-\res_{z=\infty}\left(R_{j_{1}+j_{2}}(R_{j_{3}})_{+} -(R_{j_{1}+j_{3}})_+R_{j_{2}}-(R_{j_{2}+j_{3}})_+R_{j_{1}} \right)dz, \label{S3h}
    \end{align}
where $R_{j}(z)=\ell'(z)\chi(z)^{-j}$. With the help of \eqref{zetaellh}, it is easy to see that
\begin{equation}\label{Rj}
R_{j_1}R_{j_2+j_3}=R_{j_2}R_{j_3+j_1}, \quad \frac{\p R_{j_1}}{\p h_{j_2}}=\left(\chi^{-j_1}(R_{j_2})_{+}\right)', \quad  \left(\frac{\p R_{j_1}}{\p h_{j_2}}\right)_{+}=(R_{j_1+j_2})_{+}'.
\end{equation}
In order to avoid lengthy notations, let us write $f\backsim g$ if $f-g$ is symmetric with respect to the indices $j_{3}$ and $j_{4}$, then it is straightforward to calculate
    \begin{align*}
\frac{G_{h_{j_{1}},h_{j_{2}},h_{j_{3}}} }{ \p{h_{j_{4}}} } \backsim&-\res_{z=\infty}\left( (\chi^{-j_{1}-j_{2}}(R_{j_{4}})_{+})'(R_{j_{3}})_{+} -(R_{j_{1}+j_{3}})_+(\chi^{-j_{2}}(R_{j_{4}})_{+})' \right.\\
&\qquad\left.-(R_{j_{2}+j_{3}})_+(\chi^{-j_{1}}(R_{j_{4}})_{+})'\right)dz\\
\backsim&-\res_{z=\infty}\left(\chi^{-j_{1}}(\chi^{-j_{2}}(R_{j_{4}})_{+})'(R_{j_{3}})_{+} -(\chi^{-j_1}(R_{j_{3}})_+)_+(\chi^{-j_{2}}(R_{j_{4}})_{+})'_- \right.\\
&\qquad\left.-(\chi^{-j_2}(R_{j_{3}})_+)_+(\chi^{-j_{1}}(R_{j_{4}})_{+})'\right)dz\\
\backsim&-\res_{z=\infty}\left(\chi^{-j_{1}}(\chi^{-j_{2}}(R_{j_{4}})_{+})'(R_{j_{3}})_{+} -\chi^{-j_1}(R_{j_{3}})_+(\chi^{-j_{2}}(R_{j_{4}})_{+})'_- \right.\\
&\qquad \left.-(\chi^{-j_2}(R_{j_{3}})_+)_+(\chi^{-j_{1}}(R_{j_{4}})_{+})'\right)dz\\
\backsim&-\res_{z=\infty}\left(\chi^{-j_{1}}(R_{j_{3}})_{+}(\chi^{-j_{2}}(R_{j_{4}})_{+})_+'  +(\chi^{-j_2}(R_{j_{3}})_+)_+'\chi^{-j_{1}}(R_{j_{4}})_{+}\right)dz\\
 \backsim&\quad0,
    \end{align*}
which leads to
\[
    \frac{ G_{h_{j_{1}},h_{j_{2}},h_{j_{3}}} }{\p {h_{j_{4}}}}=\frac{ G_{h_{j_{1}},h_{j_{2}},h_{j_{4}}} }{\p {h_{j_{3}}}}.
\]
Furthermore, from \eqref{zetaellhh} it follows that
\[
\p_{\hat{h}_k}\ell(z)=\p_{\hat{h}_k}\ell(z)_-, \quad \left(\frac{\p_{\hat{h}_{k}}\ell_-}{\ell'}\right)_-=0 \hbox{ as } |z-\vp|\to0.
\]
The first equality together with  $\p_{h_j}\ell(z)=\p_{h_j}\ell(z)_+$ implies  $\p_{h_{j}}\p_{\hat{h}_{k}}\ell=0$, hence we have
    \begin{align}
    G_{h_{j_{1}},h_{j_{2}},\hat{h}_{k}}  =&-(\res_{z=\infty}+\res_{z=\vp})\frac{\p_{h_{j_{1}}}\ell_+\cdot \p_{h_{j_{2}}}\ell_+\cdot \p_{\hat{h}_{k}}\ell_-}{\ell'}dz \nn\\
    =&-\res_{z=\infty}\frac{[\ell'\chi^{-j_{1}}-(\ell'\chi^{-j_{1}})_{-}][\ell'\chi^{-j_{2}} -(\ell'\chi^{-j_{2}})_{-}]\p_{\hat{h}_{k}}\ell_-}{\ell'}dz \nn\\
    =&-\res_{z=\infty}(\ell'\chi^{-j_{1}-j_{2}})_{+}\p_{\hat{h}_{k}}\ell_- dz, \label{S3h2}
    \end{align}
From the first and the second equalities in \eqref{S3h} it follows that
\begin{align*}
&\frac{\p G_{h_{j_{1}},h_{j_{2}},h_{j_{3}}}}{\p{\hat{h}_{k}}} \\ =&\res_{z=\infty}\frac{\p_{h_{j_{1}}}\ell_+\cdot \p_{h_{j_{2}}}\ell_+\cdot \p_{h_{j_{3}}}\ell_+\cdot \p_{\hat{h}_{k}}\ell'_-}{\ell'\cdot \ell'}dz\\
    =&\res_{z=\infty}\frac{[\ell'\chi^{-j_{1}}-(\ell'\chi^{-j_{1}})_{-}][\ell'\chi^{-j_{2}} -(\ell'\chi^{-j_{2}})_{-}][\ell'\chi^{-j_{3}} -(\ell'\chi^{-j_{3}})_{-}]\p_{\hat{h}_{k}}\ell'_-}{\ell'\cdot \ell'}dz\\
    =&\res_{z=\infty}\ell'\chi^{-j_{1}-j_{2}-j_{3}}(\p_{\hat{h}_{k}}\ell_-)' dz
   \\
    =&-\res_{z=\infty}(\ell'\chi^{-j_{1}-j_{2}-j_{3}})'_+\p_{\hat{h}_{k}}\ell_- dz
    \\
    =&-\res_{z=\infty}\frac{\p(\ell'\chi^{-j_{1}-j_{2}})_{+} }{\p {h_{j_{3}}}} \p_{\hat{h}_{k}}\ell_- dz \\
    =&\frac{\p G_{h_{j_{1}},h_{j_{2}},\hat{h}_{k}} }{\p {h_{j_{3}}}},
\end{align*}
where the last equality is due to \eqref{S3h2} and \eqref{Rj}.
    The other cases are similar. Therefore the lemma is proved.
\end{proof}

\begin{rem}\label{rmk-V1V2S}
Each function of $V_1$ and $V_2$ is determined up to a linear function of its arguments, while the function $G$ is determined
 up to a quadratic function of its arguments. One refers to, for example \cite{AK-CMP} for some concrete examples of $G$.
\end{rem}

Under the assumptions in \eqref{GmGm},
for any $z_1\in\Gamma'$, $z_2\in\Gamma$, $\vp\in U$,  one has
\begin{align*}
&\log\left(\frac{z_{1}-z_{2}}{z_{1}}\right)=-\sum_{i\ge1}\frac{1}{i}\left(\frac{z_{2}}{z_{1}}\right)^{i}, \\
&\frac{1}{z_{1}-z_{2}}=\frac{1}{(z_{1}-\vp)-(z_{2}-\vp)} =\sum_{i\ge0}\frac{(z_{2}-\vp)^i}{(z_{1}-\vp)^{i+1}}.
\end{align*}
Let $f(z)=\sum_{-\infty}^{+\infty}f_{i}(z-\varphi)^{i}$ be an arbitrary holomorphic function in a neighborhood of the closed bend with
boundaries $\Gamma$ and $\Gamma'$, then it is easy to verify the following equalities:
\begin{align}
\frac{1}{2\pi \mathrm{i}}\oint_{\Gamma'}f'(z_{2})\log\left(\frac{z_{1}-z_{2}}{z_{1}}\right)dz_{2}=
&\frac{1}{2\pi \mathrm{i}}\oint_{\Gamma'}\frac{f(z_{2})}{z_{1}-z_{2}}dz_{2}=f(z_{1})_{-},  \label{intlog1} \\
\frac{1}{2\pi \mathrm{i}}\oint_{\Gamma}f'(z_{1})\log\left(\frac{z_{1}-z_{2}}{z_{1}}\right)dz_{1}=
&\frac{1}{2\pi \mathrm{i}}\oint_{\Gamma}f(z_{1})\left(\frac{-1}{z_{1}-z_{2}}+\frac{1}{z_1}\right)d z_{1}
\nn\\
=&-f(z_{2})_{+}+\frac{1}{2\pi \mathrm{i}}\oint_{\Gamma}\frac{f(z)}{z}d z. \label{intlog2}
\end{align}

Now we are ready to introduce a function of $\mathbf{t}\cup\mathbf{h}\cup\hat{\mathbf{h}}$ as
\begin{align}
\mathcal{F}=&\frac{1}{(2\pi \mathrm{i})^{2}}\oint_{\Gamma}\oint_{\Gamma'}\left(\frac{1}{2}\zeta(z_{1}) \zeta(z_{2})
+\zeta(z_{1}) \ell(z_{2}) -\ell(z_{1}) \zeta(z_{2})\right) \log\left(\frac{z_{1}-z_{2}}{z_{1}}\right)dz_{1}dz_{2} \nn\\
&\qquad -\frac{V_1(\mathbf{t})}{2\pi \mathrm{i} }\oint_\Gamma \left(\frac{1}{2}\zeta(z)+\ell(z)\right) d z
+ \frac{V_2(\mathbf{h})}{2\pi \mathrm{i} }\oint_\Gamma \zeta(z) d z +G(\mathbf{h},\hat{\mathbf{h}}). \label{potential}
\end{align}
In fact, from the equalities \eqref{ti} and \eqref{chih} it follows that
\begin{align}\label{}
t_{-1}= &\frac{1}{2\pi \mathrm{i} }\oint_\Sigma z(\zeta)d\zeta = -\frac{1}{2\pi \mathrm{i} }\oint_\Gamma \zeta(z) d z,  \\
n\hat{h}_n=&-n\res_{\hat{\chi}=\infty}z(\hat{\chi})\hat{\chi}^{n-1}d \hat{\chi} =\res_{z=\vp}\hat{\chi}^{n}(z)d z
= \frac{1}{2\pi \mathrm{i} }\oint_\Gamma \ell(z) d z,
\end{align}
hence the function $\mathcal{F}$ can be represented equivalently as \eqref{potential0}. Clearly, by virtue of
 Remark~\ref{rmk-V1V2S}, the function $\mathcal{F}$ is determined up to a quadratic function of the flat coordinates.

\begin{prop}\label{thm-potential}
    The function $\mathcal{F}$ given above satisfies, for any $u, v, w\in \mathbf{t}\cup\mathbf{h}\cup\hat{\mathbf{h}}$,
\begin{equation}\label{Fuvw}
\frac{\p^{3} \mathcal{F}}{\p \mathit{u} \p \mathit{v}\p \mathit{w}}=\left\la\p_u\circ\p_v, \p_w\right\ra_\eta.
\end{equation}
\end{prop}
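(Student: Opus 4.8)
\emph{Approach.} By \eqref{3tensor2}, the assertion \eqref{Fuvw} is equivalent to showing that the third derivatives $\p_u\p_v\p_w\mathcal{F}$ reproduce the residue formulas of Proposition~\ref{thm-3tensor} for all $u,v,w\in\mathbf{t}\cup\mathbf{h}\cup\hat{\mathbf{h}}$. The plan is to differentiate the explicit form \eqref{potential0} term by term. Writing $\mathcal{F}=\Phi+\bigl(\tfrac12 t_{-1}-n\hat{h}_n\bigr)V_1(\mathbf{t})-t_{-1}V_2(\mathbf{h})+G(\mathbf{h},\hat{\mathbf{h}})$, with $\Phi$ the double contour integral, the tools are the derivative rules \eqref{zetaellt}--\eqref{zetaellhh}, the integral identities \eqref{intlog1}--\eqref{intlog2}, the projection identities \eqref{pqminus}--\eqref{pqplus}, the defining relations \eqref{V1}, \eqref{V2}, \eqref{Suvw}, and the normalizations $\tfrac1{2\pi\mathrm{i}}\oint_\Gamma\zeta\,dz=-t_{-1}$ and $\tfrac1{2\pi\mathrm{i}}\oint_\Gamma\ell\,dz=n\hat{h}_n$.

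\emph{Collapsing the double integral.} The main step is to reduce $\Phi$ to a single contour integral. First I would differentiate it once: by \eqref{zetaellt}--\eqref{zetaellhh}, for all indices except $u=t_{-1}$ and $u=\hat{h}_n$ the quantities $\p_u\zeta(z)$ and $\p_u\ell(z)$ are $z$-derivatives of functions holomorphic in a neighborhood of the closed annulus bounded by $\Gamma$ and $\Gamma'$ — for instance $\p_{t_i}\zeta=\bigl(-\tfrac{1}{i+1}\zeta^{i+1}\bigr)'$ for $i\ne-1$, $\p_{h_j}\ell=\bigl(\tfrac{m}{m-j}(\chi^{m-j})_+\bigr)'$, and $\p_{\hat h_k}\ell=-\bigl(\tfrac{n}{n-k}(\hat\chi^{n-k})_-\bigr)'$ for $k\ne n$ — so that, after differentiating the $\log$-integrand, the inner integral matches the left side of \eqref{intlog2} (if the differentiated factor lives on $\Gamma$) or of \eqref{intlog1} (if it lives on $\Gamma'$), which collapses one contour. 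The decisive feature is that \eqref{intlog2} generates an extra term of the shape $\tfrac1{2\pi\mathrm{i}}\oint_\Gamma\tfrac{(\,\cdot\,)}{z}\,dz$; by \eqref{V1}--\eqref{V2} these are $\p_u$-derivatives of $V_1$ or $V_2$, so the corrections $\bigl(\tfrac12 t_{-1}-n\hat{h}_n\bigr)V_1-t_{-1}V_2$ in \eqref{potential0} are precisely what is needed to cancel them. The exceptional indices $t_{-1}$ and $\hat{h}_n$, where $-\zeta^{-1}\zeta'$ and $-n\hat\chi^{-1}\hat\chi'$ have the multivalued antiderivatives $-\log\zeta$ and $-n\log\hat\chi$, would be handled in parallel by writing $\log\zeta=\log z-\log\tfrac{z}{\zeta}$ (the second summand being single-valued on the annulus by condition~(C3)) together with $t_0=\tfrac1{2\pi\mathrm{i}}\oint_\Gamma\log\tfrac{z}{\zeta}\,dz$ and $\hat{h}_0=\vp$. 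After this cancellation, $\p_u\mathcal{F}$ is (up to an affine function of the flat coordinates, and up to $\p_u G$ when $u\in\mathbf{h}\cup\hat{\mathbf{h}}$) a single contour integral over $\Gamma$ of elementary expressions in $\zeta$, $\ell$ and their $\pm$-projections.

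\emph{The remaining two differentiations.} Differentiating this single-contour expression twice more uses only \eqref{pqminus}--\eqref{pqplus} and elementary residue calculus for passing between $\oint_\Gamma$ and $\res_{z=\infty}+\res_{z=\vp}$. The $G$-part yields, by its very definition \eqref{Suvw}, exactly the $-(\res_{z=\infty}+\res_{z=\vp})\tfrac{\p_u\ell\cdot\p_v\ell\cdot\p_w\ell}{\ell'}\,dz$ contributions of Proposition~\ref{thm-3tensor}, while the remaining $V_1$- and $V_2$-corrections supply the last $1/z$-cancellations and the bookkeeping that distinguishes $\ell'$ from $a'=\zeta'_-+\ell'$ and from $\hat a'$. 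One then runs through the cases according to the types of $(u,v,w)$: the pure cases $ttt$, $hhh$, $\hat h\hat h\hat h$; the mixed cases $tth$, $tt\hat h$, $hht$, $hh\hat h$, $\hat h\hat h t$, $\hat h\hat h h$; and the fully mixed case $th\hat h$, which vanishes at once since no term of $\mathcal{F}$ depends on all three groups of coordinates in a way surviving such a third derivative. Comparing each assembled expression with Proposition~\ref{thm-3tensor} completes the proof.

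\emph{Main obstacle.} The genuine difficulty is entirely the bookkeeping: keeping track of which $(\,\cdot\,)_\pm$ projections survive after each differentiation, and verifying that every $1/z$ term produced by \eqref{intlog2} is cancelled by a derivative of the $V_1$- or $V_2$-correction — this being the sole reason those corrections appear in \eqref{potential0} — together with the edge-index cases where $i_1,i_2,i_3\in\{-1,0\}$ or $k_1,k_2,k_3\in\{0,n\}$. Once the $1/z$ terms are cleared, reassembling everything into the compact residue formulas of Proposition~\ref{thm-3tensor} is routine but lengthy.
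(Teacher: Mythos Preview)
Your approach is correct and follows the same strategy as the paper: differentiate \eqref{potential0} using \eqref{zetaellt}--\eqref{zetaellhh}, collapse the double integral via \eqref{intlog1}--\eqref{intlog2}, check that the $V_1$-, $V_2$-corrections absorb the $\tfrac{1}{z}$-terms produced by \eqref{intlog2}, and compare with Proposition~\ref{thm-3tensor} case by case.

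The one organizational difference worth noting is that the paper differentiates all three times \emph{before} collapsing, rather than collapsing after the first derivative as you propose. This buys uniformity in the indices: writing $Z_i(z)=\zeta(z)^i\zeta'(z)$, the paper uses $\p_{t_{i_1}}\p_{t_{i_2}}\zeta={Z_{i_1+i_2}}'$ and $\p_{t_{i_1}}\p_{t_{i_2}}\p_{t_{i_3}}\zeta=-{Z_{i_1+i_2+i_3}}''$, so after three derivatives every term already carries a factor that is the $z$-derivative of a single-valued function, and \eqref{intlog1}--\eqref{intlog2} apply directly for \emph{all} indices --- no separate treatment of $t_{-1}$ or $\hat h_n$ via $\log\zeta$ or $\log\hat\chi$ is needed. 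Your early-collapse route is equally valid and arguably cleaner to state, at the price of the edge-index bookkeeping you correctly flag. (Incidentally, $i=0$ and $k=0$ are not genuinely exceptional in your scheme: $\p_{t_0}\zeta=(-\zeta)'$ and $\p_{\hat h_0}\ell=-(\ell')_-=(-\ell_-)'$ already have single-valued antiderivatives on the annulus.)
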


\begin{proof}
Let us verify the equalities \eqref{Fuvw} in a straightforward way, with the help of \eqref{zetaellt}--\eqref{zetaellhh},
\eqref{intlog1} and \eqref{intlog2}.
For convenience, denote $Z_i(z)=\zeta(z)^i\zeta'(z)$, then from \eqref{zetaellt} it is easy to see that
\begin{equation}\label{}
\frac{\p\zeta(z)}{\p t_i}=-Z_i(z), \quad 
 \frac{\p^2\zeta(z)}{\p t_{i_1}\p t_{i_2}}={Z_{i_1+i_2}}'(z), \quad  \frac{\p^3\zeta(z)}{\p t_{i_1}\p t_{i_2}\p t_{i_3}}=-{Z_{i_1+i_2+i_3}}''(z).
\end{equation}
By using these equalities and \eqref{V1}, we have
\begin{align*}
&\frac{\p^{3} \mathcal{F}}{\p t_{i_{1}} \p t_{i_{2}}\p t_{i_{3}}} \\
=&\frac{1}{(2\pi \mathrm{i})^{2}}\oint_\Gamma\oint_{\Gamma'}\left( (-{Z_{i_1+i_2+i_3}}''(z_1)\left(\frac{1}{2}\zeta(z_2)+\ell(z_2)\right)
- \left(\frac{1}{2}\zeta(z_1)-\ell(z_1)\right){Z_{i_1+i_2+i_3}}''(z_2) \right. \\
&\left.-\frac{1}{2}\left\{{Z_{i_1+i_2}}'(z_1)Z_{i_3}(z_2) +Z_{i_3}(z_1){Z_{i_1+i_2}}'(z_2) +\mathrm{c.p.}(i_{1},i_{2},i_{3})\right\} \right)
\log\left(\frac{z_{1}-z_{2}}{z_{1}}\right)  dz_{1}dz_{2}\\
&-\frac{1}{2\pi \mathrm{i} }\oint_\Gamma \left(\frac{1}{2}\zeta(z)
+\ell(z)\right) d z \cdot \frac{1}{2\pi \mathrm{i}}\oint_\Gamma\frac{-{Z_{i_1+i_2+i_3}}'(z)}{z}d z \\
&-\left\{\frac{1}{2\pi \mathrm{i} }\oint_\Gamma \left(-\frac{1}{2}Z_{i_3}(z)\right) d z \cdot \frac{1}{2\pi \mathrm{i}}
\oint_\Gamma\frac{{Z_{i_1+i_2}}(z)}{z}d z +\mathrm{c.p.}(i_1,i_2,i_3)\right\}
\\
=&\frac{1}{2\pi \mathrm{i}}\oint_\Gamma \left(  {Z_{i_1+i_2+i_3}}'(z)_+ \left(\frac{1}{2}\zeta(z)+\ell(z)\right)
- \left(\frac{1}{2}\zeta(z)-\ell(z)\right){Z_{i_1+i_2+i_3}}'(z)_- \right. \\
&\left.-\frac{1}{2}\left\{-{Z_{i_1+i_2}}(z)_+Z_{i_3}(z) +Z_{i_3}(z){Z_{i_1+i_2}}(z)_- +\mathrm{c.p.}(i_{1},i_{2},i_{3})\right\} \right)
  d z \\
&-\frac{1}{2\pi \mathrm{i}}\oint_{\Gamma'} \frac{{Z_{i_1+i_2+i_3} }'(z)}{z}d z  \cdot \frac{1}{2\pi \mathrm{i}}
\oint_\Gamma \left(\frac{1}{2}\zeta(z)+\ell(z)\right)d z  \\
&-\frac{1}{2}\left\{ \frac{1}{2\pi \mathrm{i}}\oint_{\Gamma'} \frac{{Z_{i_1+i_2}}(z)}{z}d z
\cdot \frac{1}{2\pi \mathrm{i}}\oint_\Gamma Z_{i_3}(z)d z+\mathrm{c.p.}(i_{1},i_{2},i_{3})\right\}  \\
&-\frac{1}{2\pi \mathrm{i} }\oint_\Gamma \left(\frac{1}{2}\zeta(z)
+\ell(z)\right) d z \cdot \frac{1}{2\pi \mathrm{i}}\oint_\Gamma\frac{-{Z_{i_1+i_2+i_3}}'(z)}{z}d z \\
&-\left\{\frac{1}{2\pi \mathrm{i} }\oint_\Gamma \left(-\frac{1}{2}Z_{i_3}(z)\right) d z \cdot \frac{1}{2\pi \mathrm{i}}
\oint_\Gamma\frac{{Z_{i_1+i_2}}(z)}{z}d z +\mathrm{c.p.}(i_1,i_2,i_3)\right\}
\\
=&\frac{1}{2\pi \mathrm{i}}\oint_\Gamma \left(  {Z_{i_1+i_2+i_3}}(z) \left( -\frac{1}{2}\zeta'(z)_- -\ell'(z)_-
 + \frac{1}{2}\zeta'(z)_+ -\ell'(z)_+ \right)  \right. \\
&\left. - \frac{3}{2} Z_{i_3}(z){Z_{i_1+i_2}}(z) + \left\{{Z_{i_1+i_2}}(z)_+Z_{i_3}(z)  +\mathrm{c.p.}(i_{1},i_{2},i_{3})\right\} \right)
  d z +0
\\
=&\frac{1}{2\pi \mathrm{i}}\oint_\Gamma \left(  {Z_{i_1+i_2+i_3}}(z) \left(-\zeta'(z) - \zeta'(z)_- -\ell'(z) \right)
+ \left\{{Z_{i_1+i_2}}(z) Z_{i_3}(z)_-  +\mathrm{c.p.}(i_{1},i_{2},i_{3})\right\} \right)
  d z \\
=&\left\langle \frac{\p}{\p t_{i_{1}}}\circ \frac{\p}{\p t_{i_{2}}},\frac{\p}{\p t_{i_{3}}}\right\rangle_{\eta},
\end{align*}
where the second equality is due to \eqref{intlog1} and \eqref{intlog2}, and the fourth equality is due to
$Z_{i_3}(z){Z_{i_1+i_2}}(z)={Z_{i_1+i_2+i_3}}(z)\zeta'(z)$.

Similarly, by using \eqref{zetaellh} and \eqref{V2} we have
\begin{align*}
&\frac{\p^{3} \mathcal{F}}{\p t_{i_{1}} \p t_{i_{2}}\p h_{j}}
\\
=&\frac{1}{(2\pi \mathrm{i})^{2}}\oint_\Gamma\oint_{\Gamma'}\left( {Z_{i_1+i_2}}'(z_1) (\ell'(z_2)\chi(z_2)^{-j})_+\right.  \\
&\left.- (\ell'(z_1)\chi(z_1)^{-j})_+{Z_{i_1+i_2}}'(z_2)  \right)
\log\left(\frac{z_{1}-z_{2}}{z_{1}}\right)  dz_{1}dz_{2}
\\
=&-\frac{1}{2\pi \mathrm{i}}\oint_\Gamma \bigg( \left( {Z_{i_1+i_2}}(z_1)_+
-  \frac{1}{2\pi \mathrm{i}}\oint_{\Gamma'}\frac{{Z_{i_1+i_2}}(z_1)}{z_1}d z_1\right) (\ell'(z_2)\chi(z_2)^{-j})_+   \\
& - (\ell'(z)\chi(z)^{-j})_+{Z_{i_1+i_2}}(z)_-  \bigg) d z
\\
=&-\frac{1}{2\pi \mathrm{i}}\oint_\Gamma  (\ell'(z)\chi(z)^{-j})_+\zeta(z)^{i_1+i_2}\zeta'(z)  d z
\\
=&\left\langle \frac{\p}{\p t_{i_{1}}}\circ \frac{\p}{\p t_{i_{2}}},\frac{\p}{\p h_{j}}\right\rangle_{\eta}.
\end{align*}
Moreover, by using  \eqref{Rj} and \eqref{S3h}  we have
\begin{align*}
&\frac{\p^{3} \mathcal{F}}{\p h_{j_{1}} \p h_{j_{2}}\p h_{j_3}}
\\
=&\frac{1}{(2\pi \mathrm{i})^{2}}\oint_\Gamma\oint_{\Gamma'}\left( \zeta(z_1) (R_{j_1+j_2+j_3}(z_2))''_+
- (R_{j_1+j_2+j_3}(z_1))''_+ \zeta(z_2)  \right)
\log\left(\frac{z_{1}-z_{2}}{z_{1}}\right)  dz_{1}dz_{2} \\
&+\frac{1}{2\pi \mathrm{i} }\oint_\Gamma \zeta(z) d z\cdot \frac{1}{2\pi \mathrm{i} }\frac{\p}{\p h_{j_3}}
\oint_\Gamma \frac{R_{j_1+j_2}(z)_+}{z} d z  + \frac{\p^{3} G}{\p h_{j_{1}} \p h_{j_{2}}\p h_{j_3}}
\\
=&\frac{1}{2\pi \mathrm{i}}\oint_\Gamma  {R_{j_1+j_2+j_3}}'(z)_+ \zeta(z) d z - \frac{1}{2\pi \mathrm{i} }
 \oint_{\Gamma'} \frac{ {R_{j_1+j_2+j_3}}'(z)_+ }{z} d z \cdot \frac{1}{2\pi \mathrm{i} }\oint_\Gamma \zeta(z) d z  \\
&+\frac{1}{2\pi \mathrm{i} }\oint_\Gamma \zeta(z) d z\cdot \frac{1}{2\pi \mathrm{i} } \oint_\Gamma \frac{-{R_{j_1+j_2+j_3}}'(z)_+}{z} d z
\\
&+\res_{z=\infty}   \left(2 R_{j_1+j_2+j_3}(z) \ell'(z)  -\{ R_{j_1+j_2}(z) R_{j_{3}}(z)_{+}+\mathrm{c.p.}(j_{1},j_{2},j_{3})\}\right) d z
\\
=&\res_{z=\infty}   \left( R_{j_1+j_2+j_3}(z) (\zeta'(z)_- +2\ell'(z))  -\{ R_{j_1+j_2}(z) R_{j_{3}}(z)_{+}
+\mathrm{c.p.}(j_{1},j_{2},j_{3})\}\right) d z
\\
=&\left\langle \frac{\p}{\p h_{j_1} }\circ \frac{\p}{\p h_{j_2}},\frac{\p}{\p h_{j_3}}\right\rangle_{\eta},
\end{align*}
The other cases are similar. Therefore the proposition is proved.
\end{proof}

\subsection{The Euler vector field}

On $\mathcal{M}_{m,n}$, let us introduce a vector field as
\begin{equation}\label{vecE}
      \vec{E}=\sum_{i\in\mathbb{Z}}\left(\frac{1}{m}-i\right)t_{i}\frac{\p}{\p t_{i}}+\sum_{j=1}^{m-1} \frac{j+1}{m}  h_{j}\frac{\p}{\p h_{j}}+
\sum_{k=0}^{n}\left(\frac{1}{m}+\frac{k}{n}\right)\hat{h}_{k}\frac{\p}{\p \hat{h}_{k}}.
\end{equation}
\begin{prop}\label{thm-EF}
The function $\mathcal{F}$ defined by \eqref{potential} satisfies
\begin{equation}\label{EF}
\mathrm{Lie}_{\vec{E}} \mathcal{F}=\left(2+\frac{2}{m}\right)\mathcal{F}+\hbox{quadratic terms in flat coordinates}.
\end{equation}
\end{prop}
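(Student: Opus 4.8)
The plan is to check the quasi-homogeneity condition by showing that the Euler vector field $\vec E$ acts as a scaling symmetry on all the ingredients appearing in the potential $\mathcal{F}$ of \eqref{potential}. First I would determine the weights of the basic objects. From the definition \eqref{vecE}, the flat coordinates have weights $\deg t_i = \frac1m - i$, $\deg h_j = \frac{j+1}{m}$, $\deg \hat h_k = \frac1m + \frac kn$. Using \eqref{zzeta}, \eqref{chi}--\eqref{chih} one sees that a consistent assignment is $\deg z = \frac1m$, $\deg \zeta = 1$, $\deg \ell = 1$ (equivalently $\deg\chi = \frac1m$, $\deg\hat\chi = \frac1m$), because e.g. $z(\zeta)=\sum t_i\zeta^i$ forces $\deg t_i + i\deg\zeta = \deg z$, and similarly for the $\chi$, $\hat\chi$ expansions; the factor $(z-\vp)$ is consistent with $\deg z = \deg\vp = \deg\hat h_0 = \frac1m$. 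The key computational point is then that $\mathrm{Lie}_{\vec E}$ acting on $\zeta(z)$, $\ell(z)$ reproduces these weights; this follows by differentiating the defining relations \eqref{zzeta}, \eqref{chi}, \eqref{chih} along $\vec E$, exactly in the style of the computation of $\partial\zeta/\partial t_i$ etc. in the proof of Proposition~\ref{thm-flatmetric}.

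Next I would push these weights through each of the three terms of $\mathcal{F}$ in \eqref{potential}. For the double-contour integral, the integrand $\tfrac12\zeta(z_1)\zeta(z_2)+\zeta(z_1)\ell(z_2)-\ell(z_1)\zeta(z_2)$ has weight $2$, the kernel $\log\big(\tfrac{z_1-z_2}{z_1}\big)$ has weight $0$, and each $dz_a$ contributes weight $\frac1m$, so the whole term scales with weight $2+\frac2m$; one must also note that $\mathrm{Lie}_{\vec E}$ differs from naive scaling by boundary/derivative terms, which here are handled by \eqref{intlog1}--\eqref{intlog2} and produce only quadratic corrections (the $\oint_\Gamma f(z)/z\,dz$ pieces). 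For the $V_1$ term, $t_{-1} = -\frac1{2\pi\mathrm i}\oint_\Gamma\zeta\,dz$ has weight $1+\frac1m$; from \eqref{V1} the second derivatives $\partial^2 V_1/\partial t_{i_1}\partial t_{i_2}$ have weight $1-\frac1m - (i_1+i_2)$, so integrating twice $V_1$ has weight $1-\frac1m$ up to linear terms, and the product $(\tfrac12 t_{-1}-n\hat h_n)V_1$ has weight $(1+\frac1m)+(1-\frac1m) = 2$ — wait, this needs the correction $2+\frac2m$, so one must track the extra scaling carefully; concretely one computes $\mathrm{Lie}_{\vec E}V_1$ directly from \eqref{V1} by a one-line residue manipulation and finds it equals $(2-\frac2m)V_1$ plus a linear function, and likewise $\mathrm{Lie}_{\vec E}V_2 = (2+\frac2m)V_2$ plus linear, so that both mixed terms in \eqref{potential0} carry total weight $2+\frac2m$. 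For the $G$ term, since $\partial^3 G/\partial u\partial v\partial w$ from \eqref{Suvw} scales with weight $3 - \deg u - \deg v - \deg w$ minus the weight contributions hidden in the residue, one gets $\mathrm{Lie}_{\vec E}G = (2+\frac2m)G$ modulo quadratic terms.

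The cleanest way to organize all this is: (i) establish $\mathrm{Lie}_{\vec E}\zeta = \zeta$ and $\mathrm{Lie}_{\vec E}\ell = \ell$ as functions of $z$ with $\deg z = \frac1m$ — more precisely, $\big(\frac1m z\partial_z + \mathrm{Lie}_{\vec E}\big)\zeta = \zeta$ and similarly for $\ell$, where $\mathrm{Lie}_{\vec E}$ means differentiation of the coefficients only; (ii) substitute into \eqref{potential} and differentiate under the integral signs, using that the extra $z\partial_z$ terms can be integrated by parts against $dz$ and against the logarithmic kernel, the latter via \eqref{intlog1}--\eqref{intlog2}; (iii) collect the constant of proportionality $2+\frac2m$ and verify that the leftover terms are genuinely quadratic in the flat coordinates. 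An alternative, perhaps more robust, route is to bypass $\mathcal{F}$ itself and instead verify that $c_{uvw} = \partial^3_{uvw}\mathcal{F}$ from Proposition~\ref{thm-3tensor} satisfies $\mathrm{Lie}_{\vec E}c_{uvw} = (d+1)c_{uvw}$ with $d = 1-\frac2m$, i.e.\ $\mathrm{Lie}_{\vec E}c_{uvw} = (2-\frac2m)c_{uvw}$, together with $\mathrm{Lie}_{\vec E}\eta_{uv} = (2-d)\eta_{uv} = \frac2m\eta_{uv}$ (which is read off directly from Proposition~\ref{thm-flatmetric} and the weights, since $\eta_{uv}$ is nonzero only when $\deg u + \deg v = \frac2m$); this implies \eqref{EF} up to quadratic terms by the standard Frobenius-manifold argument. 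The main obstacle I anticipate is bookkeeping the boundary/derivative corrections coming from the $z\partial_z$ part — in particular making sure that the anomalous pieces $\frac1{2\pi\mathrm i}\oint_\Gamma f(z)/z\,dz$ generated by \eqref{intlog2}, and the linear-in-$V_1,V_2$ and quadratic-in-$G$ ambiguities from Remark~\ref{rmk-V1V2S}, assemble into terms that are at most quadratic in the flat coordinates and do not spoil the coefficient $2+\frac2m$; the residue computations themselves are routine once the weights are fixed.
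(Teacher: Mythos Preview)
Your approach is essentially the paper's: assign $\deg z=\frac1m$, $\deg\zeta=\deg\ell=1$, read off the degrees of the flat coordinates from \eqref{zetaellt}--\eqref{zetaellhh}, and conclude that each summand of $\mathcal{F}$ is homogeneous of degree $2+\frac2m$ up to quadratic terms. The paper carries this out in a few lines without any of the integration-by-parts or boundary-term analysis you anticipate: once the weights are fixed, the homogeneity of the double integral, of the products involving $V_1,V_2$, and of $G$ is simply an observation.

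Where your proposal slips is in the arithmetic for $V_1$ and $V_2$. From \eqref{V1} the right-hand side has weight $\deg\zeta' + (i_1+i_2)\deg\zeta - \deg z + \deg dz = \big(1-\tfrac1m\big)+(i_1+i_2)-\tfrac1m+\tfrac1m = 1-\tfrac1m+(i_1+i_2)$, so after adding back $\deg t_{i_1}+\deg t_{i_2}=\tfrac2m-(i_1+i_2)$ you get $\deg V_1 = 1+\tfrac1m$ (up to linear terms), not $1-\tfrac1m$; the same holds for $V_2$. Since $\deg t_{-1}=\deg\hat h_n=1+\tfrac1m$, the products $\big(\tfrac12 t_{-1}-n\hat h_n\big)V_1$ and $t_{-1}V_2$ then have degree $2+\tfrac2m$ on the nose. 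Your claimed values $\mathrm{Lie}_{\vec E}V_1=(2-\tfrac2m)V_1$ and $\mathrm{Lie}_{\vec E}V_2=(2+\tfrac2m)V_2$ are both incorrect; the correct statement is $\mathrm{Lie}_{\vec E}V_\nu=(1+\tfrac1m)V_\nu$ modulo linear terms for $\nu=1,2$. With this fix your plan goes through, and the worries about anomalous pieces from \eqref{intlog2} are unnecessary: the logarithmic kernel has weight $0$, so there is nothing to integrate by parts.
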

\begin{proof}
If we assign $\deg\,z=\frac{1}{m}$, and assume each of the functions $\zeta(z)$ and $\ell(z)$ to be homogeneous of degree $1$,
then from \eqref{zetaellt}--\eqref{zetaellhh} one sees that the flat coordinates have degrees as follows
 \begin{align}\label{degree}
   &\deg\,t_i=\frac{1}{m}-i,\quad i\in\Z; \\
   &  \deg\,h_j=\frac{j+1}{m},\quad
   1\le j\le m-1; \\
   &     \deg\,\hat{h}_k=\frac{k}{n}+\frac{1}{m}, \quad 0\le k\le n. \label{degree3}
 \end{align}
Accordingly, one observes that $V_1$ and $V_2$ defined by \eqref{V1} and \eqref{V2} are homogeneous of degree $1+\frac{1}{m}$
(up to a linear terms in the flat coordinates)
while $G$ defined by \eqref{Suvw} is homogeneous of degree $2+\frac{2}{m}$ (up to a quadratic terms in the flat coordinates).
Hence, up to quadratic terms the function $\mathcal{F}$ is homogeneous of degree
\[
\deg\,\mathcal{F}=2+\frac{2}{m}.
\]
Therefore the proposition is proved.
\end{proof}

\subsection{Proof of Main Theorem 1} It follows from
 Propositions~\ref{thm-flatmetric}, \ref{thm-prodtan}, \ref{thm-potential} and \ref{thm-EF} that equipped to
  $\mathcal{M}_{m,n}$ 
there is an infinite-dimensional Frobenius manifold structure.
More precisely, for the Frobenius manifold the flat metric $\langle\,,\,\rangle_\eta$ is
 given in \eqref{metric}, the product is defined by \eqref{prodtan}, the unity vector filed
 $\vec{e}$ is given in \eqref{vece}, the potential function $\mathcal{F}$ is given in \eqref{potential} and the Euler
  vector filed $\vec{E}$ is given in \eqref{vecE}.

\begin{rem}
Recall the definition of $\mathcal{M}_{m,n}$ in Section~\ref{sec-M}. Suppose that the condition $(C3)$ is replace by a general setting:
\begin{itemize}
  \item[(C3)'] The winding number of $\zeta(z)$ around $0$ is an arbitrary positive integer $r$, such that $\zeta(z)^{1/r}$ maps
  the circle $\Gamma$ biholomorphicly to a simple smooth curve $\Sigma$ around $0$.
\end{itemize}
Then we can still carry out the above approach and obtain a Frobenius manifold structure on $\mathcal{M}_{m,n}$ under certain assumptions, with $\zeta(z)$
being replaced by $\zeta(z)^{1/r}$ when introducing the flat coordinates $t_i$.
In particular, if we take
\[
r=2, ~~ \varphi=0, ~~ m=2m', ~~n=2n'
\]
with arbitrary positive integers $m'$ and $n'$,  and assume $(a(z),\hat{a}(z))$ to be a pair of even functions of $z$, then the
Frobenius manifold $\mathcal{M}_{2m',2n'}$ coincides with the one constructed in \cite{WX}  underlying the two-component BKP hierarchy.
\end{rem}

\section{Proof of Main Theorem 2}

In this section, we want to clarify the relationship between the infinte-dimensional Frobenius manifold $\mathcal{M}_{m,n}$ and
the universal Whitham hierarchy.

\subsection{The universal Whitham hierarchy and its bihamiltonian structures} As mentioned before, a general version of  the universal Whitham hierarchy was investigated by Krichever  \cite{Kr88, Kr}, and now we only focus on the case of meromorphic functions with a fixed pole at infinity and a movable pole. More precisely, let us consider two merophorphic functions of $z$ on the Riemann sphere of the form
\begin{align}\label{rela}
&\lambda(z)=z+\sum_{i\ge1}v_{i}(x)(z-\vp(x))^{-i}, \quad
 \hat{\lambda}(z)=\sum_{i\ge-1}\hat{v}_{i}(x)(z-\vp(x))^{i},
\end{align}
with $x\in S^1$ being a loop parameter.
In this paper by the (special) universal Whitham hierarchy we mean the following system of evolutionary equations:
\begin{align}
  &\frac{\p \ld(z)}{\p s_k}=\left[(\lambda(z)^k)_{+},\ld(z)\right],\quad
  \frac{\p\hat{\ld}(z)}{\p s_k}=\left[(\lambda(z)^k)_{+},\hat{\ld}(z)\right],\label{disphir1}\\
  &\frac{\p \ld(z)}{\p
  \hat{s}_k}=\left[-(\hat{\lambda}(z)^k)_{-},\ld(z)\right],\quad
  \frac{\p\hat{\ld}(z)}{\p\hat{s}_k}=\left[-(\hat{\lambda}(z)^k)_{-},\hat{\ld}(z)\right],
  \label{disphir2}
\end{align}
where the Lie bracket $[\,,\,]$ reads
\begin{equation}\label{lieb}
\left[f,g\right]:=\frac{\p f}{\p z} \frac{\p g}{\p x}-\frac{\p
g}{\p z} \frac{\p f}{\p x}
\end{equation}
and  $k=1,2,3,\dots$.
Here it is omitted some
additional flows constructed via logarithm functions in the general version of \cite{Kr}, for they do not concern to what we study below.

Observe that, when $|z|\to\infty$ the function $\ld(z)$ in \eqref{rela} can also be expanded as
$\lambda(z)=z+\sum_{i\ge1}\tilde{v}_i z^{-i}$, thus the flows $\p\ld(z)/\p s_k$ in \eqref{disphir1} indeed compose
the dispersionless KP hierarchy. That is to say, the hierarchy \eqref{disphir1}--\eqref{disphir2} is an extension of the
dispersionless KP hierarchy.
In fact, the universal Whitham hierarchy admits other reductions and the corresponding Frobenius manifolds/WDVV equations have been
constructed in the context of Landau-Ginsburg topological models. For example, when $\hat{\ld}(z)\to0$ and $(\ld(z)^{m})_-=0$
with some positive integer $m$, the  hierarchy \eqref{disphir1}--\eqref{disphir2} is reduced to the dispersionless Gelfand-Dickey
hierarchy, and its bihamiltonian structure was involved by Dubrovin \cite{Du-Whith} in the $A$-series Landau-Ginsburg topological
models. When $\ld(z)^{m}=\hat{\ld}(z)^n$ with positive integers $m$ and $n$, it was achieved by Krichever a residue
formula for the solution of WDVV equations (see Theorem~5.6 in  \cite{Kr} for details, and see also \cite{AK-MPL, AK-CMP}),
which can be regarded as the potential function of a Frobenius manifold. We hope that our construction below for the universal Whitham hierarchy
\eqref{disphir1}--\eqref{disphir2} would help us to understand such works.


Let us recall the bihamiltonian structures derived in \cite{WZ} for the universal Whitham hierarchy \eqref{disphir1}--\eqref{disphir2}.
Given two arbitrary positive integers $m$ and $n$, we consider the loop space, denoted as $\mathcal{L}\mathcal{M}_{m,n}$, of smooth maps
from the unit circle $S^1$ to the manifold $\mathcal{M}_{m,n}$ defined in \eqref{sec-M}.
A point in the loop space $\mathcal{L}\mathcal{M}_{m,n}$ is written as
\[
\vec{a}=(a,\hat{a})=\left(z^{m}+\sum_{i\leq m-2}\mathit{a}_{i}(x)(z-\varphi(x))^{i},\sum_{i\geq -n} \hat{\mathit{a}}_{i}(x)(z-\varphi(x))^{i}\right).
\]
At this point, the tangent space $T_{\vec{a}}\mathcal{L}\mathcal{M}_{m,n}$ and the
cotangent space $T_{\vec{a}}^\ast\mathcal{L}\mathcal{M}_{m,n}$ of the loop space take a similar form as \eqref{TM} and \eqref{TstaM} respectively.

On the loop space $\mathcal{L}\mathcal{M}_{m,n}$, we  introduce a ring of formal differential polynomials as
\[
\mathcal{A}:=C^\infty\left(\mathbf{a}\right)[[\p_x \mathbf{a}, {\p_x}^2\mathbf{a}, \dots ]],
\]
where
\[
\mathbf{a}(x)=(\vp(x), a_{m-2}(x),a_{m-3}(x),\dots,\hat{a}_{-n}(x),\hat{a}_{-n+1}(x),\dots).
\]
Let us consider the quotient space $\mathscr{F}:=\mathcal{A}/\p_x \mathcal{A}$, whose elements are called local functionals written in the form
\[
F=\int
f\left(\mathbf{a}, \p_x \mathbf{a}, {\p_x}^2\mathbf{a}, \dots\right)\,d
x\in\mathscr{F}, \quad f\in\mathcal{A}.
\]
Given a local functional $F\in\mathscr{F}$, its variational gradient at $\vec{a}$ means a
cotangent vector $d F\in T_{\vec{a}}^\ast\mathcal{L}\mathcal{M}_{m,n}$ such that
\[
\dt F=\int \la d F, \dt \vec{a}\ra \,d
x,
\]
with the nondegenerate pairing $\la\,,\,\ra$ given in \eqref{pairing}.

\begin{prop} [\cite{WZ}] \label{thm-eKPham}
For any positive integers $m$ and $n$,
there is a bihamiltonian structure on $\mathcal{L}\mathcal{M}_{m,n}$ given by the follows two compatible Poisson brackets
\begin{equation}\label{pb}
 \big\{F,H\big\}_\nu=\int \la d F,\mathcal{P}_\nu\cdot d H\ra\, d x, \quad
 \nu=1,2.
\end{equation}
Here the Poisson tensors $\mathcal{P}_\nu: T_{\vec{a}}^\ast\mathcal{L}\mathcal{M}_{m,n}\to T_{\vec{a}} \mathcal{L}\mathcal{M}_{m,n}$ read
\begin{align}
\mathcal{P}_{1}\cdot \vec{\omega}=&\big([\omega,a]_{-}+[\hat{\omega},\hat{a}]_{-}-[\omega_{-}+\hat{\omega}_{-},a],
-[\omega,a]_{+}-[\hat{\omega},\hat{a}]_{+}+[\omega_{+}+\hat{\omega}_{+},\hat{a}]\big), \label{P1}
\\
\mathcal{P}_{2}\cdot \vec{\omega}=&\big(([\omega,a]_{-}+[\hat{\omega},\hat{a}]_{-})a-[(\omega a+\hat{\omega}\hat{a})_{-},a]-\sigma a', \nn\\
&-([\omega,a]_{+}+[\hat{\omega},\hat{a}]_{+})\hat{a}+[(\omega a+\hat{\omega}\hat{a})_{+},\hat{a}]-\sigma \hat{a}'\big), \label{P2}
\end{align}
with  the Lie bracket $[\ , \ ]$ given in \eqref{lieb} and
\begin{equation}\label{sg}
\sigma=\frac{1}{m}\res_{z=\varphi}([\omega,a]+[\hat{\omega},\hat{a}])d z.
\end{equation}
Moreover, let
\[
a=\ld(z)^m, \quad \hat{a}=\hat{\ld}(z)^n,
\]
then the universal Whitham hierarchy \eqref{disphir1}--\eqref{disphir2} can be represented in a bihamiltonian recursive form as
    \begin{align}\label{hamrecur}
    &\frac{\p F}{\p s_{k}}=\{F,H_{k+m}\}_{1}=\{F,H_{k}\}_{2},\quad
    \frac{\p F}{\p \hat{s}_{k}}=\{F,\hat{H}_{k+n}\}_{1}=\{F,\hat{H}_{k}\}_{2},
    \end{align}
    where the Hamiltonian functionals are given by
\begin{align}\label{}
H_{k}=-\frac{m}{k}\int \res_{z=\infty}\ld(z)^k\,d x, \quad  \hat{H}_{k}=\frac{n}{k}\int \res_{z=\vp}\hat{\ld}(z)^k\,d x,
\end{align}
with $k=1,2,3,\dots$.
\end{prop}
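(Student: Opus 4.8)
The plan is to run Dubrovin's construction that turns a flat pencil of metrics on a Frobenius manifold into a bi-Hamiltonian structure of hydrodynamic type on the loop space, and then to identify the two Hamiltonian operators so obtained with $\mathcal{P}_1$ and $\mathcal{P}_2$ of Proposition~\ref{thm-eKPham}. Concretely, $\mathcal{M}_{m,n}$ carries on its cotangent spaces two bilinear forms: the Frobenius form induced by $\eta$ of \eqref{eta} (cf. \eqref{blf}), and the intersection form $(\vec\omega_1,\vec\omega_2)^{\ast}:=\langle\vec\omega_1\star\vec\omega_2,\vec E\rangle$, with $\star$ given in \eqref{prodcot} and $\vec E$ in \eqref{vecE}. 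As in the finite-dimensional theory these should form a flat pencil, so each of them produces, by the Dubrovin--Novikov recipe, a Hamiltonian operator $\mathcal{P}_\eta$, resp. $\mathcal{P}_g$, on $T^{\ast}_{\vec a}\mathcal{L}\mathcal{M}_{m,n}$ which is of first order in $\p_x$ with leading symbol the metric and lower-order part fixed by the corresponding Levi--Civita connection. It therefore suffices to write $\mathcal{P}_\eta$ and $\mathcal{P}_g$ in the $\vec a$-representation and to check $\mathcal{P}_\eta=\mathcal{P}_1$, $\mathcal{P}_g=\mathcal{P}_2$; since $\mathcal{P}_1,\mathcal{P}_2$ are already known from \cite{WZ} to be compatible Poisson structures, this incidentally spares us from verifying the Jacobi identity, compatibility, and the flatness of the intersection form directly in this infinite-dimensional situation — one only needs the contour integrals involved to converge, which is checked as in Section~2.

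For the first structure I would proceed as follows. By Proposition~\ref{thm-flatmetric} the coordinates $\mathbf t\cup\mathbf h\cup\hat{\mathbf h}$ are flat for $\eta$; hence in these coordinates the operator $\mathcal{P}_\eta$ has constant coefficients, $\{v^{\al}(x),v^{\beta}(y)\}_\eta=\eta^{\al\beta}\delta'(x-y)$, with $\eta^{\al\beta}$ the constants read off from Proposition~\ref{thm-flatmetric}. On the other hand, I would compute the brackets $\{v^{\al}(x),v^{\beta}(y)\}_1$ directly from \eqref{P1}, using the $\vec a$-representations of the variational gradients $d t_i$, $d h_j$, $d\hat h_k$ (which follow from Corollary~\ref{thm-flatcov} together with the metric identifications of Proposition~\ref{thm-flatmetric}) and the Lie bracket \eqref{lieb}, which is exactly what supplies the $x$-derivative. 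This computation runs in parallel with the proof of Proposition~\ref{thm-flatmetric} with a $\p_x$ inserted, and one checks that every bracket comes out as the corresponding constant times $\delta'(x-y)$ with no $\delta(x-y)$-term; consequently $\mathcal{P}_1$ and $\mathcal{P}_\eta$ are two hydrodynamic-type Hamiltonian operators with the same nondegenerate leading metric $\eta$, hence $\mathcal{P}_1=\mathcal{P}_\eta$.

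For the second structure I would first compute the intersection form explicitly: substituting the generating functions $d a(p)$, $d\hat a(q)$ of \eqref{rpgener01}--\eqref{rpgener02} into $(\vec\omega_1,\vec\omega_2)^{\ast}=\langle\vec\omega_1\star\vec\omega_2,\vec E\rangle$ and using Lemma~\ref{thm-omprod} together with the explicit $\vec E$ yields a closed expression for $g$; the component of $\vec E$ along $\hat h_0=\vp$, whose coefficient equals $\tfrac1m$, produces a distinguished contribution supported at the movable pole. One then extracts the $\delta'(x-y)$-coefficient of $\{v^{\al}(x),v^{\beta}(y)\}_2$ (computed from \eqref{P2}) and matches it with this $g$; the remaining lower-order terms of $\mathcal{P}_2$ are then automatically the Levi--Civita terms of $g$ and hence agree with those of $\mathcal{P}_g$. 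In this comparison the pieces of \eqref{P2} built from $[\omega,a]$, $[\hat\omega,\hat a]$ and $[(\omega a+\hat\omega\hat a)_{\pm},\cdot\,]$ reproduce the ``product followed by $\p_x$'' part, while the anomalous term $-\sigma a'$, $-\sigma\hat a'$ with $\sigma=\tfrac1m\res_{z=\vp}([\omega,a]+[\hat\omega,\hat a])\,dz$ is precisely the contribution of the $\vp$-direction of $\vec E$, in accordance with the charge $d=1-\tfrac2m$; the bookkeeping here is analogous to the proof of Lemma~\ref{thm-prodflatcov}.

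Once $\mathcal{P}_\eta=\mathcal{P}_1$ and $\mathcal{P}_g=\mathcal{P}_2$ are established, the two bi-Hamiltonian structures coincide, and the bi-Hamiltonian recursion \eqref{hamrecur} of the universal Whitham hierarchy \eqref{disphir1}--\eqref{disphir2} transcribes verbatim. I expect the main obstacle to be the second structure: isolating the $\sigma$-term with the correct normalization $\tfrac1m$ and simultaneously handling the non-constant connection coefficients of the intersection form at the fixed and the movable poles; a secondary technical point is to make sure throughout that the contour integrals defining these infinite-dimensional Hamiltonian operators are well defined, i.e. that Dubrovin's finite-dimensional recipe genuinely survives in the present ``two movable poles'' setting.
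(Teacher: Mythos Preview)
The first thing to notice is that Proposition~\ref{thm-eKPham} is a \emph{cited} result: it is quoted from \cite{WZ} and the paper gives no proof of its own. What you have sketched is not a proof of Proposition~\ref{thm-eKPham} but rather of Theorem~\ref{thm-PBmetric} (equivalently Main Theorem~\ref{main2}): the statement that the flat pencil $(\eta,g)$ on $\mathcal{M}_{m,n}$ \emph{induces} the pair $(\mathcal{P}_1,\mathcal{P}_2)$. In the logical flow of the paper, Proposition~\ref{thm-eKPham} is an \emph{input} imported from \cite{WZ}, and Theorem~\ref{thm-PBmetric} is the \emph{output} identifying it with the Frobenius data. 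Your own text makes this conflation explicit: you invoke ``since $\mathcal{P}_1,\mathcal{P}_2$ are already known from \cite{WZ} to be compatible Poisson structures'' to avoid checking Jacobi and compatibility --- but those are precisely the assertions of Proposition~\ref{thm-eKPham}, so you are assuming the proposition inside its own proof. A genuinely independent proof via the Frobenius manifold would require establishing, in this infinite-dimensional setting, that a flat pencil yields compatible hydrodynamic Poisson brackets (which you explicitly decline to do), and would still owe a direct verification of the recursion \eqref{hamrecur} with the stated Hamiltonians, for which ``transcribes verbatim'' is not an argument.

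Even read as a proof of Theorem~\ref{thm-PBmetric}, your route is more laborious than the paper's. The paper does not compute brackets of flat coordinates or match connection terms at all: it writes $\mathcal{P}_\nu\cdot\vec\omega=P_\nu(\vec\omega_x)+Q_\nu(\vec\omega)$, extracts the leading symbol $P_\nu(\vec\omega_x)$, and checks in a few lines from Lemmas~\ref{thm-etaom} and~\ref{thm-gom} that $P_1(\vec\omega_x)=\eta\cdot\vec\omega_x$ and $P_2(\vec\omega_x)=g\cdot\vec\omega_x$; by Dubrovin--Novikov this pins down the full operators. For the intersection form the paper first represents the Euler field as $\vec E=\bigl(a-\tfrac{z}{m}a',\,\hat a-\tfrac{z}{m}\hat a'\bigr)$ and obtains the closed formula $(d\alpha(p),d\beta(q))^\ast=\tfrac{\alpha'(p)\beta(q)}{p-q}+\tfrac{\beta'(q)\alpha(p)}{q-p}+\tfrac{1}{m}\alpha'(p)\beta'(q)$, from which Lemma~\ref{thm-gom} and the $\sigma$-term with the correct $\tfrac1m$ normalization drop out by the same Cauchy-integral manipulations as in Lemma~\ref{thm-etaom}. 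Note that the $\tfrac1m$ arises from the global $-\tfrac{z}{m}$ factor in $\vec E$, not specifically from the $\hat h_0=\vp$ component as you suggest.
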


\begin{rem}
 A dispersionful version of the hierarchy  \eqref{disphir1}--\eqref{disphir2}
was proposed in \cite{SB, WZ} by using scalar pseudo-differential operators, which is an extension of the full KP hierarchy. In analogue of the KP hierarchy, such an extension of it can be represented equivalently as a bilinear equation of Baker-Akhiezer functions \cite{LW}, and
it is equipped with a series of bihamiltonian structures \cite{WZ} whose dispersionless limits are given in the above proposition.
\end{rem}

\subsection{The bihamiltonian structure induced from $\mathcal{M}_{m,n}$}

 In order to stduy the relation between the infinite-dimensional Frobenius manifold  $\mathcal{M}_{m,n}$ and the universal Whitham hierarchy \eqref{disphir1}--\eqref{disphir2}, we
continue to study the intersection form on $\mathcal{M}_{m,n}$.

\begin{lem}
The Euler vector field $\vec{E}$ defined in \eqref{vecE} can be represented in Laurent series as
\begin{equation}\label{vecE2}
\vec{E}=\left(a(z)-\frac{za'(z)}{m},\hat{a}(z)-\frac{z\hat{a}'(z)}{m}\right).
\end{equation}
\end{lem}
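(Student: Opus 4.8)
The plan is to evaluate $\vec{E}$ directly on the two components $a(z)$ and $\hat{a}(z)$ by a homogeneity argument, rather than by expanding \eqref{vecE} in the Laurent-series basis of Corollary~\ref{thm-flatv}. Recall from the proof of Proposition~\ref{thm-EF} that, once we assign $\deg z=\frac{1}{m}$ together with the weights \eqref{degree}--\eqref{degree3} of the flat coordinates, the series $\zeta(z)$ and $\ell(z)$ are homogeneous of degree $1$ in the joint variables $z$ and $\mathbf{t}\cup\mathbf{h}\cup\hat{\mathbf{h}}$; this compatibility is exactly what forces those weights. Since $\deg(z-\varphi)=\frac{1}{m}$ (recall $\varphi=\hat{h}_0$), the decomposition of a Laurent series in $(z-\varphi)$ into its $(\cdot)_+$ and $(\cdot)_-$ parts preserves the grading, so from the relations $a(z)=\zeta(z)_-+\ell(z)$ and $\hat{a}(z)=-\zeta(z)_++\ell(z)$ (see just after \eqref{zetaell}) both $a(z)$ and $\hat{a}(z)$ are homogeneous of degree $1$ as well.

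The next step is Euler's identity. The vector field \eqref{vecE} is precisely $\vec{E}=\sum_u(\deg u)\,u\,\frac{\p}{\p u}$, the sum running over the flat coordinates $u$, and it differentiates only the coordinates, keeping $z$ fixed; hence for any function $f=f(z;\mathbf{t},\mathbf{h},\hat{\mathbf{h}})$ homogeneous of degree $d$ one has
\[
\frac{z}{m}\,\p_z f+\mathrm{Lie}_{\vec{E}}\,f=d\,f .
\]
Applying this to $f=a(z)$ and $f=\hat{a}(z)$, each of degree $d=1$, yields $\mathrm{Lie}_{\vec{E}}\,a(z)=a(z)-\frac{z}{m}a'(z)$ and $\mathrm{Lie}_{\vec{E}}\,\hat{a}(z)=\hat{a}(z)-\frac{z}{m}\hat{a}'(z)$, which, under the identification of a tangent vector $\p$ with the pair $(\p a(z),\p\hat{a}(z))$ from \eqref{TM}, is exactly \eqref{vecE2}.

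The point demanding the most care is the homogeneity claim itself: one must check that the rescaling $z\mapsto\lambda^{1/m}z$, $u\mapsto\lambda^{\deg u}u$ sends $\mathcal{M}_{m,n}$ into itself and multiplies each of $a(z),\hat{a}(z)$ by $\lambda$. This is read off from \eqref{zzeta}, \eqref{chi} and \eqref{chih} (which say $z(\zeta)$, $z(\chi)$ and $z(\hat{\chi})$ scale by $\lambda^{1/m}$) together with the fact that the projections $(\cdot)_\pm$ are graded maps, but it must be phrased with some care since $a(z)$ and $\hat{a}(z)$ depend on $z$ through $(z-\varphi)$ with $\varphi$ itself one of the coordinates. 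As an independent verification one may instead substitute \eqref{flatv1}--\eqref{flatv3} into \eqref{vecE} and resum: from $z(\zeta)=\sum_i t_i\zeta^i$ one gets $\sum_i(\frac{1}{m}-i)t_i\zeta(z)^i=\frac{z}{m}-\frac{\zeta(z)}{\zeta'(z)}$, and the analogous expansions of $z(\chi)$ and $z(\hat{\chi})$ handle the $\mathbf{h}$- and $\hat{\mathbf{h}}$-terms, so that \eqref{vecE2} is recovered term by term through a longer but entirely elementary computation.
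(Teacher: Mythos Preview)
Your proposal is correct. The homogeneity-plus-Euler's-identity argument you lead with is a more conceptual packaging of the same content as the paper's proof, which instead works directly: the paper reduces to showing $\vec{E}(\zeta)=\zeta-\frac{z}{m}\zeta'$ and $\vec{E}(\ell)=\ell-\frac{z}{m}\ell'$, then verifies the first by differentiating $z=\sum_i t_i\zeta(z)^i$ in $z$ and combining with \eqref{zetaellt}, and treats $\ell_{\pm}$ analogously via \eqref{zetaellh}--\eqref{zetaellhh}. That is precisely the ``independent verification'' you sketch in your last paragraph. What your framing buys is a clean explanation of \emph{why} the identity holds (it is the infinitesimal form of the scaling $z\mapsto\lambda^{1/m}z$, $u\mapsto\lambda^{\deg u}u$); what the paper's direct computation buys is that one never has to justify that the projections $(\cdot)_\pm$ in powers of $(z-\varphi)$ commute with the grading, a step you rightly flag as the delicate point since $\varphi=\hat h_0$ itself carries weight $1/m$.
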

\begin{proof} It suffices to show
\begin{equation}\label{}
\vec{E}(\zeta(z))=\zeta(z)-\frac{z\zeta'(z)}{m},\quad \vec{E}(\ell(z))=\ell(z)-\frac{z\ell'(z)}{m}.
\end{equation}
Firstly, substituting $\zeta=\zeta(z)$ into \eqref{zzeta}, one has
\[
z=\sum_{i\in\Z}t_{i}\zeta(z)^{i},
\]
whose derivative with respect to $z$ is
    $$
    1=\sum_{i\in\Z}i t_{i}\zeta(z)^{i-1}\zeta'(z).
    $$
Thus, by using \eqref{zetaellt} one obtains
$$
    \zeta(z)-\frac{z\zeta'(z)}{m}=\sum_{i\in\Z}i t_{i}\zeta(z)^{i}\zeta'(z)- \sum_{i\in\Z}t_{i}\zeta(z)^{i}
     \frac{\zeta'(z)}{m} = \sum_{i\in\mathbb{Z}}\left(i-\frac{1}{m}\right)t_{i}\zeta(z)^{i}\zeta'(z)=\vec{E}(\zeta(z)).
$$
With the same method, by using \eqref{zetaellh} and \eqref{zetaellhh} one gets
    $$
    \begin{aligned}
    \left(\ell(z)-\frac{z\ell'(z)}{m}\right)_{+}=&\sum_{j=11}^{m-1}\frac{i+1}{m}h_{i}(\ell'(z)\chi(z)^{-i})_{+}=\vec{E}(\ell(z)_{+}),\\
    \left(\ell(z)-\frac{z\ell'(z)}{m}\right)_{-}=&-\sum_{i\ge 0}^{n}\left(\frac{1}{m}+\frac{i}{n}\right)\hat{h}_{i}
    (\ell'(z)\hat{\chi}(z)^{-i})_{-} =\vec{E}(\ell(z)_{-}).\\
    \end{aligned}
    $$
Therefore the lemma is proved.
\end{proof}

Recall the generating functions \eqref{rpgener01} and \eqref{rpgener02} of covectors on $T_{\vec a}^\ast\mathcal{M}_{m,n}$. With the help of the Euler vector field, we define the intersection form on the contangent space by
\begin{equation}\label{intform}
\left( d\alpha(p),d\beta(q)\right)^{\ast}:=\mathnormal{i}_{\vec{E}}(d\alpha(p)\star d\beta(q)),\quad \alpha, \beta\in \{a,\hat{a}\}.
\end{equation}

\begin{prop}
The intersection form \eqref{intform} can be represented as
\begin{equation}\label{}
 \left( d\alpha(p),d\beta(q) \right)^{\ast}=\frac{\alpha'(p) \beta(q)}{p-q}+\frac{\beta'(q)\alpha(p)}{q-p}+\frac{\alpha'(p)
 \beta'(q)}{m},\quad \alpha,\beta\in\{a,\hat{a}\}.
\end{equation}
 \end{prop}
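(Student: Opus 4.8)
The plan is to compute $i_{\vec E}(d\alpha(p)\star d\beta(q))$ directly from the two explicit formulas already at our disposal: the definition \eqref{prodcot} of the product $\star$ on the cotangent space,
\[
d\alpha(p)\star d\beta(q)=\frac{\beta'(q)}{q-p}\,d\alpha(p)+\frac{\alpha'(p)}{p-q}\,d\beta(q),
\]
and the Laurent-series representation \eqref{vecE2} of the Euler vector field, $\vec E=(a-za'/m,\ \hat a-z\hat a'/m)$. The contraction $i_{\vec E}$ of a cotangent vector with $\vec E$ is just the pairing \eqref{pairing}, so I first need to know the value of $\langle d\alpha(p),\vec E\rangle$ for $\alpha\in\{a,\hat a\}$. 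By \eqref{daxi} in Lemma~\ref{thm-dadah}, $\langle d a(p),\vec\xi\rangle=\xi(p)$ and $\langle d\hat a(q),\vec\xi\rangle=\hat\xi(q)$ for any tangent vector $\vec\xi=(\xi,\hat\xi)$. Applying this with $\vec\xi=\vec E$ gives immediately
\[
\langle d a(p),\vec E\rangle=a(p)-\frac{p\,a'(p)}{m},\qquad
\langle d\hat a(q),\vec E\rangle=\hat a(q)-\frac{q\,\hat a'(q)}{m}.
\]

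Next I would substitute these into the contracted product. Writing $\alpha=\alpha(p)$, $\beta=\beta(q)$ for brevity, linearity of the pairing gives
\[
\bigl(d\alpha(p),d\beta(q)\bigr)^{\ast}
=\frac{\beta'(q)}{q-p}\Bigl(\alpha(p)-\frac{p\,\alpha'(p)}{m}\Bigr)
+\frac{\alpha'(p)}{p-q}\Bigl(\beta(q)-\frac{q\,\beta'(q)}{m}\Bigr).
\]
Now I collect the terms. The pieces not involving $1/m$ are exactly $\dfrac{\beta'(q)\alpha(p)}{q-p}+\dfrac{\alpha'(p)\beta(q)}{p-q}$, matching the first two summands of the claimed formula. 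The remaining terms are
\[
-\frac{1}{m}\Bigl(\frac{p\,\alpha'(p)\beta'(q)}{q-p}+\frac{q\,\alpha'(p)\beta'(q)}{p-q}\Bigr)
=-\frac{\alpha'(p)\beta'(q)}{m}\cdot\frac{p-q}{q-p}
=\frac{\alpha'(p)\beta'(q)}{m},
\]
which is precisely the third summand. This establishes the proposition.

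The one point that requires a word of care, rather than a genuine obstacle, is the meaning of $1/(p-q)$ and $1/(q-p)$ in the mixed case $\alpha=a$, $\beta=\hat a$: as stipulated in the convention following \eqref{blf}, one expands according to $|p-\vp|>|q-\vp|$, and the cancellation $\frac{p}{q-p}+\frac{q}{p-q}=-1$ used above is an identity of formal series valid under that expansion, since it follows from the purely algebraic identity $p+q\cdot\frac{q-p}{p-q}\cdot(-1)\cdots$ — concretely, $\frac{p}{q-p}+\frac{q}{p-q}=\frac{p-q}{q-p}=-1$ holds verbatim once a common expansion is fixed. Similarly, when $\alpha=\beta$ the factor $\alpha'(p)\beta'(q)$ in the numerator of the cross terms does not cancel the pole, but the sum $\frac{p\,\alpha'(p)\alpha'(q)}{q-p}+\frac{q\,\alpha'(p)\alpha'(q)}{p-q}=-\alpha'(p)\alpha'(q)$ is again pole-free; and in the first two summands the pole is cancelled by $\alpha'(p)\alpha(q)-\alpha'(q)\alpha(p)$ having a factor $p-q$, exactly as in the remark after \eqref{blf}. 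So the resulting expression lies in the correct space and the computation is complete.
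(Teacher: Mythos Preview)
Your proof is correct and follows essentially the same route as the paper: apply \eqref{daxi} with $\vec\xi=\vec E$ from \eqref{vecE2} to get $\langle d\alpha(p),\vec E\rangle=\alpha(p)-p\alpha'(p)/m$, substitute into the product \eqref{prodcot}, and simplify. Your additional remarks about the expansion conventions are fine but not strictly needed, as they are already covered by the standing convention after \eqref{blf}.
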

\begin{proof}
    For any $\alpha,\beta\in\{a,\hat{a}\}$, by using \eqref{daxi}, \eqref{prodcot} and \eqref{vecE2} one has
    $$
    \begin{aligned}
    \left\langle d\alpha(p)\star d\beta(q),\vec{E}(z)\right\rangle=&\frac{\beta'(q)}{q-p}\left\langle d\alpha(p),
    \vec{E}(z)\right\rangle+\frac{\alpha'(p)}{p-q}\left\langle d\beta(q),\vec{E}(z)\right\rangle\\
    =&\frac{\beta'(q)}{q-p}\left(\alpha(p)-\frac{p\alpha'(p)}{m}\right) +\frac{\alpha'(p)}{p-q}\left(\beta(q)-\frac{q\beta'(q)}{m}\right)\\
    =&\frac{\alpha'(p) \beta(q)}{p-q}+\frac{\beta'(q)\alpha(p)}{q-p}+\frac{\alpha'(p) \beta'(q)}{m}.
    \end{aligned}
    $$
Thus the proposition is proved.
\end{proof}

Similar as before, based on the nondegenerate pairing \eqref{pairing}, there is a linear map
\begin{equation}\label{gmap}
g: \, T^{\ast}_{\vec{a}}\mathcal{M}_{m,n}\to T_{\vec{a}}\mathcal{M}_{m,n}
\end{equation}
defined by
\begin{equation}\label{}
    \langle\vec{\omega}_{1},g\cdot\vec{\omega}_{2}\rangle=\left( \vec{\omega}_{1},\vec{\omega}_{2}\right)^{\ast},
    \quad\vec{\omega}_{1},\vec{\omega}_{2}\in T^{\ast}_{\vec{a}}\mathcal{M}_{m,n}.
\end{equation}
\begin{lem}\label{thm-gom}
 The map $g$ defined above is a bijection. More precisely, for any $(\omega(z),\hat{\omega}(z))\in T^{\ast}_{\vec{a}}\mathcal{M}_{m,n}$ it holds that  $g\cdot(\omega(z),\hat{\omega}(z))=(\xi(z), \hat{\xi}(z))$ with
\begin{align}\label{gom}
   \xi(z)=& a'(z)(a(z)\omega(z)+\hat{a}(z)\hat{\omega}(z))_{-} -a(z)(a'(z)\omega(z)+\hat{a}'(z)\hat{\omega}(z))_{-}+\rho a'(z), \\
    \hat{\xi}(z)=&-\hat{a}'(z)(a(z)\omega(z)+\hat{a}(z)\hat{\omega}(z))_{+} +\hat{a}(z)(a'(z)\omega(z)
    +\hat{a}'(z)\hat{\omega}(z))_{+}+\rho\hat{a}'(z), \label{gomh}
\end{align}
where
\[
\rho=\frac{1}{m}\res_{z=\varphi}\left( a'(z)\omega(z)+\hat{a}'(z)\hat{\omega}(z) \right)d z.
\]
\end{lem}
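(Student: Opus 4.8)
The plan is to prove the lemma in two stages: first establish the explicit formula \eqref{gom}--\eqref{gomh} for $g$, then deduce bijectivity from it.

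\emph{The formula.} I would argue exactly as in the proof of Lemma~\ref{thm-etaom}. Write $g\cdot(\om(z),\hat\om(z))=(\xi(z),\hat\xi(z))$; by \eqref{daxi} one has $\xi(p)=\la da(p),g\cdot\vec\om\ra=(da(p),\vec\om)^\ast$ and $\hat\xi(q)=\la d\hat a(q),g\cdot\vec\om\ra=(d\hat a(q),\vec\om)^\ast$, and expanding $\vec\om$ via the generating functions \eqref{rpgener01}--\eqref{rpgener02} together with Lemma~\ref{thm-dadah} turns these into contour integrals of the intersection form $(d\al(p),d\beta(q))^\ast=\frac{\al'(p)\beta(q)}{p-q}+\frac{\beta'(q)\al(p)}{q-p}+\frac{\al'(p)\beta'(q)}{m}$ against $\om$, resp. $\hat\om$. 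Treating the cases $\vec\om=(\om,0)$ and $\vec\om=(0,\hat\om)$ separately, the first two terms of the intersection form are evaluated by the residue identities \eqref{pqminus}--\eqref{pqplus} and reproduce the pattern of Lemma~\ref{thm-etaom} but with $a\om+\hat a\hat\om$ and $a'\om+\hat a'\hat\om$ in place of $\om+\hat\om$ and $\om a'+\hat\om\hat a'$; the new third term $\frac{\al'(p)\beta'(q)}{m}$ integrates, after summing the two cases, to $\frac{\al'(p)}{m}\cdot\frac{1}{2\pi\mathrm{i}}\oint_\Gamma(a'(z)\om(z)+\hat a'(z)\hat\om(z))\,dz=\rho\,\al'(p)$, since $\frac{1}{2\pi\mathrm{i}}\oint_\Gamma$ picks up the residue at $z=\vp$. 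Adding the contributions by bilinearity yields \eqref{gom}--\eqref{gomh}.

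\emph{Bijectivity.} Here I would imitate the proof of Lemma~\ref{thm-etabi}, the new input being the third inequality in condition (C2). Suppose $(\xi,\hat\xi)=g\cdot(\om,\hat\om)$ and put $A=a\om+\hat a\hat\om$, $B=a'\om+\hat a'\hat\om$; solving this $2\times2$ linear system gives $\hat a'A-\hat aB=D\om$ and $aB-a'A=D\hat\om$ with $D:=a\hat a'-a'\hat a\neq0$ on $\Gamma$. Forming $P:=a'\hat\xi-\hat a'\xi$, the $\rho$-terms cancel and \eqref{gom}--\eqref{gomh} give $P=-a'\hat a'A+a'\hat aB_{+}+a\hat a'B_{-}$; substituting $B_{+}=B-B_{-}$ yields $P=-a'D\om+DB_{-}$, and substituting $B_{-}=B-B_{+}$ yields $P=\hat a'D\hat\om-DB_{+}$. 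Since $\frac{1}{a'}\in(z-\vp)^{-m+1}\mH^-_\vp$ and $\frac{1}{\hat a'}\in(z-\vp)^{n+1}\mH^+_\vp$ (as recorded in the proof of Lemma~\ref{thm-etabi}), $\frac{B_{-}}{a'}$ has only powers $<-m+1$ and $\frac{B_{+}}{\hat a'}$ has only powers $>n$; as $\om$ has powers $\ge-m+1$ and $\hat\om$ has powers $\le n$, dividing and truncating gives
\[
\om=-\Big(\frac{P}{a'D}\Big)_{\ge-m+1},\qquad \hat\om=\Big(\frac{P}{\hat a'D}\Big)_{\le n}.
\]
This shows the preimage is unique, hence $g$ is injective; surjectivity is obtained by checking directly, as in Corollary~\ref{thm-etainv}, that the pair $(\om,\hat\om)$ defined by these formulas lies in $T^\ast_{\vec a}\mathcal M_{m,n}$ and is mapped by $g$ to $(\xi,\hat\xi)$. (Alternatively, invariance of $\star$ together with \eqref{vecE2} gives $g\cdot\vec\om=\eta\cdot(\vec\om\star\eta^{-1}\vec E)$, so bijectivity of $g$ follows from Lemma~\ref{thm-etabi} once $\star$-multiplication by $\eta^{-1}\vec E$ is shown invertible.)

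The verification of \eqref{gom}--\eqref{gomh} is routine bookkeeping with the residue identities; the step requiring care is the inversion. The nonvanishing of $D=a\hat a'-a'\hat a$ on $\Gamma$ coming from (C2) — not needed for $\eta$ — is essential; one must keep track of the rank-one $\rho$-correction (which conveniently drops out of $P$); and, exactly as in Corollary~\ref{thm-etainv}, one has to confirm that the truncated expressions above indeed define holomorphic functions in the prescribed domains, i.e. genuine covectors in $T^\ast_{\vec a}\mathcal M_{m,n}$.
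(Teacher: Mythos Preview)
Your proposal is correct and follows essentially the same route as the paper. For the formula you invoke exactly the paper's ``same as Lemma~\ref{thm-etaom} with $(\,,\,)^\ast$ in place of $\langle\,,\,\rangle^\ast$'', and for bijectivity both you and the paper form the combination $\hat a'\xi-a'\hat\xi$ (your $-P$) so that the $\rho$-terms cancel, divide by the nonvanishing $a\hat a'-a'\hat a$ from (C2), and then use $\frac{1}{a'}\in(z-\vp)^{-m+1}\mH_\vp^-$, $\frac{1}{\hat a'}\in(z-\vp)^{n+1}\mH_\vp^+$ to truncate; your inversion formula $\om=-\big(\tfrac{P}{a'D}\big)_{\ge-m+1}$, $\hat\om=\big(\tfrac{P}{\hat a'D}\big)_{\le n}$ coincides with the paper's $\om=\big(\tfrac{K}{a'}\big)_{\ge-m+1}$, $\hat\om=-\big(\tfrac{K}{\hat a'}\big)_{\le n}$ where $K=-P/D$, and both verify surjectivity by plugging these back into $g$.
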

\begin{proof}
Firstly, one can verify \eqref{gom} and \eqref{gomh} in the same way as in Lemma~\ref{thm-etaom}, with $\langle\,,\,\rangle^\ast$ replaced by  $(\,,\,)^\ast$.
Now let us show that the map $g$ is bijective. In fact, according to \eqref{gom} and \eqref{gomh} one has
	\begin{align*}
		\hat{a}'\xi-a'\hat{\xi}=&a'\hat{a}'(a \om+\hat{a}\hat{\om})- \hat{a}'a(a' \om+\hat{a}'\hat{\om})_- - a'\hat{a} (a' \om+\hat{a}'\hat{\om})_+ \\ =&(\hat{a}'a-a'\hat{a})\left( (a'\omega)_{+}-(\hat{a}'\hat{\omega})_{-} \right),
	\end{align*}
which leads to (recall the third inequality in \eqref{M2})
	\begin{align}
		(a'\omega)_{+}=\left( \frac{\hat{a}'\xi-a'\hat{\xi}}{\hat{a}'a-a'\hat{a}}\right)_{+},
\quad (\hat{a}'\hat{\omega})_{-}=-\left( \frac{\hat{a}'\xi-a'\hat{\xi}}{\hat{a}'a-a'\hat{a}}\right)_{-}. \label{a'omega}
	\end{align}
To simplify notations,  let us denote
\[
K(z)=\frac{\hat{a}'(z)\xi(z)-a'(z)\hat{\xi}(z)}{\hat{a}'(z)a(z)-a'(z)\hat{a}(z)}.
\]
By using $\frac{1}{a'}\in (z-\vp)^{-m+1}\mathcal{H}_\vp^-$  and $\frac{1}{\hat{a}'}\in (z-\vp)^{n+1}\mathcal{H}_\vp^+$, one obtains
	\begin{align}
		\omega=&(\omega)_{\ge -m+1}=\left(\frac{1}{a'}(a'\omega)_{+}\right)_{\ge -m+1}
=\left(\frac{1}{a'}K_{+}\right)_{\ge -m+1}=\left(\frac{K}{a'} \right)_{\ge -m+1},\label{om}\\
			\hat{\omega}=
&(\hat{\omega})_{\le n} =\left(\frac{1}{\hat{a}'}(\hat{a}'\hat{\omega})_{-}\right)_{\le n}
=-\left(\frac{1}{\hat{a}'}K_{-}\right)_{\le n}=-\left(\frac{K}{\hat{a}'}\right)_{\le n}.\label{omh}
	\end{align}

Conversely, for any $(\xi, \hat{\xi})\in T_{\vec{a}}\mathcal{M}_{m,n}$, one has $(\om, \hat{\om})$
by the formulae \eqref{om} and \eqref{omh}. Now let us compute $(\psi, \hat{\psi}):=g\cdot(\om, \hat{\om})$. To this end, one has
\begin{align*}
\rho=&\frac{1}{m}\res_{z=\varphi}(a'\omega+\hat{a}'\hat{\omega})dz \\
=&\frac{1}{m}\res_{z=\varphi}\left( a' \left(\frac{K}{a'} \right)_{\ge -m+1} -\hat{a}' \left(\frac{K}{\hat{a}'}\right)_{\le n} \right)d z \\
=&\frac{1}{m}\res_{z=\varphi}\left( a'  \frac{K}{a'}  -\hat{a}' \frac{K}{\hat{a}'}  \right)d z - \res_{z=\vp} \frac{ K}{a'} (z-\vp)^{m-1}d z \\
=& - \res_{z=\vp} \frac{ K}{a'} (z-\vp)^{m-1}d z,
\\
(a\omega)_{+}=&\left(a \left(\frac{K}{a'}\right)_{\ge -m+1} \right)_+  =\left(a \frac{ K}{a'}  \right)_+ - \res_{z=\vp} \frac{ K}{a'} (z-\vp)^{m-1}d z
\\
=&\left(\frac{\xi}{a'}+\frac{\hat{a}\xi-a\hat{\xi}}{\hat{a}'a-a'\hat{a}}\right)_{+} +\rho
=\left(\frac{\hat{a}\xi-a\hat{\xi}}{\hat{a}'a-a'\hat{a}}\right)_{+} +\rho,
\\
(\hat{a}\hat{\omega})_{-}=&-\left(\hat{a}\left(\frac{K}{\hat{a}'}\right)_{\le n} \right)_- =-\left(\hat{a} \frac{K}{\hat{a}'}  \right)_- \\
=&-\left(\frac{\hat{a}\xi-a\hat{\xi}}{\hat{a}'a-a'\hat{a}}+\frac{\hat\xi}{\hat{a}'}\right)_{+}
=-\left(\frac{\hat{a}\xi-a\hat{\xi}}{\hat{a}'a-a'\hat{a}}\right)_{-}.
	\end{align*}
Hence, by using \eqref{gom}, \eqref{gomh} and \eqref{a'omega}, we obtain
	\begin{align*}
\psi=&a\left( (a'\omega)_{+}-(\hat{a}'\hat{\omega})_{-}\right) -a'\left( (a\omega)_{+}-(\hat{a}\hat{\omega})_{-}-\rho\right)
=a \frac{\hat{a}'\xi-a'\hat{\xi}}{\hat{a}'a-a'\hat{a}}- a' \frac{\hat{a}\xi-a\hat{\xi}}{\hat{a}'a-a'\hat{a}} =\xi,
\\
\hat{\psi}=& \hat{a}\left((a'\omega)_{+}- (\hat{a}'\hat{\omega})_{-}\right) -\hat{a}'\left((a\omega)_{+}
-(\hat{a}\hat{\omega})_{-}-\rho\right) = \hat{a} \frac{\hat{a}'\xi-a'\hat{\xi}}{\hat{a}'a-a'\hat{a}}- \hat{a}'
\frac{\hat{a}\xi-a\hat{\xi}}{\hat{a}'a-a'\hat{a}} =\hat{\xi}.
	\end{align*}
Thus the lemma is proved.
\end{proof}

With the help of the bijection $g$, let us introduce another metric on $\mathcal{M}_{m,n}$ as
\begin{equation}\label{metric2}
\langle \p_1,\p_2\rangle_{g}:=\la
  g^{-1}(\p_1),\p_2\ra=\left( g^{-1}(\p_1),g^{-1}(\p_2)\right)^\ast, \quad \p_1, \p_2\in T_{\vec{a}}\mathcal{M}_{m,n}.
\end{equation}

\begin{thm} \label{thm-PBmetric}
The compatible Poisson brackets \eqref{pb} are the ones induced by the metrics $\la\,,\,\ra_\eta$ and  $\la\,,\,\ra_g$ given in \eqref{metric} and in \eqref{metric2} respectively.
\end{thm}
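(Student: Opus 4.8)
The plan is to translate the classical finite-dimensional picture (flat pencil of metrics $\leftrightarrow$ bi-Hamiltonian structure of hydrodynamic type, cf.\ Dubrovin--Novikov \cite{DN} and \cite{Du96}) into the present infinite-dimensional setting, by computing the two Poisson tensors determined by the metrics $\langle\,,\,\rangle_\eta$ and $\langle\,,\,\rangle_g$ directly in terms of Laurent series and matching them with $\mathcal{P}_1$ and $\mathcal{P}_2$ of Proposition~\ref{thm-eKPham}. Concretely, on the loop space $\mathcal{L}\mathcal{M}_{m,n}$ a flat metric $\langle\,,\,\rangle$ on $\mathcal{M}_{m,n}$ induces a Poisson bracket of the form
\begin{equation*}
\{F,H\}=\int \langle dF, \big(\,\mathcal{Q}\,\p_x + (\text{connection terms})\,\big)\cdot dH\rangle\, dx,
\end{equation*}
where $\mathcal{Q}$ is (essentially) the inverse metric, i.e.\ the map $\eta$ (resp.\ $g$) of \eqref{eta} (resp.\ \eqref{gmap}), and the lower-order terms are built from the Christoffel symbols of the associated Levi-Civita connection. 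So the first step is to write down this general formula in the infinite-dimensional language, using the pairing \eqref{pairing} and the explicit descriptions of $\eta$ (Lemma~\ref{thm-etaom}) and $g$ (Lemma~\ref{thm-gom}).

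Next I would carry out the two identifications separately. For the first Poisson bracket: take a cotangent vector $\vec\omega=(\omega(z),\hat\omega(z))$, form $\p_x$ of the pair $\vec a=(a,\hat a)$ along $x$, and show that the hydrodynamic-type operator built from $\eta$ reproduces $\mathcal{P}_1\cdot\vec\omega$ in \eqref{P1}. The key observation is that the Lie bracket $[f,g]=f_z g_x - g_z f_x$ in \eqref{lieb}, when paired against the splitting into $(\,)_+$ and $(\,)_-$ parts, has exactly the same combinatorial structure as the operations $a'(z)[\,\cdot\,]_- - [\,a'(z)\,\cdot\,]_-$ etc.\ appearing in $\eta$; indeed $[\omega,a]=\omega_z a_x - a_z \omega_x$, and upon integrating against test vectors the $x$-derivative plays the role that makes the hydrodynamic operator first-order. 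The cross terms $-[\omega_-+\hat\omega_-,a]$ and $+[\omega_++\hat\omega_+,\hat a]$ match precisely the ``mixed'' contributions in \eqref{etaom}. For the second Poisson bracket one does the same with $g$ in place of $\eta$, and here the extra term $-\sigma a'$ (resp.\ $-\sigma\hat a'$) with $\sigma=\frac1m\res_{z=\vp}([\omega,a]+[\hat\omega,\hat a])\,dz$ must be produced; it should emerge exactly from the $\rho\,a'(z)$, $\rho\,\hat a'(z)$ correction terms in \eqref{gom}--\eqref{gomh}, since $\rho=\frac1m\res_{z=\vp}(a'\omega+\hat a'\hat\omega)\,dz$ becomes $\sigma$ after the $\p_x$-differentiation (using $\res_{z=\vp}\p_x(\cdots)\,dz$-type manipulations and integration by parts on the circle $S^1$).

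Rather than reproving the flat-pencil machinery from scratch, I would invoke the general principle: once $\langle\,,\,\rangle_\eta$ and $\langle\,,\,\rangle_g$ are both flat (flatness of $\eta$ is Proposition~\ref{thm-flatmetric}; flatness of $g$ follows from the intersection-form structure together with the Frobenius axioms, as in the finite-dimensional theory) and form a flat pencil, the two hydrodynamic-type Poisson brackets are automatically compatible, so I only need to check that the leading symbols and the connection terms reproduce \eqref{P1} and \eqref{P2}. A clean way to organize the verification is to test both sides against the generating covectors $da(p)$, $d\hat a(q)$ of \eqref{rpgener01}--\eqref{rpgener02}, for which $\langle da(p),\vec\xi\rangle=\xi(p)$ by \eqref{daxi}, so that the identity becomes an equality of explicit Laurent series (in $p$, $q$) with coefficients involving $a$, $\hat a$ and their $z$- and $x$-derivatives.

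The main obstacle I anticipate is the bookkeeping of the $\p_x$-terms coming from the Christoffel symbols of the two metrics: in the finite-dimensional Dubrovin--Novikov formalism these are $\Gamma^{\al\beta}_\gamma\,\p_x u^\gamma$, and here the analogue is a sum over the (infinitely many) flat coordinates $\mathbf{t}\cup\mathbf{h}\cup\hat{\mathbf{h}}$, which one must resum into closed Laurent-series expressions and then recognize inside the Lie brackets $[\,\cdot\,,a]$, $[\,\cdot\,,\hat a]$. A convenient shortcut is to exploit that in flat coordinates of $\langle\,,\,\rangle_\eta$ the first bracket has constant coefficients, so $\mathcal{P}_1$ is simply $\eta\circ\p_x$ (no connection terms), which should match \eqref{P1} after noting $\p_x a = a'\,\vp_x + \sum(\p_x a_i)(z-\vp)^i$ and similarly for $\hat a$; the genuinely delicate computation is then isolated in the second bracket, where one must produce both the Christoffel contribution and the nonlocal-looking residue term $\sigma$, and verify they assemble into exactly \eqref{P2}. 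I would handle this by writing $g = $ (intersection form) and using the known relation $\mathcal{P}_2 = \mathcal{P}_2^{\text{flat pencil}}$ together with the explicit $g$ from Lemma~\ref{thm-gom}, checking the match on the generators $da(p),d\hat a(q)$ as above.
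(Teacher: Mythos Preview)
Your overall strategy is sound and would succeed, but you are planning to do considerably more work than the paper actually does. The paper's proof hinges on one observation from Dubrovin--Novikov theory that you mention only for $\mathcal{P}_1$ but do not exploit for $\mathcal{P}_2$: a Poisson operator of hydrodynamic type is \emph{uniquely} determined by its leading symbol (the contravariant metric), since the zeroth-order part must consist of the Christoffel symbols of the associated Levi-Civita connection. Hence, once it is known that both $\mathcal{P}_\nu$ in \eqref{P1}--\eqref{P2} and the operators coming from $\langle\,,\,\rangle_\eta$, $\langle\,,\,\rangle_g$ are genuine Poisson brackets of hydrodynamic type, it suffices to extract the part of $\mathcal{P}_\nu\cdot\vec\omega$ that is linear in $\vec\omega_x=(\omega_x,\hat\omega_x)$ and match it with $\eta\cdot\vec\omega_x$ and $g\cdot\vec\omega_x$ respectively. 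There is no need to resum Christoffel symbols or to track the $a_x$, $\hat a_x$ contributions separately.

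Concretely, the paper writes $\mathcal{P}_\nu\cdot\vec\omega = P_\nu(\vec\omega_x)+Q_\nu(\vec\omega)$, discards $Q_\nu$ by the above principle, and then checks $P_1(\vec\omega_x)=\eta\cdot\vec\omega_x$ and $P_2(\vec\omega_x)=g\cdot\vec\omega_x$ by a short direct computation using Lemmas~\ref{thm-etaom} and~\ref{thm-gom}. In particular, the residue term $\sigma$ in \eqref{sg}, once restricted to its $\vec\omega_x$-linear part, becomes exactly $-\rho$ from Lemma~\ref{thm-gom} with $\omega\mapsto\omega_x$; no integration by parts on $S^1$ or testing against $da(p)$, $d\hat a(q)$ is required. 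Your proposed route via generating covectors and explicit Christoffel bookkeeping would work, but the shortcut makes the entire proof a two-line verification.
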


\begin{proof}
Denote $\vec{\omega}_x=(\omega_{x}, \hat{\omega}_{x})=\left(\frac{\p \omega}{\p x}, \frac{\p \hat{\omega}}{\p x}\right)$ for $\vec{\omega}\in T_{\vec{a}}^\ast\mathcal{L}\mathcal{M}_{m,n}$. Observe that the formulae \eqref{P1} and \eqref{P2} can be represented as
\[
\mathcal{P}_\nu\cdot \vec{\omega}=P_\nu(\vec{\omega}_x)+Q_\nu(\vec{\omega}), \quad \nu=1, 2,
\]
where $P_\nu$ and $Q_\nu$ are functions depending linearly on $\vec{\om}_x$ and on $\vec{\om}$ respectively.
According to the theory of hamiltonian structures of hydradynamic type \cite{DN}, we only need to compare the functions
$P_{\nu}\, (\nu=1,2)$ with $\eta\cdot\vec{\omega}_x$ and $g\cdot\vec{\omega}_x$ (recall \eqref{eta} and \eqref{gmap}).
More explicitly, by using Lemmas~\ref{thm-etaom} and \ref{thm-gom} it is straightforward to check
\begin{align*}
P_{1}(\vec{\omega}_x)=&\big(-(a'\omega_{x})_{-}-(\hat{a}'\hat{\omega}_{x})_{-} +a'(\omega_{x})_- +a'(\hat{\omega}_{x})_-, \\
&\qquad
(a'\omega_{x})_{+}+(\hat{a}'\hat{\omega}_{x})_{+}-\hat{a}(\omega_{x})_+ -\hat{a}(\hat{\omega}_{x})_+\big) \\
=&\eta\cdot\vec{\omega}_{x},
\\
P_{2}(\vec{\omega}_x)=&\bigg( -a(a'\omega_{x}+\hat{a}'\hat{\omega}_{x})_{-}+ a'(a\omega_{x}+\hat{a}\hat{\omega}_{x})_{-} -\frac{a'}{m}\res_{z=\varphi}(-a'\omega_{x}- \hat{a}'\hat{\omega}_{x})dz,\\
& \qquad \hat{a}(a'\omega_{x}+\hat{a}'\hat{\omega}_{x})_{+}-\hat{a}'(a\omega_{x}+\hat{a}\hat{\omega}_{x})_{+} -\frac{\hat{a}'}{m}\res_{z=\varphi}(-a'\omega_{x}- \hat{a}'\hat{\omega}_{x})dz\bigg)\\
=&g\cdot\vec{\omega}_{x}.
\end{align*}
Therefore the theorem is proved.
\end{proof}

So we complete the proof of Main Theorem 2.


\section{Concluding remarks}

In this paper we have constructed a class of infinite-dimensional Frobenius manifolds underlying the universal Whitham hierarchy \eqref{disphir1}--\eqref{disphir2},
whose dispersionful analogue is an extension of the KP hierarchy. More exactly, on such a Frobenius manifold $\mathcal{M}_{m,n}$
with positive integers $m$ and $n$, its flat coordinates for the metric, the product with unity vector field, the potential function $\mathcal{F}$,
the Euler vector field and the intersection form are obtained. Moreover, the flat pencil of metrics on the Frobenius manifold is
 proved to induce the bihamiltonian structure \eqref{pb} for the universal Whitham hierarchy \eqref{disphir1}--\eqref{disphir2}. In other words, the method
 initiated in \cite{CDM} and developed in \cite{WX, WZuo} is applied successfully to a wider range of concrete examples.

Similar to the finite-dimensional case, if we take the following Hamiltonian functionals given by the derivatives of the potential function as
\[
\mathcal{H}_{u,0}=\int \left.\frac{\p \mathcal{F}}{\p u}\right|_{v\mapsto v(x)} d x,\quad u, v\in\mathbf{t}\cup\mathbf{h}\cup\hat{\mathbf{h}},
\]
then the Hamiltonian equations given by the first Poisson bracket in \eqref{pb} are up to constant factors with those flows
of the nondecendant level in the principal hierarchy associated to $\mathcal{M}_{m,n}$.
We expect that the bihamiltonian recursion relation in \eqref{hamrecur} might help us to obtain the complete principal hierarchy.
Furthermore, a tau function of the principal hierarchy is expected to be introduced, which may help us to understand the
tau function with that solving the string equation for the Whitham hierarchy  studied in \cite{AK-CMP, Kr}. We will consider it in other occasions.

\medskip
\noindent{\bf Acknowledgements}.
The second and the third authors thank Professors Boris Dubrovin and  Youjin Zhang for their advising, and
they also thank Professor Ian Strachan for helpful discussions on this topic.
The author C.-Z.Wu is partially supported by NSFC (No.11771461, No.11831017), and the author  D.Zuo is
 partially supported by NSFC (No.11671371, No.11871446) and Wu Wen-Tsun Key Laboratory of
 Mathematics, USTC, CAS.


\end{document}